\newcommand{\set}[1]{\left\{#1\right\}}
\newcommand{\E}{\mathbb{E}}
\newcommand{\I}{\mathbb{I}}
\newcommand{\R}{\mathbb{R}}
\newcommand{\prob}{\text{Pr}}
\newcommand{\opt}{\text{OPT}}
\newcommand{\OPT}{\mathrm{OPT}}
\newcommand{\utilAgent}[1]{u_{#1}^A}
\newcommand{\utilPrincipal}[1]{u_{#1}^P}
\newcommand{\optionCount}{n}
\newcommand{\agentPreferredOption}{i^*}
\newcommand{\bucket}{B}
\newcommand{\bucketsCount}{M}
\newcommand{\principalExpectation}[1]{\E[\utilPrincipal{#1}]}
\newcommand{\expectedUtilityBelow}{X}
\newcommand{\utilityThreshold}{U}
\newcommand{\bucketProbabilities}{q}
\newcommand{\bucketConditionals}{q^{\leq}}
\newcommand{\bucketConditionalsExpr}[2]{\Pr[\utilAgent{#1} \in \bucket_{#2}~|~\utilAgent{#1} \in \bucket_{\leq #2}]}
\newcommand{\bucketProbabilitiesLB}{c}
\newcommand{\bucketExpectationExpr}[2]{\E[\utilPrincipal{#1}~|~\utilAgent{#1} \in \bucket_{#2}]}
\newcommand{\q}{q}
\newcommand{\config}{\mathbf C}
\newcommand{\configi}{C_i}
\newcommand{\objective}{\overline{\text{Util}}}
\newcommand{\coef}{\bucketProbabilitiesLB}
\newcommand{\optua}{u^{A}_{\text{OPT}}}
\newcommand{\optup}{u^{P}_{\text{OPT}}}
\newcommand{\bucketprobalg}{\bucketConditionalsExpr{\text{ALG}}{j}}
\newcommand{\bucketprobopt}{\bucketConditionalsExpr{\text{OPT}}{j}}
\newcommand{\bucketprob}{\bucketConditionalsExpr{}{j}}
\newcommand{\bucketprobi}{\bucketConditionalsExpr{i}{j}}
\newcommand{\qleqj}{\bucketProbabilities_{j}^{\leq}}
\newcommand{\qjest}{\bucketProbabilities_{j}^{\text{est}}}
\newcommand{\qjbarest}{\overline{\bucketProbabilities}_{j}^{\text{est}}}
\newcommand{\qleqij}{\bucketProbabilities_{ij}^{\leq}}
\newcommand{\qleqbarj}{\overline{\bucketProbabilities}_{j}^{\leq}}
\newcommand{\qleqbarij}{\overline{\bucketProbabilities}_{ij}^{\leq}}
\newcommand{\uleqj}{u_j^\leq}
\newcommand{\uestj}{u_j^{\text{est}}}
\newcommand{\val}{v}
\newcommand{\bias}{b}
\newcommand{\rev}{\mathcal{R}}
\newcommand{\price}{p}
\newcommand{\prices}{\mathbf \price}
\newcommand{\priceSet}{\mathcal{P}}
\newcommand{\umin}{u_{\min}}
\newcommand{\umax}{u_{\max}}
\newcommand{\optrev}{p_{i^*({\prices})}}
\newtheorem{theorem}{Theorem}
\newtheorem{lemma}[theorem]{Lemma}
\newtheorem{corollary}[theorem]{Corollary}
\newtheorem{definition}[theorem]{Definition}
\newtheorem*{definition*}{Definition}
\title{Polynomial-Time Approximation Schemes via Utility Alignment: Unit-Demand Pricing and More\footnote{This work was supported by NSF awards CCF 2218813 and CCF 2218814.}}
\date{}
\author[1]{Robin Bowers\thanks{\texttt{robin.bowers@colorado.edu}}}
\author[2]{Marius Garbea\thanks{\texttt{mgarbea@drexel.edu}}}
\author[2]{Emmanouil Pountourakis\thanks{\texttt{manolis@drexel.edu}}}
\author[3]{Samuel Taggart\thanks{\texttt{staggart@oberlin.edu}}}
\affil[1]{University of Colorado, Boulder}
\affil[2]{Drexel University}
\affil[3]{Oberlin College}
\begin{document}

\maketitle

\begin{abstract}
    This paper derives polynomial-time approximation schemes for several NP-hard stochastic optimization problems from the algorithmic mechanism design and operations research literatures.
    The problems we consider involve a principal or seller optimizing with respect to a subsequent choice by an agent or buyer.
    These include posted pricing for a unit-demand buyer with independent values (\citet{CHK07,CD11}), assortment optimization with independent utilities (\citet{TR04}), and delegated choice (\citet{KPT24}).
    Our results advance the state of the art for each of these problems.
    For unit-demand pricing with discrete distributions, our multiplicative PTAS improves on the additive PTAS of \citet{CD11}, and we additionally give a PTAS for the unbounded regular case, improving on the latter paper's QPTAS.
    For assortment optimization, no constant approximation was previously known.
    For delegated choice, we improve on both the $3$-approximation for the case with no outside option and the super-constant-approximation with an outside option.
    
    A key technical insight driving our results is an economically meaningful property we term {\em utility alignment}.
    Informally, a problem is utility aligned if, at optimality, the principal derives most of their utility from realizations where the agent's utility is also high.
    Utility alignment allows the algorithm designer to focus on maximizing performance on realizations with high agent utility, which is often an algorithmically simpler task.
    We prove utility alignment results for all the problems mentioned above, including strong results for unit-demand pricing and delegation, as well as a weaker but very broad guarantee that holds for many other problems under very mild conditions.
\end{abstract}

\section{Introduction}
\label{sec:introduction}
We consider algorithms for a family of agent-principal problems from economics and operations research.
The problems we consider involve a principal (e.g.\ a seller) presenting a choice to an agent (e.g.\ a buyer).
The principal offers a set of options to the agent, who selects their favorite.
The two main hallmarks of the problems we consider are: (1) the agent's preferences differ from those of the principal, and (2) the agent's preferences are unknown (and modeled probabilistically).
The principal's goal is to maximize their expected utility with respect to the agent's choice, in expectation over their uncertain preferences.

Several well-studied problems satisfy the hallmarks above.
For example, in pricing problems with a single buyer, the seller chooses a price $p_i$ for each item $i$.
The buyer's choice is dictated by their values for the different items, which are unknown to the seller.
The seller seeks to maximize revenue, taken in expectation over the buyer's utility-maximizing choice.
Another example is the model of delegated choice of \citet{KPT24}, inspired by similar classical problems in economics.
In the delegated choice problem, a principal presents a menu of options to an agent, who selects their favorite.
This captures a broad range of deterministic mechanisms for single-agent problems.
Finally, in assortment optimization, a fundamental problem from operations research, a seller chooses an inventory of fixed-price items to stock, again for a single buyer.

We consider versions of these problems where the agent chooses a single option, and where their utilities for the different options are independently distributed.
For pricing, this corresponds to the canonical unit-demand pricing problem studied in \citet{CHK07} and \citet{CD11}.
For assortment optimization, this single-choice random-utility model generalizes the well-studied multinomial logit model of e.g.\ \citet{TR04}.

Prior to this work, none of the problems discussed above had polynomial-time approximation schemes.
The strongest prior approximation results are for unit-demand pricing.
For unconditional multiplicative approximation, \citet{CHK07} give a constant-approximation, which was subsequently improved by \citet{JL24}.
Meanwhile, \citet{CD11} give an additive PTAS for discrete distributions and a multiplicative QPTAS for unbounded distributions satisfying the standard Myerson regularity condition, which improves to a multiplicative PTAS only when the instance satisfies the much stronger condition of monotone hazard rate distributions.
We improve these results to a multiplicative PTAS which holds for both discrete and unbounded regular distributions.
For delegation and assortment optimization, only a handful of even constant approximations were known, and for some variants of the problems, not even that.

Two main hurdles complicate the design of a PTAS for the problems we consider.
First, the typical approach to designing a PTAS is to coarsen the space of values involved, usually by rounding to a grid.
For problems where the algorithm's performance is determined by the choice of an agent, this approach is especially problematic.
While the designer would like to coarsen the space of agent utilities, even small changes to the agent utility can cause them to switch their choice (e.g.\ from a high-priced, high-value item to a low-priced but low-value item), leading to large changes in principal utility.
For pricing, \citet{CD11} resolved this issue with careful discretization schemes.
The second hurdle is that even with careful discretization, existing approximation schemes all required enumerating over an exponentially-sized set of objects, such as utility distributions associated with solutions.
Indeed, the runtime exponent of the unit-demand pricing algorithm in \citet{CD11} depends on the number of discretized prices, which is in turn logarithmic in the number of items, leading to a QPTAS.

To solve these issues, our main technical innovation is to show that for all of the optimization problems we study, the principal can ignore outcomes where the agent's utility is low.
This takes the form of a property we call {\em utility alignment}.
In more detail, utility alignment states that in the optimal solution, most of the principal's utility comes from realizations that have a strong quantile in the agent's utility distribution.
Since the principal and agent tend to get high utility from the same set of realizations, their utilities are aligned.
Utility alignment implies that near-ties are not important for the principal utility: if the agent's utility for different options is independent, then the probability that multiple of these utilities fall in the same high-quantile subinterval is very small: e.g.\ the probability that action $i$ and $j$ both fall into the top $1/n$ quantile of the agent's utility distribution is at most $1/n^2$.
Hence, near-ties mostly occur at low quantiles in the agent's utility distribution, and by utility alignment, these quantiles are unimportant to the principal.
Algorithmically, this means that many of the probability computations in the algorithm can be coarsend to first-order approximations that ignore ties, and can be optimized without enumerating an exponentially-sized set.
The rest of the introduction makes these ideas and problems above more precise.

\subsection{Utility Configuration}\label{sec:intro-utility-configuration}
The problems mentioned above are all special cases of a general template problem, which we call {\em utility configuration}.
In utility configuration, a principal (e.g.\ the seller) considers a set of $n$ possible actions that the agent will subsequently choose from.
Each option can be ``configured'' one of several ways, corresponding to setting the price of an item, or including/excluding an action from a delegation set.
This configuration then determines the distribution over agent and principal utilities for this item.
Given the configurations, the agent realizes their utilities (and hence the principal utilities as well) and picks their favorite action.
We define the problem below, before defining pricing, delegation, and assortment optimization, and showing that these are special cases.

\begin{definition}[Utility Configuration]\label{def:uc}
    An instance of the {\em utility configuration problem} is given by $n$ {\em actions}.
    Each action $i$ has at most $m$ {\em configurations}.
    Configuration $j$ for action $i$ induces a joint distribution $F_{ij}$ over pairs $(u_i^A,u_i^P)$, of agent and principal utilities.
    A solution to the utility configuration problem (a {\em configuration}) is a choice $\config=(C_1,\ldots,C_n)$ of configuration $C_i\in\{1,\ldots,m\}$ for each action $i$.
    Given a configuration $\config$, the agent draws $(u_i^A,u_i^P)$ independently from $F_{iC_i}$ for each action $i$.
    Denote their favorite action given the realized utilities by the random variable $i^*(\config)= \arg\max_{i} u_i^A$.
    The principal's objective is to maximize their expected utility, given by $\E[u_{i^*(\config)}^P]$.
\end{definition}

\paragraph{Remark on notation.} When context makes the choice of configuration clear, we often omit the subscript $i^*(\config)$ to denote the agent's preferred choice, and instead denote the utilities from the agent's favorite choice by $u^A=u_{i^*(\config)}^A$ and $u^P=u_{i^*(\config)}^P$.\\

For computational purposes, we consider {\em discrete} instances of utility configuration, where the distributions $F_{ij}$ are discrete distributions with finite supports for every $i$ and $j$, given as input in the form of a list of mass points with respective probabilities.
In principle, though, it is also natural to consider variants with continuous utility distributions, as we do for unit-demand pricing at certain points in the paper.
To enable approximation, we restrict to nonnegative principal utilities $u_i^P\geq 0$, but do not need to place any such restriction on agent utilities, as only the order of agent utilities matters to the principal. 

We now formalize the three problems discussed in the introduction --- pricing, delegation, and assortment optimization --- and show how they can be cast as utility configuration problems.

\paragraph{Delegated Choice.} In the delegated choice problem of \citet{KPT24}, a principal must select a subset $S\subseteq\{1,\ldots,n\}$ of actions to present to an agent.
The agent realizes their utilities for the actions, and chooses their favorite.
\citet{KPT24} consider a simple utility model, where the principal's utility is given by $v_i$, which is unknown to the principal and distributed according to some $F_i$, and the agent's utility is $v_i+b_i$, for a bias term which is constant for each $i$ and known to the principal.
This can be cast as utility configuration in the following way: each action $i$ can be configured as {\em in}, in which case $(u_i^A,u_i^P)=(v_i+b_i,v_i)$ with $v_i\sim F_i$, or {\em out}, in which case $(u_i^A,u_i^P)=(-\infty,0)$.
One can define further variants of this problem, e.g.\ where the biases $b_i$ are also random, or where the agent has an outside option, an action $0$ that cannot be excluded by the principal.
We consider these additional variants in Section~\ref{sec:applications}, and reduce them to utility configuration as well.

\paragraph{Unit-demand Pricing.} In the unit-demand pricing problem, a seller prices $n$ items for consideration by a unit-demand buyer.
The buyer's value $v_i$ for each item $i$ is drawn independently from a known distribution $F_i$.
Given a pricing $\prices = (p_1,\ldots,p_n)$, the buyer picks the item that maximizes their utility $v_i-p_i$ (yielding revenue $p_i$ for the seller).
If no item gives positive utility, they choose not to purchase, yielding revenue $0$ for the seller.

This can be framed as a utility configuration problem in the following way: each item $i$ corresponds to an action, and each possible price $p_i$ for item $i$ corresponds to a configuration for that item.
Notice that this implies an infinite number of configurations per item.
While this is well-defined according to Definition~\ref{def:uc}, it clearly poses computational issues, which we show how to circumvent in Section \ref{sec:pricing}.
In the configuration for action $i$ corresponding to price $p_i$, the utilities are given by $(u_i^A,u_i^P)=(v_i-p_i,p_i)$ if $v_i-p_i\geq 0$, or $(u_i^A,u_i^P)=(0,0)$ if $v_i-p_i<0$, with $v_i\sim F_i$.

\paragraph{Assortment Optimization.} In assortment optimization, a seller chooses a subset $S\subseteq\{1,\ldots,n\}$ of fixed-price items to stock for a unit-demand buyer, who wishes to purchase at most one item.
Item $i$ has price $p_i$, and yields revenue $p_i$ to the seller if $i$ is purchased.
While there is an extensive literature solving this problem for a dizzying range of choice models for the buyer, we consider a simple {\em utility-based model} with independent utilities, which is natural to state and generalizes the best-known multinomial logit model (\citet{TR04}) as well as variants (\citet{AFS23}).
Using notation similar to unit-demand pricing, we have the following: for each item $i$, the buyer draws a value $v_i$ independently from a distribution $F_i$.
They select the item maximizing $v_i-p_i$ from the set $S$ of available items.
One can consider a variant of the problem where the buyer chooses not to purchase if $v_i-p_i<0$ for all $i\in S$.
In this case, for $i\in S$, we have $(u_i^A,u_i^P)=(v_i-p_i,p_i)$ if $v_i-p_i\geq 0$, or $(u_i^A,u_i^P)=(0,0)$ if $v_i-p_i<0$, and for $i\notin S$, $(u_i^A,u_i^P)=(-\infty,0)$.
It is more common in the literature, however, to consider the no-purchase option as yielding random utility $u_0$ for the buyer.
We discuss how to incorporate this variant in Section~\ref{sec:applications-assortment}.
\subsection{Utility Alignment and Main Results}\label{sec:intro-main-results}
We give polynomial-time approximation schemes for all problems mentioned in the previous section.
In fact, we give a single algorithm, an approximation scheme for utility configuration, which can be applied via the reductions described above.
The algorithm's runtime is polynomial in the size of the utility configuration instance, and its performance is parametrized by a fineness parameter $M$ (which also governs the polynomials in the runtime) and the {\em utility alignment} of the instance, which is a parameter we define formally below.
Intuitively, a utility configuration instance is utility-aligned if, in the optimal solution, the agent and principal derive most of their utility from the same realizations.
More precisely, when we consider the bottom $q$-quantile of the agent's utility distribution in the optimal configuration, it has to be that this quantile range does not constitute too much of the principal's overall expected utility.
Equivalently, the principal's expected utility conditioned on this range of agent utilities should not be too much higher than their expected utility overall.
Formally, we define utility alignment as follows:
\begin{definition}[Utility Alignment]\label{def:alignment}
    Given a non-increasing function $f$, an instance of utility configuration is {\em$f(q)$-utility aligned} if for any utility threshold $U$ such that in the optimal configuration $\config^*$, $\prob[u^A\leq U]=q$, the principal utilities under $\config^*$ satisfy
    \begin{equation*}
        \E[u^P~|~u^A\leq U]\leq f(q) \E[u^P],
    \end{equation*}
    where $u^A$ and $u^P$ are the agent and principal utilities from the agent's preferred action under $\config^*$.
\end{definition}

For intuition, first consider delegation and unit-demand pricing.
In Sections \ref{sec:applications-delegation} and \ref{sec:applications-pricing}, we prove the following:
\begin{lemma}
    The delegation problem (with deterministic biases and no outside option) and unit-demand pricing are both $2$-utility aligned.
\end{lemma}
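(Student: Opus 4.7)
I'll prove both claims with a common two-case argument. Set $q := \Pr[u^A \leq U]$. When $q \geq 1/2$, the bound is immediate, since
\begin{equation*}
    \E[u^P\,\I[u^A \leq U]] \;\leq\; \E[u^P] \;\leq\; 2q\,\E[u^P],
\end{equation*}
which gives $\E[u^P \mid u^A \leq U] \leq 2\,\E[u^P]$. So the substantive work is for $q < 1/2$, and in each problem I use the optimality of the given configuration against a carefully chosen perturbation.

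\textbf{Pricing.} For unit-demand pricing the natural perturbation is the uniform price shift: consider the pricing $\prices^* + U$ that raises every price by the same amount $U$. A uniform additive shift preserves the buyer's argmax of $V_i - p_i$, so under $\prices^* + U$ the buyer selects the same item as under $\prices^*$ whenever their original surplus $u^A$ was at least $U$, and declines to purchase otherwise. Therefore $\mathrm{Rev}(\prices^* + U) = \E[(p_{i^*}^* + U)\,\I[u^A \geq U]]$, and optimality of $\prices^*$ (i.e.\ $\mathrm{Rev}(\prices^* + U) \leq \E[u^P]$) yields
\begin{equation*}
    U\,\Pr[u^A \geq U] \;\leq\; \E\bigl[u^P\,\I[u^A \leq U]\bigr].
\end{equation*}
I would then combine this shift inequality with the structural identity $u^P = V_{i^*} - u^A$ on the purchase event (and the IR constraint $V_{i^*} \geq p_{i^*}^*$) to extract the target bound $\E[u^P\,\I[u^A \leq U]] \leq 2q\,\E[u^P]$ in the $q<1/2$ regime. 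A useful sanity check is the single-item case, where a direct calculation with $a := F(p^*)$, $b := F(p^* + U)$ shows the inequality $b-a \leq 2b(1-a)$ holds regardless of optimality, so multi-item instances are where the optimality condition must be doing real work.

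\textbf{Delegation and main obstacle.} For delegation with deterministic biases and no outside option, the same strategy works with a set-based perturbation of $S^*$---for example, comparing against the delegation set augmented with a phantom action of agent utility $U$ and principal utility $0$, which effectively imposes a surplus threshold at $U$. Using the identity $u^P = u^A - b_{i^*}$ and optimality of $S^*$ yields an analogous one-sided inequality that can be combined with the $q<1/2$ case split. The main obstacle I expect is the conversion step itself: the shift arguments above naturally yield a \emph{lower} bound on the low-$u^A$ revenue contribution, whereas alignment demands an \emph{upper} bound. Bridging this gap will likely require a careful algebraic combination of the shift inequality with the tight link between $u^P$, $u^A$, and the underlying values, potentially invoking per-item first-order optimality conditions (for pricing) or item-swap optimality of $S^*$ (for delegation) to rule out concentrations of $u^P$ on low-$u^A$ events that would contradict local optimality.
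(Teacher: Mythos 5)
Your $q\geq 1/2$ case is fine but trivial; the substantive case $q<1/2$ is where the proof is missing, and what you have written there is a plan rather than an argument. The uniform price shift $\prices^*+U$ does yield, via optimality, the inequality $U\Pr[u^A\geq U]\leq \E\bigl[u^P\,\I[u^A\leq U]\bigr]$ -- but as you yourself note, this is a \emph{lower} bound on the quantity that utility alignment requires you to \emph{upper}-bound, and no amount of combining it with $u^P = v_{i^*}-u^A$ obviously reverses the direction: the shift inequality only constrains the threshold $U$, not how the optimal revenue is distributed across the low-$u^A$ event. The "bridging" step is exactly the content of the lemma and it is not supplied. For delegation the situation is worse: the proposed perturbation (augmenting $S^*$ with a phantom action of agent utility $U$ and principal utility $0$) is not a feasible delegation set in the instance -- the principal may only choose subsets of the given $n$ actions -- so optimality of $S^*$ says nothing about it, and the comparison is vacuous.

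The paper's proof uses a genuinely different perturbation, and the key idea you are missing is to threshold on the \emph{bias} (resp.\ \emph{price}) rather than on agent utility, at level determined by the conditional expectation itself. Writing $X=\E[u^P\mid u^A\leq U]$, one keeps only the "good" actions with $b_i\leq U-X/2$ (resp.\ items with $p_i\geq X/2$, pricing the rest at $\infty$). Conditioned on the low-agent-utility event, every bad action has principal value (resp.\ price) below $X/2$, so the good actions must carry at least $X/2$ of the conditional expectation $X$. One then shows the good-only solution earns at least $X/2$ unconditionally: if its realized agent utility exceeds $U$, every good action already yields principal value at least $X/2$ pointwise; if not, independence across actions lets one reuse the conditional bound. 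Optimality of $\config^*$ then gives $\E[u^P]\geq X/2$, which is exactly $2$-alignment. I would suggest abandoning the shift/phantom comparisons and reworking the argument along these lines.
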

Consequently, under the optimal solution to each of these problems, the principal's utility conditioned on the bottom $q$-quantile of agent utilities is no more than twice their overall expected utility.
Equivalently, the contribution to their overall utility from the bottom $q$ quantile range is at most a $2q$-fraction overall.
For example, the bottom $10\%$ of agent utilities can contribute at most $20\%$ of the principal's optimal utility.
We emphasize that utility alignment is a property of optimal solutions: it is easy to find non-optimal solutions with extremely misaligned utilities, e.g.\ where nearly all of the principal's utility comes from the agent's least favorite realizations.

For a more complicated example, we prove the following theorem in Section~\ref{sec:applications-general}, which we state very informally here:
\begin{lemma}[Informal]
    If agent and principal utilities are positively correlated for each individual action, then the instance satisfies $4/\sqrt q$-utility alignment.
\end{lemma}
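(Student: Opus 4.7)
The goal is to bound $\E[u^P \mathbb{I}[u^A \leq U]] \leq 4\sqrt{q}\,\E[u^P]$ in the optimal configuration, exploiting only per-action positive correlation together with the product structure of $\{u^A \leq U\} = \bigcap_i \{u_i^A \leq U\}$.

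The first step is a decomposition over the identity of the winner. By independence across actions,
\[
\E[u^P \mathbb{I}[u^A \leq U]] = \sum_i \E\bigl[u_i^P\, h_i(u_i^A)\, \mathbb{I}[u_i^A \leq U]\bigr],
\qquad
\E[u^P] = \sum_i \E\bigl[u_i^P\, h_i(u_i^A)\bigr],
\]
where $h_i(t) := \prod_{j \neq i} \Pr[u_j^A \leq t]$ is the probability that $i$ is the winner given $u_i^A = t$. Writing $q_i := \Pr[u_i^A \leq U]$, the product form gives $\prod_i q_i = q$ and $h_i(U) = q/q_i$.

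The second step extracts a per-action upper bound using positive correlation in its positive-quadrant-dependence form, $\E[u_i^P \mathbb{I}[u_i^A \leq t]] \leq \Pr[u_i^A \leq t]\cdot \E[u_i^P]$. Combined with $h_i(u_i^A) \leq h_i(U) = q/q_i$ on $\{u_i^A \leq U\}$, this gives $\E[u_i^P h_i(u_i^A) \mathbb{I}[u_i^A \leq U]] \leq q\,\E[u_i^P]$, and summing yields $\E[u^P \mathbb{I}[u^A \leq U]] \leq q \sum_i \E[u_i^P]$.

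The third step produces a matching lower bound on $\E[u^P]$ by applying the same positive correlation in the opposite direction at an auxiliary threshold $V$ chosen so that $\Pr[u^A \leq V] = \sqrt q$. For $u_i^A > V$, one has $h_i(u_i^A) \geq h_i(V)$ and, by association, $\E[u_i^P \mathbb{I}[u_i^A > V]] \geq (1-q_i')\E[u_i^P]$, where $q_i' = \Pr[u_i^A \leq V]$ satisfies $\prod_i q_i' = \sqrt q$. This lets me lower bound $\E[u^P]$ by a weighted sum of the same $\E[u_i^P]$'s appearing in the upper bound. Splitting actions into those with $q_i' \leq 1/2$ versus those with $q_i' > 1/2$, and using the product constraint $\prod q_i' = \sqrt q$, converts the two bounds into the target ratio.

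\textbf{Main obstacle.} The crux is reconciling the upper and lower bounds, which both involve $\sum_i \E[u_i^P]$ but weight actions differently. The naive upper bound $q \sum_i \E[u_i^P]$ can lose a factor of $n$ against $\E[u^P]$, while the target $4/\sqrt q$ depends only on $q$. The improvement has to come from the product constraint on the marginals $q_i'$: either the bulk of $\sum \E[u_i^P]$ sits on actions where $q_i'$ is bounded away from $1$ (so the lower bound is strong), or it sits on actions where $q_i'$ is close to $1$ (so many $q_i'$ must be small to preserve $\prod q_i' = \sqrt q$, and the upper bound itself is strengthened). Finding the cleanest case split at the $\sqrt q$-quantile threshold $V$---and verifying that the resulting constant is no worse than $4$---is the technical heart of the argument.
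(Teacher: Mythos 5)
There is a genuine gap, and it is structural rather than technical: your argument never uses the fact that the configuration is \emph{optimal} (nor that actions are excludable), but the statement is false for arbitrary configurations satisfying local alignment, so no such argument can close. Concretely, take two actions: action $1$ has $(u_1^A,u_1^P)=(0,H)$ deterministically, and action $2$ has $(u_2^A,u_2^P)=(1,0)$ with probability $1-\delta$ and $(-1,0)$ with probability $\delta$. Each action is locally $1$-aligned (one is deterministic, the other has identically zero principal utility). In the configuration including both, the agent picks action $2$ unless $u_2^A=-1$, so $\E[u^P]=\delta H$, while for $U=0$ we have $q=\Pr[u^A\leq U]=\delta$ and $\E[u^P\mid u^A\leq U]=H$. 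The alignment ratio is $1/q$, which exceeds $4/\sqrt{q}$ for $q<1/16$. Your own intermediate bounds are consistent with this: step 2 gives $\E[u^P\,\I[u^A\leq U]]\leq q\sum_i\E[u_i^P]=\delta H$, which is tight here, but $\sum_i\E[u_i^P]=H$ while $\E[u^P]=\delta H$, so the ``main obstacle'' you flag---relating $\sum_i\E[u_i^P]$ back to $\E[u^P]$---is not a case analysis to be finessed but an actual impossibility within a fixed configuration. (The configuration above is of course far from optimal: dropping action $2$ yields $H$.)

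The paper's proof takes a different route that makes optimality and excludability load-bearing. Writing $X=\E[u^P\mid u^A\leq U]$ under the optimal configuration, it exhibits an alternative \emph{feasible} configuration whose expected principal utility is at least $X\sqrt{q}/4$ (or $X/(4f(q))$), and then concludes $X\leq \max(4f(q),4/\sqrt{q})\,\E[u^P]$ because the optimum can only be better. The two cases are: if no single action contributes more than $X/4$ to $X$, partition the actions into two sets each contributing at least $X/4$, exclude one set, and observe that the probabilities $\epsilon_1,\epsilon_2$ that each surviving set stays below $U$ satisfy $\epsilon_1\epsilon_2=q$, so one of them is at least $\sqrt{q}$; if some single action contributes at least $X/4$, keep only that action and invoke local alignment. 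Your winner-decomposition and the identity $\prod_i q_i=q$ are in the same spirit as the paper's $\epsilon_1\epsilon_2=q$ step, but they must be applied to \emph{candidate alternative configurations} compared against $\OPT$, not to the optimal configuration in isolation.
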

This lemma will hold for the more complicated variants of delegated choice discussed above, as well as for assortment optimization with independent utilities.
Notice that $4/\sqrt{q}$-utility alignment is weaker: it implies that, conditioned on the very weakest quantiles of the agent's utility, the principal's utility may be very large.
However, it still implies that not too much of the principal's utility comes from low agent quantiles.
For example, for $q=.01$, no more than $q\cdot 4/\sqrt q\cdot \E[u^P]=.01\cdot4/.1\cdot \E[u^P]=.4\cdot \E[u^P]$ can come from the bottom $q$-quantile, i.e.\ the bottom $1\%$ of agent utilities can generate no more than $40\%$ of the principal's utility.

The performance of our approximation scheme is parametrized by the level of utility alignment of the utility configuration instance.
At a high level, the algorithm coarsens the quantile space of agent utilities into $M$ bins, then uses additive approximations to compute probabilities and ensure that the agent's utility falls into each bin with probability approximately $1/M$.
It also maintains additive estimates of the principal's expected utility conditioned on each bin.
These additive estimates will be inaccurate for low quantiles of agent utility, but accurate for high ones.
Utility alignment then implies that these larger errors from low quantiles do not seriously harm the algorithm's performance.
We obtain the following result, stated informally.
\begin{theorem}[Informal]
    Let $M$ be a positive integer.
    There is an algorithm that runs in polynomial time for any utility configuration instance, such that if the instance is $2$-utility aligned, the algorithm is a $(1-O(\log(M)/M))$-approximation, and if the instance is $4/\sqrt{q}$-utility aligned, the algorithm is $(1-O(1/\sqrt{M}))$-approximation.
\end{theorem}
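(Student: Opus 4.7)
The plan is to (i) introduce a bucketized, tie-free proxy $\objective$ for the expected principal utility that lower-bounds the true value for every configuration, (ii) show this proxy is close to the true value under the optimal configuration $\config^*$ by invoking utility alignment, and (iii) design a polynomial-time search over bucketized configurations that finds a near-proxy-optimal one. Chaining these three steps would give $\E[u^P \mid \ALG] \geq \objective(\ALG) \geq \objective(\config^*) - O(\varepsilon)\cdot\OPT \geq (1-\varepsilon)\,\OPT$ for the $\varepsilon$ promised in the theorem.

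For the algorithm, I would partition the agent's quantile axis into $\bucketsCount$ buckets $\bucket_1,\ldots,\bucket_{\bucketsCount}$ of mass $1/\bucketsCount$ each under the target configuration. Since actions are independent, $\Pr[u^A \in \bleqj] = \prod_i \Pr[\utilAgent{i} \in \bleqj]$, so the global bucket structure is determined by the per-action per-bucket probabilities $\qleqij$. The algorithm would enumerate a polynomial-sized grid of target values for $(\qleqij)$ together with matching conditional principal-utility estimates $\utilPrincipalHat{ij}$, and for each grid point select a configuration $\configi$ per action that best matches the target. The objective being maximized is the first-order, tie-free proxy
\begin{equation*}
    \objective(\config) \;=\; \sum_{i=1}^{\optionCount}\sum_{j=1}^{\bucketsCount} \Pr[\utilAgent{i} \in \bucket_j]\cdot \E[\utilPrincipal{i} \mid \utilAgent{i} \in \bucket_j]\cdot \prod_{k\neq i} \Pr[\utilAgent{k} \in \bucket_{\leq j-1}],
\end{equation*}
which credits action $i$ for winning in $\bucket_j$ only when every other action lies strictly below $\bucket_j$.

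For the error analysis, $\objective(\config) \leq \E[u^P]$ holds unconditionally, and the gap is the contribution from realizations where two or more actions fall into the same bucket. Using independence and the fact that each bucket inside $\bleqj$ has conditional mass $1/j$ under $\config^*$, a second-moment bound on the number of actions landing in $\bucket_j$ controls the probability of a within-bucket tie there; combining this with Definition~\ref{def:alignment} applied at quantile $j/\bucketsCount$ bounds the per-bucket loss by $O\!\bigl(f(j/\bucketsCount)/(j\bucketsCount)\bigr)\cdot \OPT$. Summing over $j=1,\ldots,\bucketsCount$: for $f\equiv 2$ this gives $\sum_j 2/(j\bucketsCount) = O(\log \bucketsCount / \bucketsCount)$; for $f(q) = 4/\sqrt{q}$ it gives $\sum_j (4/\sqrt{\bucketsCount})\, j^{-3/2} = O(1/\sqrt{\bucketsCount})$, since $\sum_j j^{-3/2}$ converges. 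Hence $\objective(\config^*) \geq (1-\varepsilon)\,\OPT$ with $\varepsilon$ matching the claimed rates.

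The hard part will be step (iii): bucket boundaries depend on the configuration itself, so the algorithm must search jointly over thresholds and per-action configurations. I would enumerate bucket-boundary candidates from the (polynomially many) support points of the input distributions, and round each per-action per-bucket target $\qleqij$ to a multiplicative grid of size $\mathrm{poly}(\bucketsCount,\optionCount,m)$. Careful rounding should ensure that (a) some grid point is consistent with a configuration whose proxy value differs from $\objective(\config^*)$ by at most $O(1/\bucketsCount)\cdot \OPT$, and (b) the configuration returned by the search realizes its target within $O(1/\bucketsCount)\cdot \OPT$ in proxy value. Combining (a) and (b) with steps (i)--(ii) closes the loop and yields the claimed approximation ratios.
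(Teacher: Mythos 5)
Your overall architecture mirrors the paper's: quantile buckets of mass $1/M$, a first-order tie-free proxy that lower-bounds the true utility, an upper bound at $\config^*$ via utility alignment, and an enumeration of bucket boundaries combined with a discretized search. The claimed final rates are also the right ones. However, there is one genuine gap in your error analysis, in the step where you bound the per-bucket loss by $O\bigl(f(j/M)/(jM)\bigr)\cdot\OPT$. The loss from ties in $\bucket_j$ is (up to constants) $\Pr[u^A\in\bucket_{\leq j}]\cdot\uestj\cdot\qjest\approx\frac{1}{M(j-1)}\,\E[u^P\mid u^A\in\bucket_j]$, so it is governed by the \emph{single-bucket} conditional expectation $\E[u^P\mid u^A\in\bucket_j]$. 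Utility alignment only controls the \emph{cumulative} conditional expectation $\E[u^P\mid u^A\in\bucket_{\leq j}]\leq f(j/M)\,\OPT$; since $\bucket_j$ carries conditional mass $1/j$ inside $\bucket_{\leq j}$, one can only conclude $\E[u^P\mid u^A\in\bucket_j]\leq j\,f(j/M)\,\OPT$, and this is tight (all of the utility in $\bucket_{\leq j}$ may be concentrated in $\bucket_j$). Plugging that in gives per-bucket loss $O(f(j/M)/M)$, which sums to a \emph{constant} for $f\equiv 2$ — a constant-factor approximation, not a PTAS. The buckets cannot all be simultaneously worst-case, but exploiting that requires an aggregate argument: rewrite $\E[u^P\mid u^A\in\bucket_j]$ as (approximately) $j\,\E[u^P\mid u^A\in\bucket_{\leq j}]-(j-1)\,\E[u^P\mid u^A\in\bucket_{\leq j-1}]$ and apply summation by parts before invoking alignment on each cumulative term. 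This is exactly the role of Lemma~\ref{lem:differences} and the Abel summation in Lemma~\ref{lem:optub}, and it is the key technical step your sketch is missing; with it, the telescoping restores the $1/(j-1)$ factor and the claimed $O(\log M/M)$ and $O(1/\sqrt M)$ rates follow.

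A secondary but real issue is step (iii): your proxy $\objective(\config)$ contains the factor $\prod_{k\neq i}\Pr[u_k^A\in\bucket_{\leq j-1}]$, which couples all actions, so the objective is not additive across actions and cannot be directly optimized by a knapsack-style dynamic program; moreover, jointly enumerating a grid for all $nM$ quantities $\qleqij$ is exponential. The fix (as in the paper) is to use the bucket-mass constraints to replace the coupling product by a configuration-independent constant close to $(j-1)/M$ (the paper's coefficients $\coef_j$), which makes the objective separable, and to track only the \emph{sums} $\sum_i\qleqij$ in rounded increments as the DP state rather than per-action targets. Neither repair changes your high-level plan, but both are needed to make the argument go through.
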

More generally, our approximation scheme is a PTAS for any instances that are $c/q^{1-\epsilon}$-utility aligned for some $c,\epsilon>0$.

\subsection{Structure of the Paper}

We discuss previous work on the various problems we consider in Section~\ref{sec:related-work}.
Section~\ref{sec:main-results} then gives our algorithm for utility configuration, and analyzes its performance as a function of the number of bins (the ``fineness parameter'') $M$ and the level of utility alignment in the instance.
In Section~\ref{sec:applications}, we then consider the applications of our framework.
In addition to the problems defined in Section~\ref{sec:intro-utility-configuration}, we show how to reduce several other variants to the utility configuration problem.
For each, we prove a bound on the utility alignment.
For all problems except unit-demand pricing, these immediately imply that the algorithm of Section~\ref{sec:main-results} is a PTAS.
The one exception is unit-demand pricing.
Because the set of prices is a continuum, a straightforward reduction would yield an infinite number of configurations per action.
Instead, we apply a price discretization result from \citet{CD11}, which implies that polynomially many prices suffice to obtain near-optimal solutions.
This then combines with our results to produce a multiplicative PTAS for the discrete version of unit-demand pricing.
Along the way, we also show how to use the value discretization and truncation results from that same paper to obtain a PTAS for the case of unbounded regular distributions as well.
These results can be found in Section~\ref{sec:pricing}.

\section{Related Work}
\label{sec:related-work}
Our PTAS applies to a variety of problems, each with its respective body of research.
We group our discussion of related work accordingly.

\paragraph{Unit-demand Pricing.} Both pricing and more general mechanism design for unit-demand buyers have been important testbeds for algorithmic mechanism design.
The most directly related works design approximation algorithms for a unit-demand buyer with independently distributed values.
\citet{CHK07} give a $4$-approximation for this problem, which has recently been improved to a $3$-approximation in \citet{JL24}.
Subsequent work by \citet{CD11} derives an additive PTAS for the discrete version of the problem, a multiplicative PTAS for the version with continuous distributions that may be unbounded but satisfy the MHR assumption, and a multiplicative QPTAS for the unbounded regular case.
Additionally, they prove several powerful truncation and extreme value results that we use in our analysis of unit-demand pricing in Section~\ref{sec:pricing}.
Our results are the first multiplicative PTAS for both the discrete and unbounded regular cases.
On the hardness side, \citet{CDPSY14} prove that even with independent values, the discrete version of the problem is NP-hard.
With correlation across values, \citet{BK07} further prove logarithmic hardness-of-approximation results; our techniques have no bearing on this much harder setting.

Several natural extensions beyond pricing with independent values have been well-studied.
In revenue management and operations research, past work such as \citet{AFMZ04,RRG06} has considered more general demand models.
There has also been extensive work on the more general mechanism design/lottery pricing problems, where the principal can populate their menu with distributions over items (also called lotteries) for the buyer to choose from.
This work includes \citet{KSMSW19}, who derive near-optimal mechanisms, and \citet{CDOPSY15}, who prove computational hardness.
An interesting question is whether our techniques have bearings on any of these settings.
Other work has considered multi-buyer versions of the problem, where the seller designs sequential posted pricing mechanisms or more general multi-buyer mechanisms. In these settings, welfare is also a non-trivial objective.
These include \citet{CHMS10, Yao15, CM16, CZ17, DKL20, DFKL20}.
Finally, one can consider other families of combinatorial valuations for items, such as additive or subadditive valuations.
Some papers on such models include \citet{HN13, LY13, BILW20, RW18, CM16, CZ17, DKL20}.

\paragraph{Delegation.}
Delegation is a classical problem in microeconomic theory (\citet{Holmstrom77}).
Typical work from economics focuses on deriving explicit optimal delegation mechanisms in settings where the set of actions is the real line, and where the agent and principal have single-peaked preferences with differing biases between the agent and principal.
Papers in this vein include \citet{MS06}, \citet{MS91}, \citet{AM08}, and \citet{AB12}.
The combinatorial model we consider is due to \citet{KPT24}, who formulate a richer model more amenable to algorithmic analysis.
They prove that simple threshold policies are a $3$-approximation under independent utilities and no outside option, and give a parametrized super-constant approximation when the agent has an outside option that the principal can't exclude.
They also show that even with no outside option, the case with independent utilities is NP-hard, and the case with correlated utilities is APX-hard. 
We obtain a PTAS for both settings, as well as the more general setting where biases are also random.

Recent work has focused on connections between the delegated project choice model of \citet{AV10} and classic optimal search results like prophet inequalities (\citet{KS77,S84}) and Pandora's box (\citet{W79}).
These include \citet{KK18} and followups \citet{BD21, BD24, BDP22}.
These papers observe that for delegation problems where the principal can verify their utility from the agent's choice before accepting the agent's selected option, the principal can approximate the utility they would obtain if they controlled the agent's behavior directly.
Other papers have included an online variant of the problem (\citet{BHHS24}) and a multi-agent variant (\citet{SRH23}).
In the \citet{KPT24} variant of delegation, the comparisons to first-best made by the papers above are not possible, as the principal is not able to verify their utility for the agent's choice.
Nonetheless, several of our technical lemmas, including the utility alignment result for delegation, at a high level resemble arguments from the prophet inequalities literature.

\paragraph{Assortment Optimization.} Assortment optimization is a fundamental problem in revenue management, and has been studied under a wide range of demand models for the buyer.
The simplest model is the multinomial logit (MNL) model, which can be expressed as a random utility model with independent utilities.
\citet{TR04} show that for this restricted class, simple threshold-based policies (``revenue-ordered assortments'') are optimal.
Other utility models maintain this focus on narrowly-tailored parametrized families: \citet{AFS23} consider the ``exponomial choice model'' (but do not give approximation algorithms), which, like MNL, has independent utilities across items.
Many other papers introduce correlation: \citet{RSTT14} consider a mixture over MNL models (corresponding to correlation across utilities for different items), and prove NP-hardness, while \citet{DGT14,LRT15} consider nested variants of the logit model, and consider both algorithms and hardness.
No prior work has shown that any model with independent utilities is NP-hard.
However, in Appendix~\ref{sec:appendix}, we show that this in fact follows from a similar argument to one in \citet{KPT24}:
\begin{theorem}
    The assortment optimization problem with independent item utilities is NP-hard, even when the no-purchase option gives the buyer utility zero deterministically.
\end{theorem}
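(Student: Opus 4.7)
The plan is to establish NP-hardness by a polynomial-time reduction, closely paralleling the delegation NP-hardness argument of \citet{KPT24}. Given an instance of whatever NP-hard source problem they reduce from, I would construct an assortment instance with one item per element of the source instance (or a small cluster of items per element, as discussed below), assign each item a deterministic price $p_i$ and an independent value distribution $F_i$, and argue that the optimal expected revenue crosses a pre-computed threshold exactly when the source instance is a yes-instance. Throughout, the no-purchase option is fixed to yield agent utility $0$ deterministically, matching the variant stated in the theorem.

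At the level of agent preferences, the assortment and delegation problems are essentially interchangeable. If a delegation instance has biases $b_i$, then shifting all biases by a common constant so that $-b_i \geq 0$ and setting $p_i := -b_i$ yields an assortment instance whose choice rule $\arg\max_i (v_i - p_i)$ over a chosen subset $S$ coincides with the delegation choice rule $\arg\max_i (v_i + b_i)$ over the delegation set $S$. Choosing the shift large enough to ensure $v_i - p_i \geq 0$ on the support of each $F_i$ prevents the no-purchase option from triggering, so that the subset-selection structure in delegation translates directly into the subset-selection structure in assortment, and the agent's behavior is preserved realization by realization.

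The main obstacle is the mismatch between the principal utilities in the two problems: in delegation the principal receives the random value $v_i$ when action $i$ is chosen, while in assortment the principal receives the deterministic price $p_i$. A naive port of the reduction therefore does not preserve the objective. To bridge this, I would replace each delegation action whose principal-utility distribution $F_i$ has finite support $\{\beta_i^1,\ldots,\beta_i^k\}$ by a cluster of $k$ assortment items, one per support point, each with deterministic revenue equal to its corresponding $\beta_i^j$ and an independent value distribution $F_{ij}$ crafted so that the probability that item $(i,j)$ is the agent's favorite within the cluster (conditional on some item from that cluster winning) matches the original conditional mass on $\beta_i^j$ in $F_i$. Verifying that this clustered construction preserves independence across items, remains polynomial in size, and faithfully transports the threshold computation from the source instance into the assortment objective is the crux of the argument.
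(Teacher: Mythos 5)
Your overall plan---reduce from the same source as \citet{KPT24} and port their delegation gadget to assortment---is the right one (the paper reduces from \textsc{Integer Partition} by mirroring the KPT24 delegation instance), and you correctly identify the central obstacle: delegation pays the principal the random value $v_i$ while assortment pays the deterministic price $p_i$. But the step you yourself call ``the crux of the argument,'' the clustering gadget, is exactly the part that is not carried out, and as sketched it is underspecified in a way that matters. Splitting an action into one item per support point of its principal-utility distribution must preserve not just the marginal probability that each item wins \emph{within} its cluster, but the joint distribution of (winner's revenue, winner's agent utility), because the winner's agent utility is what competes against the other clusters and against the no-purchase option. With independent value distributions inside the cluster, this joint distribution is the law of an argmax/max pair of independent draws, and you give no construction or verification that the required joint law is achievable. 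There is also an internal tension in your setup: you propose shifting prices so that $v_i - p_i \geq 0$ always holds and ``the no-purchase option never triggers,'' but then on the realization where every delegation value is zero (principal utility $0$ in the delegation instance) the buyer would still purchase some item at a strictly positive price, breaking the correspondence of objectives. The zero-revenue branch has to be routed \emph{through} the no-purchase option, not around it.

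The paper avoids the general gadget entirely by exploiting the structure of the specific KPT24 hard instance: there, each action's principal utility is supported on $\{0\}$ together with a single positive constant (up to a negligible $\delta$ used only for tie-breaking). Hence each action maps to exactly \emph{one} item whose deterministic price equals that constant ($p_i=1$ for items $1,\dots,n$, and $p_{n+1}=M^2(1-\tfrac{C}{2M})+1+\delta$ for the big item); the randomness is pushed entirely into the buyer's value, with small $\pm\delta$ perturbations chosen so that the ordering of agent utilities across items replicates the ordering in the delegation instance realization by realization; and the all-zero realization yields negative buyer utility for every item, so the deterministic-zero outside option is taken and the revenue is $0$, exactly as in delegation. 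With that realization-by-realization identification of principal utilities in hand, the paper simply invokes the KPT24 analysis of the optimal delegation set. To complete your proof you would either need to verify your clustering gadget in full generality (including the joint-law issue above) or, more simply, observe that for this particular hard instance no clustering is needed.
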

No prior work has obtained constant-approximations or better for the case with general independent utilities, either.
Our PTAS is the first such result.

Several other papers have gone beyond heavily parametrized utility models.
\citet{AFLS18} consider choices generated distributions over rankings (equivalently, fully-correlated utilities), and show that the problem is as hard to approximate as the maximum independent set problem.
\citet{RS24} consider the special case of Mallows models for rankings, and give a QPTAS for that problem.
\citet{FT17} consider a variant where preferences are drawn by following a Markov chain.
\citet{BJ20} give approximation algorithms for general choice models, with parametrized performance depending on the model.
Several other papers have considered versions of assortment optimization with additional constraints or multi-item purchases.
These include \citet{BHL24, FPT19, DGZ22, CMSX24, ILMST21}.
The latter two papers are notable for proving hardness results as well.
An interesting direction for future work would be extending our results to some of these related models.

\paragraph{Stochastic Probing.}
At a high level, the problems we consider resemble some of those recently considered in the literature on {\em stochastic probing}.
These papers typically consider an algorithm designer who must choose a limited number of random variables to observe from a larger set.
Upon realizing the values of the chosen random variables, the designer picks the one with the best reward.
\citet{SS21} and \citet{MNPR20} consider a cardinality-constrained version of this problem (the designer must pick $k$ variables), and \citet{GGM06} consider a version with probing costs.
\citet{CHLLLL16} consider a version embedded in a bandits problem.
These models resemble delegation, except that the decision-maker selecting the variables is different from the one observing the realizations and picking the final reward.
Our techniques may have future applications in this literature as well.
Two related papers that bear final mention are \citet{MMW23} and \citet{ATMPZ23}, which consider a regret-minimizing decision-maker who observes only noisy observations of rewards.
The decision-maker knows the noise distributions but not the rewards: there are loose connections to the model of delegation with random bias we introduce in Section~\ref{sec:applications-delegation-random-bias}, where random bias can be viewed as observation error.

\section{PTAS for Utility Configuration}
\label{sec:main-results}
This section gives the main PTAS framework for utility configuration.
The algorithm takes as input discrete instances of the problem, that is, instances of utility configuration where for each action $i$ and configuration $j$, the joint distribution $F_{ij}$ over pairs $(u_i^A,u_i^P)$ is given by a list of point masses and their respective probabilities.
The algorithm is parametrized by a fineness parameter $M$, which governs the approximation ratio: the configuration the algorithm computes will be a multiplicative $(1-g(M))$-approximation, where $g(M)\rightarrow 0$ as $M\rightarrow \infty$, with the precise value of $g$ depending on the degree of utility alignment in the instance or family of instances.
For example, for utility configuration problems corresponding to unit-demand pricing, we will have $g(M)=O(\log M/M)$.
Meanwhile, the runtime of the algorithm will grow with $M$; the growth will be exponential in $M$, meaning that the PTAS is not an FPTAS.
(However, as is needed for a PTAS, the runtime is polynomial in the input size for any fixed $M$.)

Before making the PTAS framework fully precise, we give a high-level overview of the main ideas here.
We then prove the necessary preliminary results about the key quantities used in the algorithm in Sections~\ref{sec:main-preprocessing}-\ref{sec:main-objective}, before presenting and analyzing the complete algorithm in Sections~\ref{sec:main-alg} and \ref{sec:main-runtime-analysis}.
We defer discussion of applications to specific problems of interest (and the corresponding analyses of those problems' utility alignment properties) to Section~\ref{sec:applications}.

The algorithm has an outer loop and an inner optimization procedure.
In the outer loop, the algorithm attempts to approximately guess the structure of the distribution of the agent's utility for their favorite action, $\max_i u_i^A$, under the optimal configuration.
For each such guess, the inside optimization solves a dynamic program, which maximizes an approximation to the true objective function, subject to the constraint that the distribution of $\max_i u_i^A$ approximately matches the algorithm's guess.
The algorithm then returns the best solution obtained from the dynamic program, over all the guesses.
We cover each piece in more detail below.

\paragraph{Guessing the agent's utility distribution.}
In the optimal configuration $\config^*$, each action $i$'s agent utility $u_i^A$ is drawn from a distribution $F_{i,C_i^*}$.
The agent chooses the action that yields the highest utility given the realizations $(u_1^A,\ldots, u_n^A)$, yielding utility $\max_i u_i^A$.
For brevity, denote this maximum utility by $u^A=\max_i u_i^A$.
The algorithm's outer loop guesses the $j/M$ quantile of the distribution of $u^A$ for every $j\in\{1,\ldots,M\}$.
More precisely, we guess utilities $\hat u_1\leq\ldots\leq \hat u_M$, which define utility bins $B_j=[\hat u_{j-1},\hat u_j)$ for all $j$, such that for any $j\in \{1,\ldots,M\}$, we have $\Pr[u^A\in B_j]=1/M$.
Assume, for preliminary discussion, an exact partition of the probability space into size $1/M$ bins is possible; with discrete distributions, there will be errors, which we handle in the complete technical exposition.

\paragraph{Optimization procedure.} 
Given a guess of the bins, we now lay out the algorithm's optimization problem.
Rather than maximizing the principal's utility exactly, subject to exact constraints on the probability of the agent's utility falling in each bin $B_j$, the algorithm uses additive estimates.
For the bin probabilities, we consider the conditional probabilities $\qleqj=\Pr[u^A\in B_j~|~u^A\in B_{\leq j}]$, where $B_{\leq j}=\cup_{k\leq j} B_j$.
In the optimal configuration, $\qleqj=1/j$ for each $j\in\{1,\ldots,M\}$.
The algorithm maintains additive estimates of these conditional probabilities.
Given a configuration $\config$, each action has a utility distribution of $u_i^A$ and conditional probability $\qleqij=\Pr[u^A_i\in B_j~|~u^A_i\in B_{\leq j}]$.
We then can consider the additive estimate $\qjest=\sum_{i=1}^n\qleqij$.
Note that if $\qjest$ is close to $1/j$, then each of the individual $\qleqij$s must be small, and hence $\qjest\approx\qleqj$, as first-order terms dominate the calculation of $\qleqj$.
Thus, for $B_j$ corresponding to high quantiles of the agent's utility distribution, requiring $\qjest$ to be close to $1/j$ effectively enforces that the agent's utility distribution is close to that under $\config^*$.

A similar principle applies to the principal's expected utility.
We define $\uleqj=\E[u^P~|~u^A\in B_j]\Pr[u^A\in B_j~|~u^A\in B_{\leq j}]$, where $u^P$ is the principal's utility for the action selected by the agent.
Informally, $\uleqj$ is the principal's expected utility from bin $j$ conditioned on the agent's utility being in bin at most $j$.
Similarly, under configuration $\config$, each action $i$ has a corresponding action-wise $u_{ij}^\leq=\E[u^P_i~|~u^A_i\in B_j]\Pr[u^A_i\in B_j~|~u^A_i\in B_{\leq j}]$.
For similar reasons as with the $\qleqj$s, the additive estimates $\uestj=\sum_{i=1}^nu_{ij}^\leq$ are a good approximation to $u_j^\leq$ when $j$ is large.
Moreover, the contribution of an individual action to the overall objective can be approximated as a weighted sum of its contributions to the additive estimates.
Consequently, maximizing the additive utility estimate subject to the constraint that the additive probability estimates $\qleqj$ be close to $1/j$ yields an optimization problem that is structurally similar to the multidimensional knapsack problem, albeit with both upper and lower bounds on weights, and with multiple choices for each item (instead of ``in'' or ``out'').
As is natural for knapsack-like problems, we then solve the problem approximately using rounding and dynamic programming.

\paragraph{Analysis.}
As discussed, $\qjest$ will be close to $\qleqj$ for high $j$, and $\uestj$ will be an accurate estimate of the principal's utility as well.
With careful accounting, we are able to show that an optimal solution to the optimization above will capture nearly all of the principal's expected utility from high quantiles of the agent's utility distribution.
To account for the low quantiles, then, we rely on utility alignment.
Utility alignment states that under the optimal configuration $\config^*$, these low quantiles of agent utility contribute little to the principal's objective.
In the algorithm, this manifests by down-weighting terms corresponding to low $j$, and shows that the resulting approximate objective value will be close to the true optimal expected principal utility.

\paragraph{Section outline.}
The rest of this section precisely defines and analyzes the estimates above, describes the algorithm in full detail, and analyzes the approximation ratio and runtime of the algorithm.
In Section~\ref{sec:main-preprocessing}, we first define a preprocessing procedure to handle the discrete nature of the utility configuration instances we consider.
This preprocessing step will ensure that it is possible to define utility bins that nearly evenly divide the agent's probability space.
Sections~\ref{sec:main-prob-estimates} and \ref{sec:main-util-estimates} define the additive estimates for the $\qleqj$s and $\uleqj$s, and bound their errors.
Section~\ref{sec:main-objective} then defines the objective function for the algorithm's optimization problem, which approximates the principal's true utility.
Finally, we give the full algorithm in Section~\ref{sec:main-alg} and analyze its runtime in Section~\ref{sec:main-runtime-analysis}.

\subsection{Preprocessing}\label{sec:main-preprocessing}
The formulation of utility configuration given in Section \ref{sec:intro-utility-configuration} is for a discrete version of the problem.
Recall that in this discrete formulation, action $i$ has $m$ possible utility configurations.
Configuration $j$ for action $i$ is defined by a joint distribution $F_{ij}$ over pairs $(u_i^A,u_i^P)$ of principal utilities, given as a list of $K_{ij}$ triples $(u_{ij}^{A,k},u_{ij}^{P,k},p_{ij}^k)$, where $u_{ij}^{A,k}$ (resp.\ $u_{ij}^{P,k}$) are the agent (resp.\ principal) utilities for this mass point, which is realized with probability $p_{ij}^k$.
In this discrete formulation, probabilities $p_{ij}^k$ are arbitrary numbers in the range $(0,1]$.

The first step of the PTAS is to guess $M$ agent utilities $-\infty=\hat u_0\leq \ldots \leq \hat u_M$ such that under the optimal configuration, the agent's utility is roughly equally likely to fall in each of $M$ bins $B_j=[\hat u_{j-1},\hat u_{j})$ for $j\in\{1,\ldots,M\}$ (with $B_M=[\hat u_{M-1},\hat u_{M}]$).
With general probability masses $p_{ij}^k\in(0,1]$ for utilities, this may not be possible, as large probability masses cause the agent's utility to take particular values with probability significantly greater than the $1/M$ mass allowed in each bin.
To enable us to partition the agent's utility distribution anyway, we now define a preprocessing procedure for breaking larger point masses into smaller masses of size at most $1/M^2$.
In the process, we will perturb agent utilities by a small amount. These perturbations will cause each of the smaller masses we construct to yield a distinct utility for the agent, allowing us to find the desired partition of the agent's utility space.
Meanwhile, these perturbations will not change the expected principal utility from any configuration of the actions.
The preprocessing algorithm is given in Algorithm~\ref{alg:preprocessing}, and explained informally below.

Informally, the preprocessing algorithm splits each point mass into smaller point masses with the same principal utility.
Given a point mass $(u_{ij}^{A,k},u_{ij}^{P,k},p_{ij}^k)$ for configuration $j$ of action $i$, we split it into $\lfloor p_{ij}^k/M^2 \rfloor$ point masses of size $1/M^2$, plus one more of size $p_{ij}^k-\lfloor p_{ij}^k/M^2 \rfloor$ to capture the remaining mass.
The principal utilities for these new point masses will be $u_{ij}^{P,k}$, unchanged from the original.
To induce the agent utilities of each point mass to be distinct, we perturb each new point mass by $\delta\ell$ for some extremely small $\delta>0$, and with $\ell$ assuming a distinct value for each new point mass.
By splitting point masses starting with those with the lowest principal utility, the perturbed utilities enforce tiebreaking in the principal's favor.
We note that it is possible to select a $\delta$ with bit complexity polynomial in the original input size and such that the agent's choice between any two realizations of any two actions with distinct agent utilities is the same as it was before preprocessing; consequently, the principal's utility for any configuration of actions is the same before and after preprocessing.

\begin{algorithm}
\label{alg:preprocessing}
\SetKwInOut{Input}{Input}\SetKwInOut{Output}{Output}
\Input{Distributions $F_{ij}$ for each $i\in[n]$, $j\in[m]$, perturbation $\delta>0$.}
\Output{New distributions $\tilde F_{ij}$ for each $i\in[n]$, $j\in[m]$.}
\BlankLine
\BlankLine
\tcc{Make full sorted list of point masses.}
$L$ $\leftarrow$ empty list\;
\For{$i\in[n]$, $j\in[m]$, $k\in[K_{ij}]$}
    {Append $(u_{ij}^{A,k},u_{ij}^{P,k},p_{ij}^k)$ to $L$\;}
Sort $L$ in non-decreasing order by $u_{ij}^{P,k}$\;

\BlankLine
\BlankLine
\tcc{Empty list for each new distribution.}
\For{$i\in[n]$, $j\in [m]$}{$\tilde F_{ij}$ $\leftarrow$ empty list\;}

\BlankLine
\BlankLine
\tcc{Split and perturb point masses. Append to appropriate distribution.}
$\ell \leftarrow 0$\;
\For{$(u_{ij}^{A,k},u_{ij}^{P,k},p_{ij}^k)$ in $L$}
    {
    \For{$k'\in\{1,\ldots,\lfloor p_{ij}^k/M^2 \rfloor\}$}
        {Append $(u_{ij}^{A,k}+\delta\ell,u_{ij}^{P,k},1/M^2)$ to $\tilde F_{ij}$\tcp*{Add whole increments of $1/M^2$.}
        $\ell\leftarrow\ell+1$\;}
    Append $(u_{ij}^{A,k}+\delta\ell,u_{ij}^{P,k},p_{ij}^k-\lfloor p_{ij}^k/M^2 \rfloor)$ to $\tilde F_{ij}$\tcp*{Add remainder of point mass.}
    $\ell\leftarrow\ell+1$\;
    }
Return $\tilde F_{ij}$ for each $i\in[n]$, $j\in [m]$\;
\caption{Preprocessing Algorithm}
\end{algorithm}

For the remainder of the analysis, we assume that every realization of every action has a distinct agent utility and that each realization has probability at most $1/M^2$.
One consequence of these facts is that the distribution of agent utilities in any solution, including the optimal solution, can now be split into bins of nearly equal probability, stated formally as follows.

\begin{definition}[Utility bins]\label{def:bins}
     Given optimal configuration $\config^*$ after preprocessing, let $\hat u_0=-\infty$, and $\hat u_j$ be the smallest agent utility such that $\prob[\optua\leq \hat u_j]\geq j/M$, for all $j\in\{1,\ldots,M\}$.
     Given the $\hat u_j$s, we may define \emph{utility bins} $B_j=[\hat u_{j-1},\hat u_j)$ for $j\in\{1,\ldots,M-1\}$ and $B_M=[\hat u_{M-1},\hat u_M]$ and \emph{cumulative bins} $B_{\leq j}=\cup_{k\leq j}B_k$.
\end{definition}

After the preprocessing step, the bins constructed by Definition~\ref{def:bins} divide the probability space of the agent's utility distribution nearly equally: Lemma~\ref{lem:even} shows that the agent's utility falls into each of the $M$ bins with probability approximately $1/M$.

\begin{lemma}\label{lem:even}
    The bins of Definition~\ref{def:bins} satisfy:
    \begin{equation*}
        \Pr[\optua\in B_j]\in\left[\frac{1}{M}-\frac{1}{M^2},\frac{1}{M}+\frac{1}{M^2}\right].
    \end{equation*}
\end{lemma}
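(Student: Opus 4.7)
The plan is to exploit two facts about the preprocessed instance: (i) under any single configuration, and in particular $\config^*$, every atom of the agent's utility distribution is unique, and (ii) every individual atom has probability mass at most $1/M^2$. From these, I would first establish the key lemma that the random variable $\optua=\max_i u_i^A$ under $\config^*$ has no atom of size greater than $1/M^2$. Indeed, since all agent utilities across realizations of all actions are distinct, any value $u$ is hit by at most one realization of at most one action; hence $\Pr[\optua=u]\le\Pr[u_{i^*}^A=u]\le 1/M^2$ by independence and the size bound from preprocessing.

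Next I would control the CDF of $\optua$ at each threshold $\hat u_j$. By the defining property of $\hat u_j$ as the smallest utility with $\Pr[\optua\le \hat u_j]\ge j/M$, minimality gives $\Pr[\optua<\hat u_j]<j/M$. Combining with the atom bound,
\begin{equation*}
    \Pr[\optua\le \hat u_j]=\Pr[\optua<\hat u_j]+\Pr[\optua=\hat u_j]< \frac{j}{M}+\frac{1}{M^2},
\end{equation*}
so $\Pr[\optua\le \hat u_j]\in\bigl[\tfrac{j}{M},\tfrac{j}{M}+\tfrac{1}{M^2}\bigr)$, and correspondingly $\Pr[\optua<\hat u_j]\in\bigl[\tfrac{j}{M}-\tfrac{1}{M^2},\tfrac{j}{M}\bigr)$.

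To conclude, I would write $\Pr[\optua\in B_j]=\Pr[\optua<\hat u_j]-\Pr[\optua<\hat u_{j-1}]$ for $j\in\{2,\ldots,M-1\}$ and plug in the interval bounds above for indices $j$ and $j-1$; the maximum of the difference is strictly less than $1/M+1/M^2$, and the minimum is strictly greater than $1/M-1/M^2$, both within the claimed interval. I would finish by checking the two endpoint cases: for $j=1$, use $\hat u_0=-\infty$ so $\Pr[\optua\in B_1]=\Pr[\optua<\hat u_1]\in[\tfrac{1}{M}-\tfrac{1}{M^2},\tfrac{1}{M})$, and for $j=M$, use that $B_M$ is closed on the right together with $\Pr[\optua\le\hat u_M]=1$, giving $\Pr[\optua\in B_M]=1-\Pr[\optua<\hat u_{M-1}]\in(\tfrac{1}{M},\tfrac{1}{M}+\tfrac{1}{M^2}]$.

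The only conceptual obstacle is the atom bound on $\optua$: one must be careful that the distinctness of agent utilities across all (action, configuration, realization) triples is what lets the per-atom bound of $1/M^2$ on a single $F_{i,C_i^*}$ transfer to a bound on the distribution of the maximum. Once this is in place, the remaining argument is bookkeeping with CDF differences.
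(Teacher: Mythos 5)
Your proof is correct and follows essentially the same route as the paper's: express $\Pr[\optua\in B_j]$ as a difference of cumulative probabilities and bound each cumulative probability within $1/M^2$ of $j/M$ using the minimality of $\hat u_j$ together with the post-preprocessing bound of $1/M^2$ on each atom. You are somewhat more careful than the paper in two respects that are worth keeping: you explicitly justify why the per-atom bound on each $F_{i,C_i^*}$ transfers to the distribution of the maximum $\optua$ (via distinctness of utilities across realizations), and you handle the open/closed endpoint conventions of the bins and the boundary cases $j=1$ and $j=M$ explicitly, which the paper glosses over.
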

\begin{proof}
    The cumulative bin probabilities satisfy $\prob[\optua\in \bucket_j]=\prob[\optua\in \bucket_{\leq j}]-\prob[\optua\in \bucket_{\leq j-1}]$.
    Since $\prob[\optua\in \bucket_{\leq j}]\in[j/M,j/M+1/M^2]$ and $\prob[\optua\in \bucket_{\leq j-1}]\in [(j-1)/M,(j-1)/M+1/M^2]$, we can apply these upper and lower bounds:
    \begin{equation*}
        \prob[\optua\in \bucket_{j}]\geq j/M-((j-1)/M+1/M^2)=1/M-1/M^2
    \end{equation*}
    \begin{equation*}
        \prob[\optua\in \bucket_{j}]\leq j/M+1/M^2-(j-1)/M=1/M+1/M^2.
    \end{equation*}
    This implies the lemma.
\end{proof}

Many of the subsequent analyses of the optimal configuration rely on conditional probabilities of the form $\Pr[\optua\in B_k~|~\optua\in B_{\leq j}]$.
The following lemma characterizes these probabilities.
Since the bins nearly evenly divide the probability space, we should expect these probabilities to be roughly $1/j$.
Especially important is the case where $k=j$, as $\qleqj=\bucketprobopt$ is a key quantity in the structure of the PTAS.
The Lemma~\ref{lem:bins} below gives especially tight bounds for these probabilities.

\begin{lemma}\label{lem:bins}
    For any $k\leq j$, the bins of Definition~\ref{def:bins} satisfy
    \begin{equation}\label{eq:kinj}
        \Pr[\optua\in B_k~|~\optua\in B_{\leq j}]\in \left[\frac{1-1/M}{j+1/M},\frac{1+1/M}{j}\right].
    \end{equation}
    Moreover, for $k=j$ the following tighter bound holds:
    \begin{equation}\label{eq:jinj}
        \qleqj = \bucketprobopt\in \left[\frac{1-1/M}{j},\frac{1+1/M}{j+1/M}\right].
    \end{equation}
\end{lemma}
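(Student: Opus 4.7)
The plan is to derive both bounds by pushing around the cumulative probabilities $\Pr[\optua\in\bleqj]$, which are known exactly up to $O(1/M^2)$ overshoot. Specifically, by Definition~\ref{def:bins}, $\hat u_j$ is the smallest utility with $\Pr[\optua\leq \hat u_j]\geq j/M$, and since after preprocessing every point mass has probability at most $1/M^2$, the overshoot is bounded:
\begin{equation*}
\Pr[\optua\in\bleqj]\in\left[\tfrac{j}{M},\tfrac{j}{M}+\tfrac{1}{M^2}\right].
\end{equation*}
This is the one computation I would do up front, since both parts of the lemma are corollaries of it.

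For \eqref{eq:kinj}, I would simply write
\begin{equation*}
\Pr[\optua\in B_k\mid \optua\in\bleqj]=\frac{\Pr[\optua\in B_k]}{\Pr[\optua\in\bleqj]},
\end{equation*}
plug in the two-sided bound on $\Pr[\optua\in B_k]\in[1/M-1/M^2,1/M+1/M^2]$ from Lemma~\ref{lem:even} in the numerator, and the cumulative bound above in the denominator. Pairing the numerator upper bound with the denominator lower bound gives the $(1+1/M)/j$ side, and the reverse pairing gives the $(1-1/M)/(j+1/M)$ side after pulling a factor of $1/M$ out of top and bottom.

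For the tighter bound \eqref{eq:jinj} at $k=j$, the key observation is that $\Pr[\optua\in B_j]$ and $\Pr[\optua\in\bleqj]$ are not independent: they share the same upper endpoint. So I would express
\begin{equation*}
\qleqj=1-\frac{\Pr[\optua\in B_{\leq j-1}]}{\Pr[\optua\in\bleqj]},
\end{equation*}
and then minimize/maximize the ratio over the two intervals $[(j-1)/M,(j-1)/M+1/M^2]$ and $[j/M,j/M+1/M^2]$ independently. To maximize $\qleqj$ we make the ratio as small as possible, choosing numerator $(j-1)/M$ and denominator $j/M+1/M^2$, giving $1-(j-1)/(j+1/M)=(1+1/M)/(j+1/M)$. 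To minimize $\qleqj$ we do the reverse, obtaining $1-(j-1+1/M)/j=(1-1/M)/j$. This yields the tighter two-sided bound.

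I do not expect any real obstacle: both parts are essentially arithmetic once the cumulative bound from the preprocessing is in hand. The only subtlety is noticing that for the $k=j$ case the naive product-of-intervals bound from part \eqref{eq:kinj} is loose precisely because the numerator probability is determined by the same two cumulative quantities that appear in the denominator, and rewriting $\qleqj$ as $1$ minus a ratio of cumulatives exploits this correlation to tighten the endpoints from $1/j$ and $(j+1/M)$ respectively to $j$ and $(j+1/M)$ in the right places.
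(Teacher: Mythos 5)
Your proposal is correct and follows essentially the same route as the paper: both arguments reduce everything to the cumulative bounds $\Pr[\optua\in\bleqj]\in[j/M,\,j/M+1/M^2]$, bound \eqref{eq:kinj} by pairing numerator and denominator extremes independently, and obtain \eqref{eq:jinj} by exploiting that $\Pr[\optua\in B_{\leq j}]$ appears in both numerator and denominator (the paper phrases this as monotonicity of the ratio in each cumulative quantity, which is exactly your rewriting $\qleqj=1-\Pr[\optua\in B_{\leq j-1}]/\Pr[\optua\in\bleqj]$). The only cosmetic difference is that for \eqref{eq:kinj} you cite Lemma~\ref{lem:even} for the numerator while the paper re-expands it as a difference of cumulatives.
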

\begin{proof}
    The definition of the bins implies
    \begin{equation}\label{eq:probexpression}
        \Pr[\optua\in B_k~|~\optua\in B_{\leq j}]=\frac{\prob[\optua \in \bucket_{\leq {k}}]-\prob[\optua \in \bucket_{\leq {k-1}}]}{\prob[\optua \in \bucket_{\leq {j}}]}.
    \end{equation}
    To prove (\ref{eq:kinj}), note that $\prob[\optua \in \bucket_{\leq {k}}]\in[k/M,k/M+1/M^2]$, $\prob[\optua \in \bucket_{\leq {k-1}}]\in[(k-1)/M,(k-1)/M+1/M^2]$, and $\prob[\optua \in \bucket_{\leq {j}}]\in[j/M,j/M+1/M^2]$.
    Plugging in the appropriate upper and lower bounds and simplifying yields the quantities in (\ref{eq:kinj}).

    To prove (\ref{eq:jinj}), note that when $k=j$, the term $\prob[\optua \in \bucket_{\leq {j}}]$ appears in both the numerator and denominator of the right-hand side of (\ref{eq:probexpression}).
    This function is increasing in $\prob[\optua \in \bucket_{\leq {j}}]$ and decreasing in $\prob[\optua \in \bucket_{\leq {j-1}}]$.
    Applying the bounds $\prob[\optua\in \bucket_{\leq j}]\in[j/M,j/M+1/M^2]$ and $\prob[\optua\in \bucket_{\leq j-1}]\in [(j-1)/M,(j-1)/M+1/M^2]$ and simplifying, we obtain the desired inequality.
\end{proof}

\subsection{Probability Estimates}\label{sec:main-prob-estimates}

The high-level objective of the PTAS is to construct a configuration with approximately the same distribution for the agent's utility from their favorite action as the optimal solution.
To maintain this approximation, the algorithm aims to match the probabilities that the agent's utility falls into the bins $B_j$ defined as in Definition~\ref{lem:bins}.
More specifically, the algorithm tries to match the conditional probabilities 
$\bucketprobalg$ over the cumulative bins to the optimal conditional probabilities, $\bucketprobopt$.
Rather than exactly track these conditional probabilities, the algorithm will maintain estimates, obtained in two steps, by rounding and then summing the action-wise probabilities $\bucketprobi$.
In more detail, we will analyze the relationship between the following quantities, which may be defined for any configuration $\config$:
\begin{itemize}
    \item Conditional bin probabilities: $\qleqj = \bucketprob$.
    \item Action-wise conditional bin probabilities: $\qleqij = \bucketprobi$.
    \item Rounded action-wise probabilities: $\qleqbarij = \lfloor\qleqij \cdot M^2n\rfloor / M^2n$.
    \item Additive estimates: $\qjest = \sum_{i=1}^n \qleqij$.
    \item Rounded additive estimates $\qjbarest= \sum_{i=1}^n \qleqbarij$.
\end{itemize}

We give three analyses of these quantities.
First, we give a preliminary lemma comparing $\qleqj$ and $\qjest$, which holds for any configuration.
Next, we incorporate the rounding into analyses of the optimal configuration.
We show that in the optimal configuration, $\qjbarest$ will be close to $1/j$, with the estimation error being large for small $j$ corresponding to low-utility buckets, and small for high-utility buckets with large $j$.
(Note that if the distributions were continuous, the true values of $\bucketprobopt$ would be $1/j$ for all $j$.)
Finally, we show that any other solution for which the rounded additive estimates $\qjbarest$ fall within these same bounds, the conditional bin probabilities $\qleqj$ are also close to $1/j$.
This will imply that tracking the rounded additive estimates $\qjbarest$ will achieve the desired goal of approximating the structure of the optimal configuration's distribution of agent utility.

The first task is a general analysis of the unrounded additive estimates.
That is, for an arbitrary configuration $\config$, we compare $\qleqj=\bucketConditionalsExpr{}{j}$ to the additive estimate $\qjest=\sum_{i=1}^n \bucketConditionalsExpr{i}{j}$, where $\utilAgent{}$ is the agent's utility from their favorite action under $\config$.
The following lemma gives the approximation:
\begin{lemma}\label{lemma:prob-estimate}
    For any configuration, the corresponding additive estimates satisfy:
    \begin{equation*}
    (1-\qleqj)\qjest \leq \qleqj \leq \qjest.
    \end{equation*}
\end{lemma}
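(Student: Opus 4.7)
The key observation is that because $u^A = \max_i u_i^A$ and the per-action utilities $u_i^A$ are independent, the event $\{u^A \in B_{\leq j}\}$ factors as the coordinate-wise event $\{\forall i : u_i^A \in B_{\leq j}\}$. Conditioning on this factored event, the $u_i^A$'s remain independent with marginal conditional probabilities $\Pr[u_i^A \in B_j \mid u_i^A \in B_{\leq j}] = \qleqij$; call these $P_i$. Inside this conditioning, the event $\{u^A \in B_j\}$ is precisely $\{\exists i : u_i^A \in B_j\}$. Combining these observations gives the clean product formula
\begin{equation*}
\qleqj \;=\; 1 - \prod_{i=1}^n (1 - P_i).
\end{equation*}
With this identity in hand, both inequalities reduce to elementary estimates on $\prod_i(1-P_i)$.

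\textbf{Upper bound.} The inequality $\qleqj \leq \qjest = \sum_i P_i$ follows from the union bound applied to the events $\{u_i^A \in B_j\}$ under the conditioning, or equivalently from the Weierstrass inequality $\prod_i(1 - P_i) \geq 1 - \sum_i P_i$.

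\textbf{Lower bound.} Apply $1 - x \leq e^{-x}$ to each factor to get $\prod_i (1 - P_i) \leq \exp(-\sum_i P_i) = e^{-\qjest}$. Since $e^{-s} \leq 1/(1+s)$ for all $s \geq 0$ (a rearrangement of $e^s \geq 1+s$), this yields
\begin{equation*}
\qleqj \;=\; 1 - \prod_i(1 - P_i) \;\geq\; 1 - \frac{1}{1 + \qjest} \;=\; \frac{\qjest}{1 + \qjest}.
\end{equation*}
Multiplying through by $1 + \qjest$ gives $\qleqj + \qleqj \cdot \qjest \geq \qjest$, which rearranges to the claimed $(1 - \qleqj)\qjest \leq \qleqj$.

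\textbf{Main obstacle.} There is no substantive obstacle once the product formula is identified; the only conceptual step is recognizing that conditioning on the max-based event $\{u^A \in B_{\leq j}\}$ preserves independence of the coordinates, because it is a Cartesian product event. Everything afterward is standard inclusion–exclusion / Weierstrass-style inequality chasing, and the $S/(1+S)$ form on the lower side is exactly what is needed to match the algebraic shape $(1-\qleqj)\qjest$.
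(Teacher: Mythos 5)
Your proof is correct, and the lower bound is reached by a genuinely different route than the paper's. Both arguments rest on the structural fact you identify: $\{u^A\in B_{\leq j}\}$ is a Cartesian product of coordinate-wise events, so conditioning on it preserves independence of the $u_i^A$ and gives $\Pr[u_i^A\in B_j\mid u^A\in B_{\leq j}]=\qleqij$. The upper bound is the union bound in both cases. For the lower bound, the paper stays probabilistic: it sums over $i$ the probabilities of the disjoint events ``$u_i^A\in B_j$ and no other action lands in $B_j$,'' each of which implies $u^A\in B_j$, and then relaxes $\Pr[\forall i'\neq i:\,u_{i'}^A\notin B_j\mid\cdot]$ down to $\Pr[u^A\notin B_j\mid\cdot]=1-\qleqj$, which produces $(1-\qleqj)\qjest$ directly. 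You instead write the exact identity $\qleqj=1-\prod_i(1-\qleqij)$ and apply $1-x\leq e^{-x}$ followed by $e^{-s}\leq 1/(1+s)$ to obtain $\qleqj\geq\qjest/(1+\qjest)$, which is algebraically equivalent to the stated bound (and is exactly the rearranged form the paper itself invokes later in Lemma~\ref{lem:esttotrue}). Your route has the advantage of recording an exact formula for $\qleqj$ that could yield sharper estimates if needed; the paper's route avoids the product entirely and would survive in settings where only a first-order Bonferroni bound is available. The two lower bounds coincide, so there is no gap.
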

\begin{proof}
    The rightmost inequality ($\qleqj \leq \qjest$) follows from the union bound.
    The leftmost inequality is proved by the following sequence of inequalities, explained after their statement:
        \begin{align}
        \qleqj & = \bucketConditionalsExpr{}{j}\notag\\
        & \geq \sum\nolimits_{i=1}^n \Pr[\utilAgent{i'} \notin \bucket_j, \forall i' \neq i~|~\utilAgent{} \in \bucket_{\leq j}]\Pr[\utilAgent{i} \in \bucket_j~|~\utilAgent{} \in \bucket_{\leq j}]\label{eq:probest1}\\
        &\geq \sum\nolimits_{i=1}^n \Pr[\utilAgent{} \notin \bucket_j~|~\utilAgent{} \in \bucket_{\leq j}]\Pr[\utilAgent{i} \in \bucket_j~|~\utilAgent{i} \in \bucket_{\leq j}]\label{eq:probest2}\\
        &= (1-\bucketConditionals_j) \sum\nolimits_{i=1}^n \Pr[\utilAgent{i} \in \bucket_j~|~\utilAgent{i} \in \bucket_{\leq j}]\notag\\
        &=(1-\qleqj)\qjest\notag.
    \end{align}
    Equation (\ref{eq:probest1}) follows from the fact that a sufficient condition for action $i$ to be chosen is that all other actions fall in bins with lower utility, and from the independence of the actions.
    Equation (\ref{eq:probest2}) follows from the fact that the event $\utilAgent{}\notin \bucket_j$ is contained in the event $\utilAgent{i'} \notin \bucket_j, \forall i' \neq i$.
\end{proof}

We now analyze the relationship between the rounded estimates $\qleqbarj$ and the true probability $\qleqj$ for the optimal configuration $\config^*$.
If the utility distributions were continuous, then $\qleqj$ would be $1/j$ for all $j$.
The following lemma shows that even with discrete distributions and rounding, the additive estimates are close.
In particular, we obtain low error when $j$ is high, while for constant $j$, the error might be large.

\begin{lemma}\label{lem:optprobs}
    The rounded additive estimates for the optimal configuration satisfy:
    \begin{equation}\label{eq:qbarestjbounds}
        \qjbarest\in \left[\frac{1-1/M}{j}-\frac{1}{M^2},\frac{1+1/M}{j-1}\right].
    \end{equation}
\end{lemma}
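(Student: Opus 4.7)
The plan is to chain the known bounds on $\qleqj$ through the general additive estimate of Lemma~\ref{lemma:prob-estimate} to obtain bounds on $\qjest$, and then to absorb the rounding error introduced by the definition of $\qleqbarij$ to arrive at the claimed interval for $\qjbarest$.

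First, I would instantiate Lemma~\ref{lemma:prob-estimate} with $\config = \config^*$, which gives the two-sided inequality $(1-\qleqj)\qjest \leq \qleqj \leq \qjest$. Rearranging yields $\qleqj \leq \qjest \leq \qleqj/(1-\qleqj)$, where the right-hand side is monotonically increasing in $\qleqj$ on the relevant range $\qleqj < 1$.

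Next, I would plug in the sharper of the two bounds from Lemma~\ref{lem:bins}, namely (\ref{eq:jinj}), which asserts $\qleqj \in [(1-1/M)/j,\, (1+1/M)/(j+1/M)]$. The lower bound on $\qleqj$ immediately gives $\qjest \geq (1-1/M)/j$. For the upper bound, I would substitute $\qleqj = (1+1/M)/(j+1/M)$ into $\qleqj/(1-\qleqj)$; the denominator simplifies to $(j-1)/(j+1/M)$, and after cancelling the common factor $(j+1/M)$ one obtains $\qjest \leq (1+1/M)/(j-1)$.

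Finally, I would account for rounding. By construction, each $\qleqbarij$ is obtained from $\qleqij$ by rounding down to the nearest multiple of $1/(M^2 n)$, so $\qleqij - 1/(M^2 n) \leq \qleqbarij \leq \qleqij$. Summing over the $n$ actions telescopes $n$ errors of size $1/(M^2 n)$ into a single error of at most $1/M^2$, giving $\qjest - 1/M^2 \leq \qjbarest \leq \qjest$. Combined with the bounds on $\qjest$ established in the previous paragraph, this yields exactly the interval in (\ref{eq:qbarestjbounds}). I do not anticipate any genuine obstacle; the argument is a short calculation combining three prior estimates. The only mildly delicate step is the algebraic simplification of $\qleqj/(1-\qleqj)$, which is where the slight loss from the sharper denominator $j+1/M$ in Lemma~\ref{lem:bins} down to $j-1$ in the statement comes from.
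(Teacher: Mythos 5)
Your proposal is correct and follows essentially the same route as the paper's proof: both apply Lemma~\ref{lemma:prob-estimate} to sandwich $\qjest$ between $\qleqj$ and $\qleqj/(1-\qleqj)$, plug in the tight bounds (\ref{eq:jinj}) from Lemma~\ref{lem:bins}, and absorb the $n$ rounding errors of size $1/(M^2n)$ into a single $1/M^2$ loss on the lower bound. The algebraic simplification of $\qleqj/(1-\qleqj)$ at $\qleqj=(1+1/M)/(j+1/M)$ is exactly the step the paper performs to reach $(1+1/M)/(j-1)$.
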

\begin{proof}
    We first lower bound $\qjbarest$ with the following sequence of inequalities, explained after their statement:
    \begin{align}
        \qjbarest&\geq \qjest-n\cdot \frac{1}{M^2n}\label{eq:optprobs1}\\
        &=\qjest-\frac{1}{M^2}\notag\\
        &\geq \qleqj-\frac{1}{M^2}\label{eq:optprobs2}\\
        &\geq \frac{1-1/M}{j}-\frac{1}{M^2}.\label{eq:optprobs3}
    \end{align}
Line (\ref{eq:optprobs1}) follows from the fact that there are $n$ actions and for each action $i$, $\qleqij-\qleqbarij\leq 1/(M^2n)$, and line (\ref{eq:optprobs2}) follows from Lemma~\ref{lemma:prob-estimate}.
Line (\ref{eq:optprobs3}) then follows from Lemma~\ref{lem:bins}.

Similarly, we may upper bound $\qjbarest$ as follows:
\begin{align}
    \qjbarest   & \leq \qjest\label{eq:optprobs4}\\
                & \leq \frac{\qleqj}{1-\qleqj}\label{eq:optprobs5}\\
                & \leq \frac{1+1/M}{j-1}\label{eq:optprobs6}.
\end{align}
Line (\ref{eq:optprobs4}) follows because the rounded estimates are obtained by rounding down.
Line (\ref{eq:optprobs5}) follows from Lemma~\ref{lemma:prob-estimate}.
Finally, line (\ref{eq:optprobs6}) from noting that the expression in (\ref{eq:optprobs5}) is increasing in $\qleqj$, plugging in the upper bound from Lemma~\ref{lem:bins}, and simplifying.
\end{proof}

Lemma~\ref{lem:optprobs} implies that there exists at least one configuration, namely $\config^*$, which induces $\qjbarest$ satisfying equation (\ref{eq:qbarestjbounds}).
The PTAS will optimize an estimate of the principal's expected utility over configurations satisfying these bounds.
We next show that any configuration satisfying these bounds also induces values of $\qleqj=\bucketConditionalsExpr{}{j}$ which are close to the $1/j$.

\begin{lemma}\label{lem:esttotrue}
    For any configuration $\config$ satisfying equation (\ref{eq:qbarestjbounds}) for all $j\in\{1,\ldots,M\}$, the resulting conditional probabilities satisfy:
    \begin{equation*}
        \qleqj\in \left[\frac{1-2/M}{j+1-2/M},\frac{1+1/M}{j-1}+1/M^2\right].
    \end{equation*}
\end{lemma}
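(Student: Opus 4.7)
The plan is to bound $\qleqj$ from above and below separately, using Lemma~\ref{lemma:prob-estimate} as the bridge between $\qleqj$ and the unrounded additive estimate $\qjest$, and the rounding relation $\qjbarest \leq \qjest \leq \qjbarest + 1/M^2$ (the right inequality holds because each of the $n$ terms $\qleqij$ was rounded down by at most $1/(M^2 n)$) to connect $\qjest$ to the hypothesized bounds on $\qjbarest$.

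For the upper bound, Lemma~\ref{lemma:prob-estimate} gives $\qleqj \leq \qjest$. Combining this with $\qjest \leq \qjbarest + 1/M^2$ and the assumed upper bound $\qjbarest \leq (1+1/M)/(j-1)$ immediately yields $\qleqj \leq (1+1/M)/(j-1) + 1/M^2$, which is exactly the desired upper bound.

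For the lower bound, I first rewrite the inequality $(1-\qleqj)\qjest \leq \qleqj$ from Lemma~\ref{lemma:prob-estimate} as $\qjest \leq \qleqj/(1-\qleqj)$, and observe that the target inequality $\qleqj \geq (1-2/M)/(j+1-2/M)$ is equivalent, after the same rearrangement, to $\qleqj/(1-\qleqj) \geq (1-2/M)/j$. It therefore suffices to show $\qjest \geq (1-2/M)/j$. Using $\qjest \geq \qjbarest$ together with the hypothesized lower bound $\qjbarest \geq (1-1/M)/j - 1/M^2$, this reduces to the arithmetic check
\begin{equation*}
    \frac{1-1/M}{j} - \frac{1}{M^2} \;\geq\; \frac{1-2/M}{j},
\end{equation*}
which simplifies to $1/(jM) \geq 1/M^2$, and this holds because $j \leq M$.

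The only nontrivial step is the lower bound's algebraic trick: recognizing that both the hypothesis on $\qjbarest$ and the target bound on $\qleqj$ can be put in the common form ``$x/(1-x) \geq a/j$'' via Lemma~\ref{lemma:prob-estimate}, so that once the bounds are translated through this transformation, the inequality collapses to a one-line comparison using $j \leq M$. The remainder is direct bookkeeping with the two estimates.
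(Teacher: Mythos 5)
Your proof is correct and follows essentially the same route as the paper: the upper bound chain $\qleqj \leq \qjest \leq \qjbarest + 1/M^2$ is identical, and your lower bound is the paper's argument reorganized through the monotone map $x\mapsto x/(1-x)$ (the paper instead pushes the estimate through $x\mapsto x/(1+x)$ and then does the arithmetic), which collapses the final comparison to the same $1/(jM)\geq 1/M^2$ check.
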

\begin{proof}
    We first lower bound $\qleqj$.
    The following sequence of inequalities is explained after its statement:
    \begin{align}
        \qleqj  &\geq \frac{\qjest}{1+\qjest}\label{eq:esttotrue1}\\
                &\geq \frac{\qjbarest}{1+\qjbarest}\label{eq:esttotrue2}\\
                &\geq \frac{\frac{1-1/M}{j}-1/M^2}{1+\frac{1-1/M}{j}-1/M^2}\label{eq:esttotrue3}\\
                &= \frac{1-1/M-j/M^2}{j+1-1/M-j/M^2}\notag\\
                &\geq \frac{1-2/M}{j+1-2/M}.\label{eq:esttotrue4}
    \end{align}
    Line (\ref{eq:esttotrue1}) follows from rearranging the bound from Lemma~\ref{lemma:prob-estimate}.
    Line (\ref{eq:esttotrue2}) holds because the right-hand side of line (\ref{eq:esttotrue1}) is an increasing function of $\qjest$, and so we can obtain a lower bound by noting that $\qjbarest\leq \qjest$.
    Line (\ref{eq:esttotrue3}) then follows from plugging in the lower bound from equation (\ref{eq:qbarestjbounds}).
    Finally, line (\ref{eq:esttotrue4}) holds because $j\leq M$.
    
    Next, we can upper bound $\qleqj$.
    Again, we explain the inequalities after their statement.
    \begin{align}
        \qleqj  &\leq \qjest\label{eq:esttotrue5}\\
                &\leq \qjbarest + 1/M^2\label{eq:esttotrue6}\\
                &\leq \frac{1+1/M}{j-1}+1/M^2.\notag
    \end{align}
    Line (\ref{eq:esttotrue5}) follows from Lemma~\ref{lemma:prob-estimate}. Line (\ref{eq:esttotrue6}) follows from the fact that $\qjest\leq \qjbarest+1/M^2$ due to the rounding increment. 
    This implies the lemma.
\end{proof}

\subsection{Utility Estimates}\label{sec:main-util-estimates}

The previous section analyzed additive estimates of conditional bin probabilities, which will form the constraints of the dynamic program at the heart of the PTAS.
The next step is to define and analyze additive estimates of the principal's utility.
Specifically, given an arbitrary configuration $\config$, we consider the conditional contributions to the principal's utility defined as $$\uleqj=\bucketExpectationExpr{}{j} \bucketConditionalsExpr{}{j}.$$
We will approximate this with the additive estimate $$\uestj=\sum_{i=1}^n\bucketExpectationExpr{i}{j} \bucketConditionalsExpr{i}{j}.$$
To analyze the PTAS, we will combine these bin-wise estimates into a complete estimate of the principal's expected utility.

The main lemma of this section gives a general approximation relationship which depends only on the conditional bin probability $\qleqj=\bucketConditionalsExpr{}{j}$, stated below.
\begin{lemma}\label{lem:utilapprox}
    For any configuration inducing conditional bin probabilities $\qleqj=\bucketConditionalsExpr{}{j}$, the following holds:
    \begin{equation*}
        (1-\qleqj)\uestj\leq\uleqj\leq \uestj.
    \end{equation*}
\end{lemma}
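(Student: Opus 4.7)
The approach is to mirror the proof of Lemma~\ref{lemma:prob-estimate} nearly verbatim, inserting the nonnegative principal utilities $\utilPrincipal{i}$ as weights on each action's contribution. Two observations are central. First, rewriting $\uleqj = \E[\utilPrincipal{} \, \I[\utilAgent{} \in \bucket_j] \mid \utilAgent{} \in \bucket_{\leq j}]$ and decomposing by the chosen action $\agentPreferredOption$ lets me compare $\uleqj$ to $\uestj$ term by term. Second, because the maximum is coordinatewise, $\{\utilAgent{} \in \bucket_{\leq j}\} = \bigcap_i \{\utilAgent{i} \in \bucket_{\leq j}\}$ is a product event; conditioning on it preserves independence across actions and factors the per-action distributions into their respective conditionings on $\utilAgent{i} \in \bucket_{\leq j}$.

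For the upper bound $\uleqj \leq \uestj$, I would decompose by $\agentPreferredOption$ to get $\uleqj = \sum_{i} \E[\utilPrincipal{i} \, \I[\agentPreferredOption = i,\, \utilAgent{i} \in \bucket_j] \mid \utilAgent{} \in \bucket_{\leq j}]$, and then use that the event $\{\agentPreferredOption = i,\, \utilAgent{i} \in \bucket_j\}$ is contained in $\{\utilAgent{i} \in \bucket_j\}$ together with $\utilPrincipal{i} \geq 0$ to drop the $\agentPreferredOption = i$ clause. Applying independence under the factored conditioning collapses each summand to $\E[\utilPrincipal{i} \mid \utilAgent{i} \in \bucket_j]\Pr[\utilAgent{i} \in \bucket_j \mid \utilAgent{i} \in \bucket_{\leq j}]$, which sums to exactly $\uestj$.

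For the lower bound $(1-\qleqj)\uestj \leq \uleqj$, I would underestimate each summand by using a sufficient condition for the chosen action: if $\utilAgent{i} \in \bucket_j$ and no other $\utilAgent{i'}$ lies in $\bucket_j$, then under $\utilAgent{} \in \bucket_{\leq j}$ every other action must lie in $\bucket_{\leq j-1}$ and hence strictly below $\utilAgent{i}$, forcing $\agentPreferredOption = i$. Restricting each summand to this smaller event, factoring by independence, and invoking the same inequality used in Lemma~\ref{lemma:prob-estimate}, namely $\Pr[\utilAgent{i'} \notin \bucket_j \text{ for all } i' \neq i \mid \utilAgent{} \in \bucket_{\leq j}] \geq \Pr[\utilAgent{} \notin \bucket_j \mid \utilAgent{} \in \bucket_{\leq j}] = 1 - \qleqj$, lets me pull the factor $(1-\qleqj)$ outside the sum over $i$ and recover $(1-\qleqj)\uestj$. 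The main subtlety, as in the probability lemma, is keeping the independence bookkeeping clean, but the product-event structure of $\{\utilAgent{} \in \bucket_{\leq j}\}$ makes this step routine; the only genuinely new ingredient here relative to Lemma~\ref{lemma:prob-estimate} is the nonnegativity of $\utilPrincipal{i}$, which is what allows monotonicity in the event to be transferred from probabilities to expectations.
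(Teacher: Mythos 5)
Your proposal matches the paper's proof essentially step for step: the same decomposition by the chosen action, the same containment argument (dropping the $\agentPreferredOption=i$ clause using nonnegativity of $\utilPrincipal{i}$) for the upper bound, and the same sufficient-condition event plus the bound $\Pr[\utilAgent{i'}\notin \bucket_j~\forall i'\neq i\mid \utilAgent{}\in \bucket_{\leq j}]\geq 1-\qleqj$ for the lower bound. The argument is correct and takes the same route as the paper.
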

\begin{proof}
    Let $i^*$ be a random variable denoting the agent's favorite action in $\config$.
    We use the $\I[A]$ to denote the indicator that event $A$ occurs. 
    Spelling out the dependence of the principal's utility on $i^*$, the upper bound follows from the reasoning below:
    \begin{align}
        \uleqj  &= \bucketExpectationExpr{i^*}{j}\bucketConditionalsExpr{i^*}{j}\label{eq:utilub1}\\
                &=\E[u^P_{i^*}\mathbb I[u^A_{i^*}\in B_j]~|~u^A_{i^*}\in B_{\leq j}]\label{eq:utilub2}\\
                &\leq \sum\nolimits_{i=1}^n \E[u^P_i\mathbb I[u^A_i\in 
                B_j]~|~u^A_{i^*}\in B_{\leq j}]\label{eq:utilub3}\\
                &\leq \sum\nolimits_{i=1}^n \E[u^P_i\mathbb I[u^A_i\in 
                B_j]~|~u_1^A\in B_{\leq j},\ldots,u_n^A\in B_{\leq j}]\label{eq:utilub4}\\
                &=\sum\nolimits_{i=1}^n \E[u^P_i\mathbb I[u^A_i\in 
                B_j]~|~u_i^A\in B_{\leq j}]\label{eq:utilub5}\\
                &=\sum\nolimits_{i=1}^n \bucketExpectationExpr{i}{j} \bucketConditionalsExpr{i}{j}\label{eq:utilub6}\\
                &=\uestj\notag.
    \end{align}
    Line (\ref{eq:utilub1}) follows from the definition of $\uleqj$. Lines (\ref{eq:utilub2}) and (\ref{eq:utilub6}) follow from basic probability.
    Line (\ref{eq:utilub3}) follows from the fact that the previous line counts only contributions from $i^*$, instead of those from all actions.
    Line (\ref{eq:utilub4}) holds by the definition of $i^*$, and line (\ref{eq:utilub5}) from the independence of the actions.
    This proves the upper bound.

    Meanwhile, the lower bound follows from the following analysis:
    \begin{align}
        \uleqj&=\bucketExpectationExpr{i^*}{j} \bucketConditionalsExpr{i^*}{j}\notag\\
        &\geq \sum\nolimits_{i=1}^n \E[\utilPrincipal{i}~|~\utilAgent{i} \in \bucket_j]\Pr[\utilAgent{i'} \notin \bucket_j, \forall i' \neq i~|~\utilAgent{\agentPreferredOption}\in \bucket_{\leq j}] \Pr[\utilAgent{i} \in \bucket_j~|~\utilAgent{\agentPreferredOption} \in \bucket_{\leq j}]\label{eq:utillb1}\\
        &\geq \sum\nolimits_{i=1}^n \E[\utilPrincipal{i}~|~\utilAgent{i} \in \bucket_j]\Pr[\utilAgent{\agentPreferredOption} \notin \bucket_j~|~\utilAgent{\agentPreferredOption} \in \bucket_{\leq j}]\Pr[\utilAgent{i} \in \bucket_j~|~\utilAgent{\agentPreferredOption} \in \bucket_{\leq j}]\label{eq:utillb2}\\
        &= \sum\nolimits_{i=1}^n \E[\utilPrincipal{i}~|~\utilAgent{i} \in \bucket_j]\Pr[\utilAgent{\agentPreferredOption} \notin \bucket_j~|~\utilAgent{\agentPreferredOption} \in \bucket_{\leq j}]\Pr[\utilAgent{i} \in \bucket_j~|~\utilAgent{i} \in \bucket_{\leq j}]\label{eq:utillb3}\\
        &= (1-\bucketConditionals_j) \sum\nolimits_{i=1}^n\E[\utilPrincipal{i}~|~\utilAgent{i} \in \bucket_j]\Pr[\utilAgent{i} \in \bucket_j~|~\utilAgent{i} \in \bucket_{\leq j}]\label{eq:utillb4}\\
        &=(1-\qleqj)\uestj.\notag
    \end{align}
    Line (\ref{eq:utillb1}) follows from the fact that a sufficient condition for action $i$ to be selected is that $u_i^A\in B_j$ while all other actions $i'$ have $u_{i'}^A\in B_{\leq j-1}$, along with independence across actions.
    Line (\ref{eq:utillb2}) comes from the fact that $\Pr[\utilAgent{i'} \notin \bucket_j, \forall i' \neq i~|~\utilAgent{\agentPreferredOption}\in \bucket_{\leq j}]\geq \Pr[\utilAgent{\agentPreferredOption} \notin \bucket_j~|~\utilAgent{\agentPreferredOption} \in \bucket_{\leq j}]$ by the definition of $i^*$.
    Line (\ref{eq:utillb3}) follows from independence across actions.
    Finally, line (\ref{eq:utillb4}) follows from the definition of $\qleqj$.
    This proves the lower bound and thus the lemma.
\end{proof}

\subsection{Approximate Objective Function} \label{sec:main-objective}

We now define the objective function for the dynamic program at the heart of the PTAS and analyze its relationship to the true objective.
To define the objective function, we consider the following weighted sum of the utility estimates from Section~\ref{sec:main-util-estimates}:
\begin{equation*}
    \objective(\config)=\sum_{j=6}^M \coef_j\uestj,
\end{equation*}
where $\uestj$ implicitly depends on $\config$ for all $j$, and where $\coef_j=\frac{j-5}{M-1}$.
We give two analyses of $\objective$.
First, we show that for any configuration $\config$ where $\qleqj$ satisfies the bounds of Lemma~\ref{lem:esttotrue} for all $j\in\{1,\ldots,M\}$, $\objective(\config)$ is a lower bound on the principal's true utility under that same configuration.
This will serve to lower-bound the objective value of the configurations computed by the PTAS.
We present this analysis in Section~\ref{sec:main-lbalg}.
Second, we show that for the {\em optimal configuration $\config^*$ in particular}, the approximate objective $\objective(\config^*)$ is also an approximate upper bound.
This step relies critically on the utility alignment property (and is the only place in the analysis where the property comes into play).
We give this analysis in Section~\ref{sec:main-ubopt}.
After proving these two lemmas, the stage will be set for the rigorous description of the full PTAS, as well as the final pieces of its analysis, presented in Section~\ref{sec:main-alg}.

\subsubsection{Lower Bounding PTAS Solutions}
\label{sec:main-lbalg}

This section analyzes configurations such that for all $j$, $\qleqj$ satisfies the bounds of Lemma~\ref{lem:esttotrue}, restated below
\begin{equation}\label{eq:esttotruerestate}
    \qleqj\in \left[\frac{1-2/M}{j+1-2/M},\frac{1+1/M}{j-1}+1/M^2\right].
\end{equation}
The result is the following:
\begin{lemma}\label{lem:alglb}
    Let $\config$ be a configuration such that $\qleqj$ satisfies the bounds of Lemma~\ref{lem:esttotrue}. Then the principal's expected utility from $\config$ is at least $\objective(\config)$.
\end{lemma}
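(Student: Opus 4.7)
The plan is to decompose $\E[u^P]$ into bin-wise contributions, apply Lemma~\ref{lem:utilapprox} to replace each $\uleqj$ with a lower bound in terms of $\uestj$, and then show that the resulting multiplicative weight on $\uestj$ is at least $\coef_j$ for every $j \in \{6,\ldots,M\}$. The proof is essentially algebraic; the only real design choice is picking the right intermediate bound on $q_k^{\leq}$ so that a telescoping product delivers exactly $\coef_j = (j-5)/(M-1)$.

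To set up, I would use the identity $\uleqj \cdot \Pr[u^A \in B_{\leq j}] = \E[u^P\,\mathbb{I}[u^A \in B_j]]$ to write $\E[u^P] = \sum_{j=1}^M \uleqj \cdot \Pr[u^A \in B_{\leq j}]$. Setting $a_j = \Pr[u^A \in B_{\leq j}]$, the definition of $\qleqj$ gives the recursion $a_{j-1} = (1-\qleqj)\,a_j$, and since $a_M = 1$ we obtain the closed form $a_j = \prod_{k=j+1}^M (1-q_k^{\leq})$. Plugging this in together with the lower bound $\uleqj \geq (1-\qleqj)\uestj$ from Lemma~\ref{lem:utilapprox} yields
\begin{equation*}
    \E[u^P] \;\geq\; \sum_{j=1}^M \uestj \prod_{k=j}^M (1-q_k^{\leq}).
\end{equation*}
Because principal utilities are nonnegative we have $\uestj \geq 0$ for every $j$, so it suffices to show the pointwise bound $\prod_{k=j}^M (1-q_k^{\leq}) \geq \coef_j$ for each $j \in \{6,\ldots,M\}$; the terms $j<6$ only help.

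For this bound I would establish the auxiliary inequality $q_k^{\leq} \leq 1/(k-4)$ for every $k \in \{6,\ldots,M\}$, which gives $1-q_k^{\leq} \geq (k-5)/(k-4)$ and hence the telescoping estimate
\begin{equation*}
    \prod_{k=j}^M (1-q_k^{\leq}) \;\geq\; \prod_{k=j}^M \frac{k-5}{k-4} \;=\; \frac{j-5}{M-4} \;\geq\; \frac{j-5}{M-1} \;=\; \coef_j.
\end{equation*}
The main (and only nontrivial) obstacle is verifying $q_k^{\leq} \leq 1/(k-4)$. Starting from Lemma~\ref{lem:esttotrue}'s upper bound $q_k^{\leq} \leq (1+1/M)/(k-1) + 1/M^2$ and using the identity $1/(k-4) - 1/(k-1) = 3/((k-1)(k-4))$, the desired inequality reduces after multiplying through by $(k-1)(k-4)$ to $(k-4)/M + (k-1)(k-4)/M^2 \leq 3$, which is immediate since $k \leq M$ makes each summand at most $1$. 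Conceptually, the gap between the naive telescoping target $j-2$ (arising from the idealized bound $q_k^{\leq} \leq 1/(k-1)$, giving $\prod_k (k-2)/(k-1) = (j-2)/(M-1)$) and the exponent $j-5$ in $\coef_j$ is precisely the buffer needed to absorb the $1/M$ and $1/M^2$ additive errors carried through Lemma~\ref{lem:esttotrue}; this is what motivates starting $\objective$ at $j=6$.
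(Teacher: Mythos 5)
Your proposal is correct, and its skeleton matches the paper's: the same decomposition of $\E[u^P]$ into bin-wise terms (your $a_j=\Pr[u^A\in B_{\leq j}]=\prod_{k>j}(1-q_k^{\leq})$ is exactly the paper's Lemma~\ref{lem:binprobs}, derived via the same recursion), the same application of Lemma~\ref{lem:utilapprox}, and the same reduction to the pointwise product bound $\prod_{k\geq j}(1-q_k^{\leq})\geq \coef_j$. Where you genuinely diverge is in proving that product bound (the paper's Lemma~\ref{lem:product}). The paper writes $1-q_k^{\leq}\geq \frac{k-2-2/M}{k-1}$, rearranges the product into $\frac{j-2-2/M}{M-1}\prod_k(1-\tfrac{2}{Mk})$, and then controls the residual product via a union bound, harmonic-number estimates $H_k\in[\log k,\log k+1]$, and the inequality $x\log x+x\leq 1$ --- roughly a page of algebra. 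You instead absorb all the additive error into the single clean bound $q_k^{\leq}\leq 1/(k-4)$, verified by the elementary computation $(k-4)/M+(k-1)(k-4)/M^2\leq 2\leq 3$, after which the product telescopes exactly to $(j-5)/(M-4)\geq (j-5)/(M-1)$. I checked the arithmetic and it is sound for $k\in\{6,\ldots,M\}$ (where $k-4\geq 2$, so multiplying through preserves the inequality). Your route is shorter and arguably more transparent about why the objective starts at $j=6$: the slack between the idealized exponent $j-2$ and the actual $j-5$ is exactly the budget consumed by the $1/M$ and $1/M^2$ error terms, a point the paper's harmonic-sum argument obscures.
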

\begin{proof}
    We lower bound the principal's expected utility by the following sequence of relations, justified after their statement:
    \begin{align}
        \principalExpectation{} &=\sum\nolimits_{j=1}^M\E[u^P~|~u^A\in \bucket_j]\Pr[u^A\in \bucket_j]\label{eq:alglb1}\\
        &\geq\sum\nolimits_{j=6}^M\E[u^P~|~u^A\in \bucket_j]\Pr[u^A\in \bucket_j]\label{eq:alglb1.5}\\
        &=\sum\nolimits_{j=6}^M\E[u^P~|~u^A\in \bucket_j]\Pr[u^A\in \bucket_j~|~u^A\in \bucket_{\leq j}]\frac{\Pr[u^A\in \bucket_j]}{\Pr[u^A\in \bucket_j~|~u^A\in \bucket_{\leq j}]}\notag\\
        &=\sum\nolimits_{j=6}^M\uleqj\frac{\Pr[u^A\in \bucket_j]}{\qleqj}\label{eq:alglb2}\\
        &\geq \sum\nolimits_{j=6}^M(1-\qleqj)\uestj\frac{\Pr[u^A\in \bucket_j]}{\qleqj}\label{eq:alglb3}\\
        &\geq \sum\nolimits_{j=6}^M(1-\qleqj)\uestj\frac{\qleqj\prod_{k>j}(1-q_k^\leq)}{\qleqj}\label{eq:alglb4}\\
        &=\sum\nolimits_{j=6}^M\prod_{k\geq j}(1-q_k^\leq)\uestj\notag\\
        &\geq \sum\nolimits_{j=6}^M\coef_j\uestj\label{eq:alglb5}\\
        &=\objective(\config).\notag
    \end{align}
    Line (\ref{eq:alglb1}) follows from the law of total expectation, and line (\ref{eq:alglb1.5}) from the fact that the principal's utilities are nonnegative. Line (\ref{eq:alglb2}) follows from the definitions of $\qleqj$ and $\uleqj$.
    Line (\ref{eq:alglb3}) holds because of Lemma~\ref{lem:utilapprox}. 
    Line (\ref{eq:alglb4}) follows by Lemma~\ref{lem:binprobs}, proved below, which states that  $\Pr[u^A\in \bucket_j]=\qleqj\prod_{k>j}(1-q_k^\leq)$.
    Finally, Line (\ref{eq:alglb5}) holds because of Lemma~\ref{lem:product}, also proved below, which states that when $\qleqj$ satisfies (\ref{eq:esttotruerestate}) for all $j$, $\prod_{k\geq j}(1-q_k^\leq)\geq \frac{j-5}{M-1}=\coef_j$ for $j\geq 6$.
\end{proof}

We conclude by stating and proving the technical lemmas required for Lemma~\ref{lem:alglb}.
\begin{lemma}\label{lem:binprobs}
    For all $j\in\{1,\ldots,M\}$, the bin probabilities satisfy:
    \begin{equation*}
        \Pr[u^A\in \bucket_j]=\qleqj\prod_{k>j}(1-q_k^\leq)=\bucketConditionalsExpr{}{j}\prod_{k>j}\Pr[u^A\in\bucket_{\leq k-1}~|~u^A\in\bucket_{\leq k}].
    \end{equation*}
\end{lemma}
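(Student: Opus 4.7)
The identity is essentially a telescoping consequence of the chain rule for conditional probability, combined with the fact that the bins $B_1,\ldots,B_M$ partition the support of $u^A$. The plan is as follows.

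First, I would split $\Pr[u^A \in B_j]$ using the definition of conditional probability:
\begin{equation*}
\Pr[u^A\in B_j] \;=\; \Pr[u^A\in B_j \mid u^A\in B_{\leq j}]\cdot\Pr[u^A\in B_{\leq j}] \;=\; \qleqj \cdot \Pr[u^A\in B_{\leq j}].
\end{equation*}
This uses only that $B_j\subseteq B_{\leq j}$ by construction, so conditioning on $B_{\leq j}$ does not exclude $B_j$.

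Next, I would expand $\Pr[u^A\in B_{\leq j}]$ as a telescoping product. Since $B_{\leq k-1}\subseteq B_{\leq k}$, we have
\begin{equation*}
\Pr[u^A\in B_{\leq k-1}] \;=\; \Pr[u^A\in B_{\leq k-1}\mid u^A\in B_{\leq k}]\cdot\Pr[u^A\in B_{\leq k}] \;=\; (1-\bucketProbabilities_k^{\leq})\cdot\Pr[u^A\in B_{\leq k}],
\end{equation*}
where the last equality uses that $B_{\leq k}\setminus B_{\leq k-1}=B_k$, so $\Pr[u^A\in B_{\leq k-1}\mid u^A\in B_{\leq k}] = 1 - \Pr[u^A\in B_k\mid u^A\in B_{\leq k}] = 1-\bucketProbabilities_k^{\leq}$. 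Iterating this from $k=M$ down to $k=j+1$ gives
\begin{equation*}
\Pr[u^A\in B_{\leq j}] \;=\; \Pr[u^A\in B_{\leq M}]\cdot\prod_{k=j+1}^{M}(1-\bucketProbabilities_k^{\leq}) \;=\; \prod_{k>j}(1-\bucketProbabilities_k^{\leq}),
\end{equation*}
since $\Pr[u^A\in B_{\leq M}]=1$ by Definition~\ref{def:bins} (the bins partition the entire support). Substituting back yields the claimed identity, and noting that $1-\bucketProbabilities_k^{\leq}=\Pr[u^A\in B_{\leq k-1}\mid u^A\in B_{\leq k}]$ recovers the second form.

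There is no real obstacle here; the lemma is purely a bookkeeping identity, and the only things one must verify are (i) $B_j\subseteq B_{\leq j}$ and (ii) $B_{\leq M}$ covers the full support of $u^A$, both of which are immediate from Definition~\ref{def:bins}. The slight subtlety to flag is that the telescoping is well-defined only because each denominator $\Pr[u^A\in B_{\leq k}]$ is strictly positive for $k\geq 1$, which holds after preprocessing since Lemma~\ref{lem:even} guarantees $\Pr[u^A\in B_k]\geq 1/M - 1/M^2 > 0$ for every $k$.
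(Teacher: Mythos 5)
Your proof is correct and is essentially the same argument as the paper's: the paper phrases the telescoping chain of conditional probabilities as a sequential realization process (check $B_M$, then $B_{M-1}$ conditioned on failure, and so on down to $B_j$), which yields exactly your identity $\Pr[u^A\in B_{\leq j}]=\prod_{k>j}(1-q_k^\leq)$ followed by the final factor $\qleqj$. Your explicit flagging of the nonzero denominators and of $\Pr[u^A\in B_{\leq M}]=1$ is a harmless bit of extra care that the paper leaves implicit.
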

\begin{proof}
    The right-hand expression follows from the middle one by the definition of the $\qleqj$s.
    To see that the right-hand expression is equal to $\Pr[u^A\in B_j]$, we use the law of iterated expectation. 
    Consider the following sequential process for realizing $u^A$: first determine whether $u^A\in B_M$, which occurs with probability equal to $\Pr[u^A\in B_M]=\Pr[u^A\in B_M~|~u^A\in B{\leq M}]=q_M^\leq$.
    If this does not occur, then determine whether $u^A\in B_{M-1}$, which, conditioned on reaching this point, occurs with probability $\Pr[u^A\in B_{M-1}~|~u^A\in B_{\leq M-1}]=q_{M-1}^\leq$, and so on for each box from $M-2$ down to $j$.
    For $u^A\in B_j$, then, $u^A$ must have failed to lie in $B_k$ for all $k>j$, which happens with overall probability $\prod_{k>j}(1-q_k^\leq)$.
    The probability of finally falling in $B_j$ given these previous failures is then $\Pr[u^A\in B_j~|~u^A\in B_{\leq j}]=\qleqj$, for an overall probability of $\qleqj\prod_{k>j}(1-q_k^\leq)$.
\end{proof}

\begin{lemma}\label{lem:product}
    For any configuration inducing $\qleqj$s satisfying (\ref{eq:esttotruerestate}) for all $j\in\{1,\ldots,M\}$, it holds that $\prod_{k\geq j}(1-q_k^\leq)\geq \frac{j-5}{M-1}$ for $j\geq 6$.
\end{lemma}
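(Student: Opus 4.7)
The plan is to substitute the upper bound from~(\ref{eq:esttotruerestate}) into each factor of the product, massage the resulting expression to isolate a telescoping piece with a small multiplicative correction, and then control the correction using the elementary inequality $\prod_k(1-x_k)\geq 1-\sum_k x_k$ for $x_k\in[0,1]$.

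First I would use $q_k^\leq \leq \frac{1+1/M}{k-1}+\frac{1}{M^2}$ to write $1-q_k^\leq \geq 1-\frac{1}{k-1}-\frac{1}{M(k-1)}-\frac{1}{M^2}$. The two lower-order error terms can be absorbed into a single expression: since $k\leq M$ gives $\frac{1}{M^2}\leq\frac{1}{M(k-1)}$, the right-hand side is at least $\frac{k-2-2/M}{k-1}$. The key algebraic move is then to factor
\begin{equation*}
\frac{k-2-2/M}{k-1} \;=\; \frac{k-2}{k-1}\cdot\left(1-\frac{2}{M(k-2)}\right),
\end{equation*}
separating a clean telescoping factor from a small correction.

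Taking the product from $k=j$ to $M$, the telescoping part yields $\prod_{k=j}^M \frac{k-2}{k-1}=\frac{j-2}{M-1}$. For the correction, applying $\prod(1-x_k)\geq 1-\sum x_k$ and then the crude estimate $\sum_{l=j-2}^{M-2}\frac{1}{l}\leq \frac{M-j+1}{j-2}$ (bounding each harmonic term by its largest and using $M-j+1\leq M$), I get
\begin{equation*}
\prod_{k=j}^M\left(1-\frac{2}{M(k-2)}\right) \geq 1-\frac{2(M-j+1)}{M(j-2)} \geq 1-\frac{2}{j-2} = \frac{j-4}{j-2}.
\end{equation*}
Multiplying the two bounds gives $\prod_{k\geq j}(1-q_k^\leq) \geq \frac{j-2}{M-1}\cdot\frac{j-4}{j-2} = \frac{j-4}{M-1} \geq \frac{j-5}{M-1}$, with one unit of slack to spare.

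The main obstacle I anticipate is the first step: the hypothesis involves two distinct lower-order error terms, and they must be combined into a single correction whose factorization leaves a product with $k-2$ rather than $k-1$ in the denominator, so that the clean telescoping factor $\frac{j-2}{M-1}$ emerges. Once that is arranged, everything reduces to routine telescoping and a trivial harmonic-sum bound, and the ``$-5$'' in the target $\frac{j-5}{M-1}$ provides just enough slack to absorb all the accumulated estimation error.
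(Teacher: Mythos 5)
Your proof is correct and follows essentially the same route as the paper's: both reduce each factor to $\frac{k-2-2/M}{k-1}$, split off the telescoping product $\prod_k \frac{k-2}{k-1}=\frac{j-2}{M-1}$, and control the multiplicative correction via $\prod_k(1-x_k)\geq 1-\sum_k x_k$ applied to a harmonic-type sum. The only difference is that you bound that sum by (number of terms)$\times$(largest term) rather than via logarithmic estimates on harmonic numbers as the paper does, which is simpler and even leaves you with $\frac{j-4}{M-1}$, one unit more slack than required.
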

\begin{proof}
	At a high level, the result will follow from the fact that $q_k^{\leq}\approx 1-1/k=(k-1)/k$ for all $k\geq j$.
	Were this to hold exactly, then the product would telescope, yielding $(j-1)/M$.
	The crux of the analysis, then, is showing that multiplying the errors introduced in the $q_k^\leq$s does not cause significant loss.
	Formally, the lemma follows from the inequalities below, explained after their statement.
	\begin{align}
		\prod_{k\geq j}(1-q_k^\leq)&\geq \prod_{k\geq j}\left(1- \frac{1+1/M}{k-1}-\frac{1}{M^2}\right)\label{eq:prod1}\\
		&= \prod_{k\geq j}\left(\frac{k-2-1/M}{k-1}-\frac{1}{M^2}\right)\notag\\
		&\geq \prod_{k\geq j}\left(\frac{k-2-1/M}{k-1}-\frac{1/M}{k-1}\right)\label{eq:prod2}\\
		&= \prod_{k\geq j}\frac{k-2-2/M}{k-1}\notag\\
		&=\frac{j-2-2/M}{j-1}\cdot\frac{j-1-2/M}{j}\cdot\ldots\cdot\frac{M-3-2/M}{M-2} \cdot \frac{M-2-2/M}{M-1}\notag\\
		&=\frac{j-2-2/M}{M-1}\prod_{k=j-1}^{M-2}\left(\frac{k-2/M}{k}\right)\label{eq:prod3}\\
		&=\frac{j-2-2/M}{M-1}\prod_{k=j-1}^{M-2}\left(1-\frac{2}{Mk}\right)\notag\\
		&\geq \frac{j-2-2/M}{M-1}\left(1-\sum_{k=j-1}^{M-2}\frac{2}{Mk}\right)\notag\\
		&=\frac{j-2-2/M}{M-1}\left(1-\frac{2}{M}(H_{M-2}-H_{j-2})\right)\notag\\
		&\geq \frac{j-2-2/M}{M-1}\left(1-\frac{2}{M}(\log(M-2)+1-\log(j-2))\right)\label{eq:prod4}\\
		&=\frac{j-2-2/M}{M-1}-\frac{j-2-2/M}{M-1}\cdot\frac{2}{M}\left(\log\left(\frac{M-2}{j-2}\right)+1\right)\notag\\
		&\geq\frac{j-2-2/M}{M-1}-\frac{2}{M-1}\cdot\frac{j-2}{M-2}\left(\log\left(\frac{M-2}{j-2}\right)+1\right)\label{eq:prod5}\\
		&=\frac{j-2-2/M}{M-1}-\frac{2}{M-1}\left(\frac{j-2}{M-2}\log\left(\frac{M-2}{j-2}\right)+\frac{j-2}{M-2}\right)\notag\\
		&\geq\frac{j-2-2/M}{M-1}-\frac{2}{M-1}\label{eq:prod6}\\
		&\geq\frac{j-5}{M-1}.\label{eq:prod7}
	\end{align}
	Line (\ref{eq:prod1}) follows from the lower bound (\ref{eq:esttotruerestate}).
	Line (\ref{eq:prod2}) follows from the fact that $k-1\leq M$.
	Line (\ref{eq:prod3}) follows from rearranging the numerators and denominators of the expanded product on the previous line.
	Line (\ref{eq:prod4}) holds because $H_k\in[\log n,\log n + 1]$.
	Line (\ref{eq:prod5}) follows from rearranging the fractions in the second term, and noting that $j-2-2/M\leq j-2$ on top while $M-2\leq M$ on bottom.
	Finally, line (\ref{eq:prod6}) follows because $x\log x+x\leq 1$ for all $x\in[0,1]$, and line (\ref{eq:prod7}) because $M\geq 2$.
	All other lines follow from algebra or well-known inequalities.
\end{proof}

\subsubsection{Upper Bounding Optimal Solutions}
\label{sec:main-ubopt}

We now analyze the relationship between the true principal utility under the optimal configuration $\config^*$ and the approximate objective value $\objective(\config^*)$.
The approximate objective disproportionately down-weights the utility contribution from lower bins but the error from down-weighting disappears as the index $j$ of the bucket gets closer to $M$, corresponding to stronger quantiles of the agent's utility distribution.
Utility alignment implies that most of the optimal utility comes from these stronger quantiles.
Hence, the approximate objective $\objective(\config^*)$ is not much less than the principal's true utility in $\config^*$.
The upper bound on the optimal utility will depend on the degree of utility alignment in the utility configuration instance.
The following notation will allow us to state the bound.
\begin{definition}\label{def:rs}
    Given an instance of utility configuration satisfying $f(q)$-utility alignment, the {\em utility alignment coefficients} are defined as $r_j=f(j/M)$ for $j\in\{1,\ldots,M\}$.
\end{definition}

Utility alignment implies that for the optimal configuration and any $j\in \{1,\ldots,M\}$, the conditional expectation $\E[\optup~|~\optua\in B_{\leq j}]$ is at most $r_j\E[\optup]$.
We then have the following bound on the optimal utility.
\begin{lemma}\label{lem:optub}
    Let $\config^*$ be the optimal configuration. Then the principal's optimal utility $\E[\optup]$ is approximately upper-bounded by the objective $\objective(\config^*)$, namely:
    \begin{equation}\label{eq:optub}
    	\objective(\config^*)\geq\frac{M-1}{M+1}\left(\frac{M-5}{M-1}-\frac{5}{6}\frac{r_5}{M-1}-\frac{5}{M-1}\sum_{j=6}^M\frac{r_j}{j-1}\right)\E[\optup].
    \end{equation}
\end{lemma}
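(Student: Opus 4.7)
The plan is to apply summation by parts to the inequality $\objective(\config^*) = \sum_{j=6}^M \coef_j \uestj \geq \sum_{j=6}^M \coef_j \uleqj$ (which follows from Lemma~\ref{lem:utilapprox}), reorganize the sum in terms of the cumulative conditional expectations $Y_j := \E[\optup \mid \optua \in B_{\leq j}]$, and then invoke utility alignment. The key identity comes from the law of total expectation applied to $B_{\leq j} = B_j \cup B_{\leq j-1}$: namely $\uleqj = Y_j - (1-\qleqj) Y_{j-1}$, where $\qleqj$ here denotes the conditional bin probability under $\config^*$.

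Substituting this identity and rearranging via Abel summation yields
\[
\sum_{j=6}^M \coef_j \uleqj = \coef_M Y_M - \coef_6(1-q_6^{\leq})Y_5 + \sum_{j=6}^{M-1}\bigl[\coef_j - \coef_{j+1}(1-q_{j+1}^{\leq})\bigr]\, Y_j.
\]
The main term is $\coef_M \E[\optup] = \frac{M-5}{M-1}\E[\optup]$, which nearly matches the leading $\frac{M-5}{M+1}\E[\optup]$ in the target up to the outer $(M-1)/(M+1)$ factor. For the remaining two pieces I would use Lemma~\ref{lem:bins}: the lower bound $\qleqj \geq (1-1/M)/j$, together with the identity $\coef_j - \coef_{j+1} = -1/(M-1)$, gives after a direct calculation that $\coef_j - \coef_{j+1}(1-q_{j+1}^{\leq}) \geq -(5M+j-4)/(M(M-1)(j+1))$, while $1 - q_6^\leq \leq (5+1/M)/6$.

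Since each bracketed coefficient in the middle sum is non-positive and $Y_j \geq 0$, I can lower-bound each contribution by applying utility alignment $Y_j \leq r_j \E[\optup]$ to the $Y_j$ factors. Together with the analogous bound $Y_5 \leq r_5 \E[\optup]$ on the boundary term, this produces an intermediate inequality of the form
\[
\objective(\config^*) \geq \frac{M-5}{M-1}\E[\optup] - \frac{(5+1/M)r_5}{6(M-1)}\E[\optup] - \frac{\E[\optup]}{M(M-1)}\sum_{j=6}^{M-1}\frac{(5M+j-4)\,r_j}{j+1}.
\]
The final step is algebraic: factor out $(M-1)/(M+1)$ and verify term-by-term that each resulting coefficient is at least the target's. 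The outer factor weakens the main term from $\frac{M-5}{M-1}$ to $\frac{M-5}{M+1}$, generating positive slack of order $1/M$ that absorbs the looseness from replacing $1/(M-1)$ by $1/(M+1)$ in the error terms and from the mismatch between $5/(j+1)$ inside my sum and $5/(j-1)$ in the target's sum. The constant $5/6$ on the $r_5$ term carries over directly from $(5+1/M)/6 \approx 5/6$.

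The main obstacle I expect is this final algebraic step. While each individual intermediate bound is tight to order $1/M$, matching the exact clean form of the target (with its precise $(M-1)/(M+1)$ factor and the constant $5/6$ on the $r_5$ term) requires careful bookkeeping to ensure that the slack from the main term dominates the accumulated $1/M$-order looseness in the error terms, particularly when the utility alignment coefficients $r_j$ may grow with $M$ (as in the $f(q)=4/\sqrt{q}$ regime).
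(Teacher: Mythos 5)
Your derivation up through the intermediate inequality is correct and follows essentially the same architecture as the paper's proof: lower-bound $\objective(\config^*)$ by $\sum_{j=6}^M\coef_j\uleqj$ via Lemma~\ref{lem:utilapprox}, convert to the cumulative conditional expectations $Y_j=\E[\optup\mid\optua\in\bucket_{\leq j}]$, apply Abel summation, and then invoke utility alignment on the $Y_j$ (including the boundary term $Y_5\leq r_5\E[\optup]$). The one structural difference is that you use the exact identity $\uleqj=Y_j-(1-\qleqj)Y_{j-1}$, whereas the paper first extracts the factor $\Pr[\optua\in\bucket_j\mid\optua\in\bucket_{\leq j}]\geq(1-1/M)/j$ and then applies the approximate difference bound of Lemma~\ref{lem:differences}; your identity is cleaner, and I checked that your coefficient bound $\coef_j-\coef_{j+1}(1-q_{j+1}^\leq)\geq-(5M+j-4)/(M(M-1)(j+1))$ is right, as is the non-positivity of these coefficients (which you correctly need before substituting the upper bound $Y_j\leq r_j\E[\optup]$).

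The gap is the final step. The term-by-term comparison you propose is false: the required inequality $\frac{5M+j-4}{M(M-1)(j+1)}\leq\frac{5}{(M+1)(j-1)}$ reduces to $M(j+1)(j+4)+(j-1)(j-4)\leq 10M^2$, which fails once $j$ is on the order of $\sqrt{M}$ or larger (for instance at $j=M-1$ your coefficient is roughly $6/M^2$ against the target's $5/M^2$). The comparison can only be made in aggregate: splitting your coefficient as $\frac{5}{(M-1)(j+1)}+\frac{j-4}{M(M-1)(j+1)}$, the first piece is dominated by the target's $\frac{5}{(M+1)(j-1)}$ for every $j\leq M$, with slack $\frac{10(M-j)r_j}{(M^2-1)(j^2-1)}$ concentrated at small $j$, while the second piece contributes an extra deficit of total order $\frac{1}{M(M-1)}\sum_{j}r_j$ that must be absorbed by that slack together with the roughly $2/M$ slack you gain on the main term from $\frac{M-5}{M-1}$ versus $\frac{M-5}{M+1}$. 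Carrying this out requires harmonic-sum bookkeeping and depends on the profile of the $r_j$; it does go through for the regimes used in the paper ($r_j$ constant and $r_j=c\sqrt{M/j}$, for $M$ large enough), so your proof is repairable, but not by the term-by-term route you describe --- which is precisely the step you flagged as the obstacle.
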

To parse the complicated bound of (\ref{eq:optub}), note that if the $r_j$s are constant (e.g.\ $r_j=2$ for all $j$), we may conclude that $\objective(\config^*)\geq (1-O(\log M/M))\E[\optup]$.
Similarly non-constant $r_j$s that are smaller than $M/j$ (e.g. $r_j=\sqrt{M/j}$) will yield weaker approximations that still approach $1$ as $M$ grows.
Before proving the lemma, note that all the expressions considered so far for the utility of a configuration have been in terms of the utility conditioned on the agent's utility falling in a particular bin, $\E[u^P~|~u^A\in \bucket_j]$.
Utility configuration is stated in terms of the event that the agent's utility falls into cumulative bins, $\E[u^P~|~u^A\in \bucket_{\leq j}]$.
The following lemma will bridge the gap between these two accounting strategies and allow us to use summation by parts in the proof of Lemma~\ref{lem:optub}.

\begin{lemma}\label{lem:differences}
    In the optimal configuration $\config^*$, for any bin $j$, the principal's expected utility satisfies:
    \begin{equation*}
        \bucketExpectationExpr{\opt}{j}\geq \frac{j+1/M}{1-1/M}\bucketExpectationExpr{\opt}{\leq j}-\frac{j-1}{1+1/M}\bucketExpectationExpr{\opt}{\leq j-1}.
    \end{equation*}
\end{lemma}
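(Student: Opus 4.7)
My plan is to derive the inequality from a single application of the law of total expectation under $\config^*$, followed by substitution of the two resulting probability ratios by their extremal values under the preprocessing-derived probability bounds. First, because $\{\optua \in B_{\leq j}\} = \{\optua \in B_j\} \sqcup \{\optua \in B_{\leq j-1}\}$, the law of total expectation gives
\[
\Pr[\optua \in B_{\leq j}]\,\bucketExpectationExpr{\opt}{\leq j} = \Pr[\optua \in B_j]\,\bucketExpectationExpr{\opt}{j} + \Pr[\optua \in B_{\leq j-1}]\,\bucketExpectationExpr{\opt}{\leq j-1},
\]
which I rearrange into the exact identity
\[
\bucketExpectationExpr{\opt}{j} = \frac{\Pr[\optua \in B_{\leq j}]}{\Pr[\optua \in B_j]}\,\bucketExpectationExpr{\opt}{\leq j} \;-\; \frac{\Pr[\optua \in B_{\leq j-1}]}{\Pr[\optua \in B_j]}\,\bucketExpectationExpr{\opt}{\leq j-1}.
\]

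Next, I would substitute the two probability ratios by their extremes drawn from the preprocessing bounds $\Pr[\optua \in B_{\leq k}] \in [k/M, k/M + 1/M^2]$ (used in the proof of Lemma~\ref{lem:even}) together with the bin-probability bounds $\Pr[\optua \in B_j] \in [1/M - 1/M^2, 1/M + 1/M^2]$ from Lemma~\ref{lem:even} itself. The target coefficient $(j+1/M)/(1-1/M)$ is exactly $(j/M + 1/M^2)/(1/M - 1/M^2)$, i.e., the largest admissible value of the first ratio, obtained by placing its numerator at its upper bound and its denominator at its lower bound. Analogously, $(j-1)/(1+1/M) = ((j-1)/M)/(1/M + 1/M^2)$ is the smallest admissible value of the second ratio, obtained from the minimum numerator over the maximum denominator. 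Because $\bucketExpectationExpr{\opt}{\leq j}, \bucketExpectationExpr{\opt}{\leq j-1} \geq 0$ (since $u^P \geq 0$), these two substitutions bound the right-hand side of the identity in a consistent direction, and together they should yield the claimed inequality.

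The main obstacle is keeping the sign bookkeeping straight: replacing the first ratio by its upper extreme inflates the positive first term, while replacing the second ratio by its lower extreme shrinks the subtracted second term, so both substitutions loosen the right-hand side in the same direction. I would verify carefully that this direction is consistent with the stated $\geq$; if the natural substitution produces the reverse inequality, I would switch to the opposite pairing (lower bound on the first ratio and upper bound on the second) to obtain the claimed direction. A secondary subtlety is the algebraic constraint $\Pr[\optua \in B_j] = \Pr[\optua \in B_{\leq j}] - \Pr[\optua \in B_{\leq j-1}]$, which couples the three probabilities; I would confirm that the chosen pair of extremes is jointly feasible for a single probability triple (e.g., by checking that putting $\Pr[\optua \in B_{\leq j}]$ at its upper bound and $\Pr[\optua \in B_{\leq j-1}]$ at its lower bound forces $\Pr[\optua \in B_j]$ into a compatible range), and otherwise replace independent ratio-by-ratio bounding with a joint optimization over feasible triples.
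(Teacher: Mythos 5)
Your derivation is, at heart, the same as the paper's: the paper expands $\bucketExpectationExpr{\opt}{\leq j}$ and $\bucketExpectationExpr{\opt}{\leq j-1}$ as probability-weighted sums of per-bin expectations and plugs in the bounds of Lemma~\ref{lem:bins}, which is exactly your total-expectation identity with the two probability ratios replaced by their extremes. One minor point: your worry about joint feasibility of the chosen extremes is unnecessary --- each ratio multiplies a nonnegative conditional expectation, so bounding the two ratios separately already yields a valid one-sided inequality; joint feasibility would only matter for tightness.

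The real issue is the one you flagged and deferred: the signs do not come out in the stated direction, and no choice of pairing rescues the statement as printed. Writing $\bucketExpectationExpr{\opt}{j}=R_1\,\bucketExpectationExpr{\opt}{\leq j}-R_2\,\bucketExpectationExpr{\opt}{\leq j-1}$ with $R_1=\Pr[\optua\in\bucket_{\leq j}]/\Pr[\optua\in\bucket_j]$ and $R_2=\Pr[\optua\in\bucket_{\leq j-1}]/\Pr[\optua\in\bucket_j]$, replacing $R_1$ by the upper bound $\tfrac{j+1/M}{1-1/M}$ and $R_2$ by the lower bound $\tfrac{j-1}{1+1/M}$ can only increase the right-hand side, so this substitution proves $\bucketExpectationExpr{\opt}{j}\leq\tfrac{j+1/M}{1-1/M}\bucketExpectationExpr{\opt}{\leq j}-\tfrac{j-1}{1+1/M}\bucketExpectationExpr{\opt}{\leq j-1}$ --- the reverse of the lemma. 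Your fallback of switching the pairing does give a genuine lower bound, but with different coefficients (roughly $\tfrac{j}{1+1/M}$ on the first term and $\tfrac{j-1+1/M}{1-1/M}$ on the second), not the ones in the statement. In fact the statement as printed cannot hold in general: if every cumulative probability equals $k/M$ exactly, the identity gives $\bucketExpectationExpr{\opt}{j}=j\,\bucketExpectationExpr{\opt}{\leq j}-(j-1)\,\bucketExpectationExpr{\opt}{\leq j-1}$, which is strictly below the stated right-hand side whenever $\bucketExpectationExpr{\opt}{\leq j}>0$. You have not missed a trick from the paper: its own two displayed inequalities, when combined, likewise yield only the ``$\leq$'' direction, i.e.\ the bounds of Lemma~\ref{lem:bins} are applied on the wrong sides for the claimed ``$\geq$''. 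What you have uncovered is an inconsistency between the lemma's statement and any proof along these (shared) lines; the provable lower bound carries the weaker, switched coefficients, and adopting it requires re-tracking the resulting error terms at the point where the lemma is invoked in the proof of Lemma~\ref{lem:optub}.
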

\begin{proof}
    We start by rewriting each of the terms on the right-hand side, using the bounds from Lemma~\ref{lem:bins}, which show that $\Pr[\utilAgent{\opt} \in \bucket_k | \utilAgent{\opt} \in \bucket_{\leq j}]\in[\tfrac{1-1/M}{j+1/M},\tfrac{1+1/M}{j}]$ for any $k\leq j$.
    \begin{align*}
        \bucketExpectationExpr{\opt}{\leq j} &= \sum_{k=1}^j \bucketExpectationExpr{\opt}{k} \cdot \Pr[\utilAgent{\opt} \in \bucket_k | \utilAgent{\opt} \in \bucket_{\leq j}]\\
        &\geq\frac{1-1/M}{j+1/M} \sum_{k=1}^j \bucketExpectationExpr{\opt}{k}.
    \end{align*}
    \begin{align*}
        \bucketExpectationExpr{\opt}{\leq j-1} &=  \sum_{k=1}^{j-1} \bucketExpectationExpr{\opt}{k} \cdot \Pr[\utilAgent{\opt} \in \bucket_k | \utilAgent{\opt} \in \bucket_{\leq j-1}]\\
        &\leq \frac{1+1/M}{j-1}\sum_{k=1}^{j-1} \bucketExpectationExpr{\opt}{k}.
    \end{align*}
    Combining these two inequalities completes the proof.
\end{proof}

\begin{proof}[Proof of Lemma~\ref{lem:optub}]
	The proof consists of three main steps.
	First, we note that the objective is defined in terms of the utility estimates $\uestj$, which are lower bounded by the true conditional utilities $\uleqj=\E[\optup~|~\optua\in B_j]\Pr[\optua\in B_j~|~\optua\in B_{\leq j}]$.
	Next, we use Lemma~\ref{lem:differences} to rewrite the conditional utilities in terms of differences of the cumulative utilities $\bucketExpectationExpr{\opt}{\leq j}$.
	Finally, we use summation by parts to rewrite the quantity in a form where we may apply utility alignment, yielding a bound in terms of the true optimal utility $E[\optup]$.
	Formally, we have the following, explained after the statement:
    \begin{align}
    	\objective(\config^*)&=\sum_{j=6}^M\coef_j\uestj\notag\\
        &\geq \sum_{j=6}^M\coef_j\uleqj\label{eq:optub1}\\
        &=\sum_{j=6}^M\coef_j\E[\optup~|~\optua\in B_j]\Pr[\optua\in B_j~|~\optua\in B_{\leq j}]\label{eq:optub2}\\
        &\geq\left(1-\frac{1}{M}\right)\sum_{j=6}^M\frac{\coef_j}{j}\E[\optup~|~\optua\in B_j]\label{eq:optub3}\\
        &\geq\left(1-\frac{1}{M}\right)\sum_{j=6}^M\frac{\coef_j}{j} \Big(\frac{j+1/M}{1-1/M}\bucketExpectationExpr{\opt}{\leq j}\notag\\
    	&\quad\quad\quad\quad -\frac{j-1}{1+1/M}\bucketExpectationExpr{\opt}{\leq j-1}\Big)\label{eq:optub4}\\
    	&\geq\left(1-\frac{1}{M}\right)\sum_{j=6}^M\frac{\coef_j}{j} \Big(\frac{j}{1+1/M}\bucketExpectationExpr{\opt}{\leq j}\notag\\
    	&\quad\quad\quad\quad-\frac{j-1}{1+1/M}\bucketExpectationExpr{\opt}{\leq j-1}\Big)\notag\\
    	&=\frac{M-1}{M+1}\sum_{j=6}^M\frac{\coef_j}{j} \Big(j\bucketExpectationExpr{\opt}{\leq j}-(j-1)\bucketExpectationExpr{\opt}{\leq j-1}\Big)\notag\\
    	&=\frac{M-1}{M+1}\Bigg(c_M\E[\optup]-\frac{c_6}{6}\cdot5\bucketExpectationExpr{\opt}{\leq 5}\notag\\
    	&\quad\quad\quad\quad-\sum_{j=6}^M\left(\frac{c_j}{j}-\frac{c_{j-1}}{j-1}\right)j\cdot\bucketExpectationExpr{\opt}{\leq j}\Bigg)\label{eq:optub5}\\
        &\geq\frac{M-1}{M+1}\left(c_M\E[\optup]-\frac{5}{6}c_6r_5\E[\optup]-\sum_{j=6}^M\left(\frac{c_j}{j}-\frac{c_{j-1}}{j-1}\right)r_j\cdot j\cdot\E[\optup]\right)\label{eq:optub6}\\
        &=\frac{M-1}{M+1}\left(c_M-\frac{5}{6}c_6r_5-\sum_{j=6}^M\left(\frac{c_j}{j}-\frac{c_{j-1}}{j-1}\right)r_j\cdot j\right)\E[\optup]\notag\\
        &=\frac{M-1}{M+1}\left(\frac{M-5}{M-1}-\frac{5}{6}\frac{r_5}{M-1}-\sum_{j=6}^M\left(\frac{j-5}{j(M-1)}-\frac{j-6}{(j-1)(M-1)}\right)r_j\cdot j\right)\E[\optup]\notag\\
        &=\frac{M-1}{M+1}\left(\frac{M-5}{M-1}-\frac{5}{6}\frac{r_5}{M-1}-\frac{1}{M-1}\sum_{j=6}^M\left(\frac{j-5}{j}-\frac{j-6}{j-1}\right)r_j\cdot j\right)\E[\optup]\notag\\
        &=\frac{M-1}{M+1}\left(\frac{M-5}{M-1}-\frac{5}{6}\frac{r_5}{M-1}-\frac{5}{M-1}\sum_{j=6}^M\frac{r_j}{j-1}\right)\E[\optup]\notag.
    \end{align}
Line (\ref{eq:optub1}) follows from Lemma~\ref{lem:utilapprox}, and line (\ref{eq:optub2}) from the definition of the $\uleqj$s.
Line (\ref{eq:optub3}) follows from Lemma~\ref{lem:bins}, which states that $\Pr[\optua\in B_j~|~\optua\in B_{\leq j}]\geq (1-1/M)/j$.
Line (\ref{eq:optub4}) then follows from Lemma~\ref{lem:differences}.
In line (\ref{eq:optub5}), we apply summation by parts, and in line (\ref{eq:optub6}), we apply utility alignment.
The remaining lines are algebra.
\end{proof}

\subsection{PTAS Description and Approximation Analysis} \label{sec:main-alg}

We now present and analyze the full PTAS for utility configuration.
Recall that an input to the utility configuration problem consists of $n$ actions, each of which has $m$ possible configurations, each represented by a discrete distribution over $(u^A_i,u^P_i)$ pairs.
Before executing the algorithm's main loop, we preprocess the discrete distributions to obtain an equivalent instance where each probability mass in each utility distribution has size at most $1/M^2$, and such that agent utilities are distinct across all point masses for all configurations for all actions.
Full details and analysis of the preprocessing algorithm can be found in Section~\ref{sec:main-preprocessing}.

For a constant $M$ (which governs the approximation factor), the main loop of the algorithm iterates through all profiles of bin boundaries $\hat u_1\leq\cdots\leq \hat u_M$, chosen from the union of all supports of all agent utility distributions for all actions and configurations.
Note that the number of such profiles is exponential in $M$, but for constant $M$, it is polynomial in the input size.
These bin boundaries then define bins $B_1,\ldots, B_M$ in agent utility space, which then allow us to define the key quantities analyzed in Sections~\ref{sec:main-prob-estimates}, \ref{sec:main-util-estimates}, and \ref{sec:main-objective}.
These quantities then define the optimization we perform for each profile of bins.

For each choice of $\hat u_1\leq\cdots\leq \hat u_M$, we solve a dynamic program, with the objective of maximizing the approximate objective $\objective$ defined in Section~\ref{sec:main-objective}, subject to constraint that the agent utilities induced by the chosen configuration should be approximately evenly distributed between the buckets $B_1,\ldots,B_M$, which we enforce using the (rounded) additive estimates $\qjbarest$ of $\Pr[u^A\in B_j~|~u^A\in B_{\leq j}]$, defined in Section~\ref{sec:main-prob-estimates}.
Formally, we solve the following problem for each choice of $\hat u_1\leq\cdots\leq \hat u_M$:
\begin{align*}
	&\text{maximize} \quad\quad\quad \objective(\config) &\\
	&\text{subject to}\quad\quad\quad\qjbarest\in \left[\frac{1-1/M}{j}-\frac{1}{M^2},\frac{1+1/M}{j-1}\right]  \quad\quad\quad \forall j \in\{1,\ldots,M\},
\end{align*}
where recall that each $\qjbarest$ depends on the choice of configuration.
We discuss the details of the dynamic programming formulation of this problem below.
While the program above may not be feasible for all choices of $\hat u_1\leq\cdots\leq \hat u_M$, the analysis of the previous sections implies that it will be feasible for at least one choice, corresponding to a correct guess of the bin boundaries in the true optimal configuration.
The overall algorithm computes the expected principal utility from each configuration obtained by solving the program above for each choice of $\hat u_1\leq\cdots\leq \hat u_M$, and returns the best one overall.\footnote{Given a configuration, computing the principal's true expected utility can be done by emulating the approach of \citet{KPT24}: let $a^*(i)$ denote the agent's favorite action among $1,\ldots,i$. Let $\text{Util}_i(u)=\E[u^P_{a^*(i)}~|~u^A_{a^*(i)}=u]$ and $p_i(u)=\Pr[u^A_{a^*(i)}=u]$. Both can be computed by dynamic programming, and the expected principal utility for any configuration can be computed using knowledge of the $\text{Util}_M(u)$s, $p_M(u)$s, and the law of total expectation.}

We next show that for constant $M$, the optimization problem above can be solved in polynomial time via dynamic programming.
Given a configuration $\config$, the objective $\objective(\config)$ is defined as follows, making the dependence on the chosen configuration explicit:
\begin{align*}
	\sum\nolimits_{j=6}^M c_j\uestj&=\sum\nolimits_{j=6}^M c_j\sum\nolimits_{i=1}^n \E_{(u^A_i,u^P_i)\sim F_{i,C_i}}[u^P_i~|~u_i^A\in B_j]\prob_{(u^A_i,u^P_i)\sim F_{i,C_i}}[u_i^A\in B_j~|~u_i^A\in B_{\leq j}]\\
	&=\sum\nolimits_{i=1}^n\sum\nolimits_{j=6}^M c_j \E_{(u^A_i,u^P_i)\sim F_{i,C_i}}[u^P_i~|~u_i^A\in B_j]\prob_{(u^A_i,u^P_i)\sim F_{i,C_i}}[u_i^A\in B_j~|~u_i^A\in B_{\leq j}].
\end{align*}
Hence, if we define 
\begin{equation*}
	\psi_{i}(\configi)=\sum\nolimits_{j=6}^M c_j\E_{(u^A_i,u^P_i)\sim F_{i,C_i}}[u^P_i~|~u_i^A\in B_j]\prob_{(u^A_i,u^P_i)\sim F_{i,C_i}}[u_i^A\in B_j~|~u_i^A\in B_{\leq j}]
\end{equation*}
to be the contribution of action $i$ to $\objective(\config)$, then $\objective(\config)=\sum_i\psi_i(\configi)$.

To define the subproblem and recurrence for the dynamic program, recall the definitions of the probability estimates $\qjbarest$, where we now make dependence on the configuration $\config$ explicit: we have $\qleqij(\configi) = \prob_{(u^A_i,u^P_i)\sim F_{i,\configi}}[u^A_i\in B_j~|~u^A_i\in B_{\leq j}]$,  $\qleqbarij(\configi) = \lfloor\qleqij(\configi)/(M^2n)\rfloor$, and $\qjbarest(\config)= \sum_{i=1}^n \qleqbarij(\configi)$.
From these definitions, each action contributes to $\qjbarest$ in increments of $1/M^2n$.
Thus, given $i\in\{1,\ldots,n\}$ and $\mathbf k=(k_1,\ldots,k_M)\in[M^2n]^M$, define $\opt(i,\mathbf k)$ to be the maximum value of $\sum_{i'=i}^M\psi_{i'}(C_{i'})$ over all configurations of actions $i,\ldots,M$ such that for each $j\in\{1,\ldots,M\}$, it holds that 
\begin{equation*}
	\frac{k_j}{M^2n}+\sum_{i'=i}^M\qleqbarij(C_{i'})\in \left[\frac{1-1/M}{j}-\frac{1}{M^2},\frac{1+1/M}{j-1}\right].
\end{equation*}
Then $\opt(1,\mathbf 0)$ corresponds to the optimal solution to the optimization problem above.
The base cases are $\opt(n+1,\mathbf k)$ for all $\mathbf k\in [M^2n]^M$, with value either $0$ or $-\infty$ depending on whether $\mathbf k/M^2n$ satisfies the program's constraints on $\qjbarest$.
To define the recurrence, consider configuration $\ell$ of action $i$, and let $k_j^{i,\ell}=\qleqbarij(\ell) \cdot M^2n$ be the number of increments action $i$ would contribute to bin $j$ in configuration $\ell$.
Then we have the following recurrence:
\begin{equation*}
    \opt(i,\mathbf k)=\max_{\ell\in \{1,\ldots,m\}}\psi_i(\ell)+\opt(i+1,\mathbf k-(k_1^{i,\ell},\ldots,k_M^{i,\ell})).
\end{equation*}
We analyze the runtime more carefully in Section~\ref{sec:main-runtime-analysis}, but for now note that each $k_j$ can range from $1$ to $M^2n$ and there are at most $m$ choices at each stage of the optimization, which for any constant $M$ will yield a polynomial-time algorithm in $n$, $m$, and the bit complexity of the utilities and probabilities.

We can now analyze the approximation ratio of the algorithm described above.
The approximation ratio will depend on the choice of $M$ and the utility alignment of the utility configuration instance, which we summarize by the coefficients $r_1,\ldots,r_M$ (Definition~\ref{def:rs}).

\begin{theorem}\label{thm:main}
	For an instance of utility configuration with alignment coefficients $r_1,\ldots,r_M$, the algorithm of this section achieves an $\alpha$-approximation to the optimal principal utility, where
    \begin{equation*}
        \alpha=\frac{M-1}{M+1}\left(\frac{M-5}{M-1}-\frac{5}{6}\cdot\frac{r_5}{M-1}-\frac{5}{M-1}\sum_{j=6}^M\frac{r_j}{j-1}\right).
    \end{equation*}
\end{theorem}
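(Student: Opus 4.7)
The plan is to exhibit one specific iteration of the outer loop --- the one whose bin boundaries are exactly the $\hat u_0 \leq \cdots \leq \hat u_M$ of Definition~\ref{def:bins} applied to $\config^*$ --- and argue that the dynamic program at that iteration already produces a configuration whose true expected principal utility is at least $\alpha \E[\optup]$. Since the algorithm returns the best candidate across all iterations measured by true expected principal utility, this is enough. First, each $\hat u_j$ is by construction an agent utility realization under $\config^*$, so it lies in the union of supports of the $F_{ij}$'s, and therefore this profile is among those enumerated by the outer loop; in that iteration the bins $B_j$ driving the DP coincide with those used throughout Sections~\ref{sec:main-prob-estimates}--\ref{sec:main-objective}.

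Next, I would verify that $\config^*$ itself is feasible for the DP at this iteration. This is precisely the content of Lemma~\ref{lem:optprobs}: the rounded additive probability estimates $\qjbarest$ induced by $\config^*$ lie in $\left[\tfrac{1-1/M}{j}-\tfrac{1}{M^2},\tfrac{1+1/M}{j-1}\right]$ for every $j\in\{1,\ldots,M\}$, which matches exactly the constraint set of the DP. Feasibility of $\config^*$ implies that the DP's optimum $\tilde\config$ satisfies $\objective(\tilde\config) \geq \objective(\config^*)$, and Lemma~\ref{lem:optub} then gives $\objective(\config^*) \geq \alpha\E[\optup]$, so $\objective(\tilde\config) \geq \alpha\E[\optup]$.

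Finally, I would convert the DP's objective value into a bound on the true expected principal utility under $\tilde\config$. Since $\tilde\config$ is feasible, Lemma~\ref{lem:esttotrue} shows that the conditional bin probabilities $\qleqj$ under $\tilde\config$ satisfy equation (\ref{eq:esttotruerestate}) for every $j$, and then Lemma~\ref{lem:alglb} yields that the true expected principal utility under $\tilde\config$ is at least $\objective(\tilde\config) \geq \alpha\E[\optup]$. Since the algorithm computes the true expected principal utility of every candidate and keeps the best, its output achieves expected utility at least $\alpha\E[\optup]$, as claimed. The proof is essentially a chaining of the lemmas of Sections~\ref{sec:main-prob-estimates}--\ref{sec:main-objective}; no single step is a real obstacle. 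The only point requiring a moment's care is confirming that the specific bin boundaries of Definition~\ref{def:bins} appear in the outer loop's enumeration, which follows from the preprocessing of Section~\ref{sec:main-preprocessing} that renders every $\hat u_j$ a point mass in some $F_{ij}$.
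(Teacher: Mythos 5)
Your proposal is correct and follows essentially the same argument as the paper's proof: fix the outer-loop iteration whose bins come from Definition~\ref{def:bins} applied to $\config^*$, use Lemma~\ref{lem:optprobs} for feasibility of $\config^*$ (hence $\objective(\tilde\config)\geq\objective(\config^*)$), Lemma~\ref{lem:optub} for the lower bound $\objective(\config^*)\geq\alpha\E[\optup]$, and Lemmas~\ref{lem:esttotrue} and \ref{lem:alglb} to convert the DP objective into true expected principal utility. Your added remark that preprocessing guarantees the correct bin boundaries appear in the enumeration is a point the paper leaves implicit, but it is the same proof.
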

\begin{proof}
   Consider the choice of $\hat u_1,\ldots,\hat u_M$ corresponding to the optimal configuration $\config^*$ as in Definition~\ref{def:bins}, and define the bins $B_1,\ldots, B_M$ and $B_{\leq 1},\ldots,B_{\leq M}$ accordingly.
   These, in turn, set the approximate objective function $\objective$, which depends on the bins to be well-defined.
   Let $\config$ denote the configuration computed by the algorithm for this choice of $\hat u_1,\ldots,\hat u_M$.
   By Lemma~\ref{lem:optprobs}, the optimal configuration is feasible for this choice of $\hat u_1,\ldots,\hat u_M$, and hence the algorithm does indeed find some configuration on this iteration.
   Let $\E[u^P_{\text{ALG}}]$ denote the expected principal utility for $\config$.
   Then we have the following:
   \begin{align}
       \E[u^P_{\text{ALG}}]&\geq \objective(\config)\label{eq:a1}\\
                        &\geq \objective(\config^*)\label{eq:a2}\\
                        &\geq \alpha \E[u^A_{\text{OPT}}].\label{eq:a3}
   \end{align}
   Line (\ref{eq:a1}) follows from the fact that the feasibility constraints of the dynamic program imply that $\config$ satisfies the bounds of Lemma~\ref{lem:optprobs} and hence the true $\qleqj$s satisfy the bounds of Lemma~\ref{lem:esttotrue}.
   Lemma~\ref{lem:alglb} implies that $\objective$ is a lower bound on the expected principal utility for such configurations.
   Line (\ref{eq:a2}) follows from Lemma~\ref{lem:optprobs}, which implies that the optimal configuration is feasible for the dynamic program, and therefore can have no higher objective value than $\config$.
   Finally, (\ref{eq:a3}) follows from Lemma~\ref{lem:optub}.
   Since the algorithm of this section returns a configuration that has at least as high expected principal utility as $\config$ does, the result follows.
\end{proof}

We conclude the section by applying Theorem~\ref{thm:main} with the two sets of alignment coefficients that will apply for the applications in this paper, and work out the resulting approximation ratios as a function of $M$.
In each case, we obtain an approximation ratio which goes to $1$ as $M$ grows, implying (given the runtime analysis of the next section) that the algorithm is a PTAS.
\begin{corollary}\label{cor:2}
    For utility configuration instances satisfying $2$-utility alignment, the algorithm of this section is a $(1-O(\log M/M))$-approximation.
\end{corollary}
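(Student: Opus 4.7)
The plan is a direct computation: substitute the alignment coefficients $r_j = f(j/M) = 2$ for all $j$ into the expression for $\alpha$ given by Theorem~\ref{thm:main}, and then bound each of the three ``loss'' terms inside the parentheses to see that their sum is $O(\log M / M)$. The outer factor $\frac{M-1}{M+1}$ contributes only an $O(1/M)$ loss, which is absorbed.

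More concretely, first I would write
\begin{equation*}
    \alpha = \frac{M-1}{M+1}\left(\frac{M-5}{M-1} - \frac{5}{3(M-1)} - \frac{10}{M-1}\sum_{j=6}^{M}\frac{1}{j-1}\right),
\end{equation*}
obtained by plugging $r_5 = r_6 = \cdots = r_M = 2$ into the formula from Theorem~\ref{thm:main}. The first term inside the parentheses equals $1 - \frac{4}{M-1} = 1 - O(1/M)$, and the second is trivially $O(1/M)$. The key term is the harmonic sum, which I would bound using the standard inequality $H_n \leq \log n + 1$ (and $H_4 \geq 1$) to get
\begin{equation*}
    \sum_{j=6}^{M}\frac{1}{j-1} = H_{M-1} - H_4 \leq \log(M-1),
\end{equation*}
so that the third term is at most $\frac{10 \log(M-1)}{M-1} = O(\log M / M)$.

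Combining the three bounds, the expression inside the parentheses is $1 - O(\log M / M)$. The prefactor $\frac{M-1}{M+1} = 1 - \frac{2}{M+1} = 1 - O(1/M)$, so multiplying gives $\alpha = 1 - O(\log M / M)$, which is exactly the claimed approximation guarantee.

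I do not expect any real obstacle here: every step is routine algebraic manipulation, and the only nontrivial ingredient is the logarithmic growth of the harmonic numbers, which is completely standard. The ``main'' content of the corollary is conceptual rather than technical — it identifies which term in Theorem~\ref{thm:main} dominates the error when the alignment function is constant, namely the cumulative harmonic loss across all bins $j \geq 6$.
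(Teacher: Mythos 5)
Your proposal is correct and follows the same route as the paper's proof: substitute $r_j=2$ into the bound of Theorem~\ref{thm:main} and observe that the harmonic sum $\sum_{j=6}^{M}\tfrac{1}{j-1}=O(\log M)$ makes the last term the dominant $O(\log M/M)$ loss. The paper's argument is just a terser version of the same computation, so there is nothing to add.
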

\begin{proof}
    For $2$-utility aligned instances, we have $r_j=2$ for all $j$.
    Plugging these $r_j$ into the formula for the approximation ratio, we obtain:
    \begin{equation*}
    \frac{M-1}{M+1}\left(\frac{M-5}{M-1}-\frac{5}{6}\cdot\frac{2}{M-1}-\frac{5}{M-1}\sum_{j=6}^M\frac{2}{j-1}\right).
    \end{equation*}
    The final term is the dominant source of loss, and is $O(\log M/M)$, implying the result.
\end{proof}

\begin{corollary}\label{cor:sqrt}
    For utility configuration instances satisfying $1/(c\sqrt q)$-utility alignment for constant $c$ the algorithm of this section is a $(1-O(1/\sqrt{M}))$-approximation.
\end{corollary}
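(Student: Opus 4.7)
The plan is to substitute the alignment coefficients $r_j = f(j/M)$ induced by $f(q) = 1/(c\sqrt{q})$ (equivalently, $r_j = \sqrt{M/j}/c$) directly into the approximation ratio formula given by Theorem~\ref{thm:main}, and then show that the resulting loss terms are each $O(1/\sqrt{M})$. The bound from Theorem~\ref{thm:main} decomposes into three pieces beyond the leading factor $(M-1)/(M+1) = 1 - O(1/M)$: the main term $(M-5)/(M-1) = 1 - O(1/M)$, a boundary term proportional to $r_5/(M-1)$, and the sum $\frac{5}{M-1}\sum_{j=6}^M r_j/(j-1)$. My goal is to verify that the last two pieces are both $O(1/\sqrt{M})$, which combined with the leading factors yields an overall approximation ratio of $1 - O(1/\sqrt{M})$.

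The boundary term is straightforward: since $r_5 = \sqrt{M/5}/c = O(\sqrt{M})$, we have $\frac{5}{6} \cdot r_5/(M-1) = O(1/\sqrt{M})$ directly. The main technical step — and the only step requiring care — is estimating the sum
\begin{equation*}
    \sum_{j=6}^M \frac{r_j}{j-1} = \frac{1}{c}\sum_{j=6}^M \frac{\sqrt{M/j}}{j-1} = \frac{\sqrt{M}}{c}\sum_{j=6}^M \frac{1}{(j-1)\sqrt{j}}.
\end{equation*}
Since $1/((j-1)\sqrt{j}) = \Theta(j^{-3/2})$ for $j \geq 6$, the inner sum is bounded by a convergent $p$-series ($p = 3/2 > 1$). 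Hence $\sum_{j=6}^M (j-1)^{-1} j^{-1/2} \leq \sum_{j=6}^\infty j^{-3/2} = O(1)$ uniformly in $M$, so the entire sum is $O(\sqrt{M})$, and dividing by $M-1$ gives $\frac{5}{M-1}\sum_{j=6}^M r_j/(j-1) = O(1/\sqrt{M})$.

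Putting the three estimates together,
\begin{equation*}
    \alpha \geq \left(1 - O(1/M)\right)\left(1 - O(1/M) - O(1/\sqrt{M}) - O(1/\sqrt{M})\right) = 1 - O(1/\sqrt{M}),
\end{equation*}
which is the claimed approximation ratio. I do not anticipate any real obstacle here: the only substantive observation is that $r_j/j = \sqrt{M}/(c j^{3/2})$ produces a convergent $p$-series, so the cumulative contribution of the low-quantile correction terms is dominated by the $\sqrt{M}$ factor coming from the weaker alignment, rather than by a logarithmic blow-up as in the $2$-aligned case of Corollary~\ref{cor:2}.
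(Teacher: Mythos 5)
Your proposal is correct and takes essentially the same route as the paper: substitute $r_j = f(j/M)$ into the bound of Theorem~\ref{thm:main}, observe that $\sum_{j\geq 6} 1/((j-1)\sqrt{j})$ is a convergent series so the third term is $O(1/\sqrt{M})$, and note the boundary term is likewise $O(1/\sqrt{M})$. (One trivial slip: $(j-1)^{-1}j^{-1/2} > j^{-3/2}$, so your displayed comparison to $\sum_{j\ge 6} j^{-3/2}$ has the inequality in the wrong direction as literally written, but using $(j-1)^{-1}\le 2j^{-1}$ for $j\ge 2$ restores the $O(1)$ bound and the conclusion stands.)
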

\begin{proof}
    For such instances, we have $r_j=c\sqrt{M/j}$ for all $j$.
    Therefore, the approximation ratio is:
    \begin{align}
        &\frac{M-1}{M+1}\left(\frac{M-5}{M-1}-\frac{5}{6}\cdot\frac{c\sqrt M}{5(M-1)}-\frac{5}{M-1}\sum_{j=6}^M\frac{c\sqrt M}{(j-1)\sqrt j}\right).\notag\\
        &\quad\quad\quad\quad =\frac{M-1}{M+1}\left(\frac{M-5}{M-1}-\frac{5}{6}\cdot\frac{c\sqrt M}{5(M-1)}-\frac{5c\sqrt M}{M-1}\sum_{j=6}^M\frac{1}{(j-1)\sqrt j}\right).\label{eq:sqrtanalysis}
    \end{align}
    The sum in (\ref{eq:sqrtanalysis}) is now $O(1)$.
    The second and third terms in parentheses are now the dominant loss terms, and are both $O(1/\sqrt{M})$.
\end{proof}

\subsection{Runtime Analysis}\label{sec:main-runtime-analysis}
We conclude the section with a runtime analysis of the algorithm and show that for fixed $M$, the runtime is polynomial in the number of actions $n$, the number of configurations per action $m$, the number of masses $K$ per distribution, and the bit complexity of the utilities and probabilities given as inputs to the problem.

First, consider preprocessing (Section~\ref{sec:main-preprocessing}).
The preprocessing step takes each probability mass, splits it into at most $M^2$ smaller probability masses, and adds a perturbation to each agent's utility in such a way that the perturbations have polynomial bit complexity in the size of the original input. 
This blows up the number of mass points by at most a factor of $M^2$, while keeping the number of actions and configurations per action the same.
Hence, the size of the output to preprocessing is polynomial in the original input size, and in particular has at most $nmKM^2$ different masses across all actions and configurations, each with a different agent utility.

Next, consider the runtime of the main loop.
The algorithm tries each choice of $\hat u_1,\ldots,\hat u_M$, and after preprocessing, there are $nmKM^2$ choices for each $\hat u_j$, leading to at most $(nmKM^2)^M$ choices of $\hat u_1,\ldots,\hat u_M$.
We next consider the relevant aspects of the dynamic program, which is solved once per choice of $\hat u_1,\ldots,\hat u_M$.
It is straightforward to verify that the computation of the $\psi_i$s and $\qleqbarij$s can be done in polynomial time, and is not the bottleneck to the computation.
The dimension of the memo table, meanwhile, is $n(M^2n)^M$, and each entry requires $m$ constant-time operations to fill.
Therefore, the runtime of the dynamic program is $O(mn(M^2n)^M)$.
Since this is run once per loop, we obtain a runtime of $O(mn(M^2n)^M\cdot (nmKM^2)^M)=O(m^{M+1}n^{2M+1}K^MM^{4M})$, which is polynomial when $M$ is constant.

\section{Applications and Utility Alignment Analyses}
\label{sec:applications}
This section now applies the utility configuration PTAS framework from Section~\ref{sec:main-results} to the problems discussed in the introduction.
For each problem, we first show how to reduce the problem to utility configuration.
We then prove that each problem satisfies utility alignment (Definition \ref{def:alignment}), which we restate below:

\begin{definition*}[Utility Alignment, restated]
    Given a non-increasing function $f$, an instance of utility configuration is {\em$f(q)$-utility aligned} if for any utility threshold $U$ such that in the optimal configuration $\config^*$, $\prob[u^A\leq U]=q$, the principal utilities under $\config^*$ satisfy
    \begin{equation*}
        \E[u^P~|~u^A\leq U]\leq f(q) \E[u^P],
    \end{equation*}
    where $u^A$ and $u^P$ are the agent and principal utilities from the agent's preferred action under $\config^*$.
\end{definition*}

In Sections~\ref{sec:applications-delegation} and \ref{sec:applications-pricing}, we consider the delegation and unit demand pricing problems discussed in the introduction.
For each of these problems, we show $2$-utility alignment.
In Section~\ref{sec:applications-general}, we then present a broad result that implies utility alignment for several similar problems.
The result shows that if utility alignment holds for individual actions, then it also holds for optimal configurations, albeit at a weaker level (though still strong enough to imply a PTAS).
This result implies that the algorithm of the previous section is a PTAS for two variants of the delegated choice problem (one with randomly distributed bias, and another with an outside option), as well as the assortment optimization problem.
We cover these in detail in Sections~\ref{sec:applications-delegation-random-bias}-\ref{sec:applications-assortment}.

\subsection{Delegated Choice}\label{sec:applications-delegation}
Recall that in the delegated choice problem, there are $n$ actions, each with an associated bias $b_i$ and randomly (independently) distributed nonnegative value $v_i$ with distribution $F_i$.
The principal does not know $v_i$ but knows the biases, which are not random.
We consider a generalization with random biases in Section~\ref{sec:applications-delegation-random-bias}.
The principal must select a nonempty subset $S\subseteq [n]$.
After the principal selects $S$, the agent observes $v_1,\ldots,v_n$ and selects the action $i^*$ maximizing $v_i+b_i$.
The principal's utility is $v_{i^*}$.

This problem reduces to utility configuration as follows: for each action $i \in [n]$, create a corresponding action in the utility configuration problem, with two configurations. \textit{in} and \textit{out}.
If action $i$ is \textit{in}, then $(u_i^A,u_i^P)$ are drawn by drawing $v_i$ from $F_i$ and setting  $(u_i^A,u_i^P)=(v_i+b_i,v_i)$.
If $i$ is \textit{out}, then $(u_i^A,u_i^P)=(-\infty,0)$ deterministically.
There is a natural bijection between configurations and solutions to delegated choice: the principal includes $i$ if and only if it is \textit{in}.
We next prove that delegated choice is $2$-utility aligned.

\begin{lemma}\label{lem:utility-alignment-delegation}
    Delegated choice is $2$-utility aligned.
\end{lemma}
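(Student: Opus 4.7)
I will show $\E[u^P \mathbb{I}[u^A \leq U]] \leq 2q\,\E[u^P]$ where $q = \Pr[u^A \leq U]$ under the optimal delegation set $S^*$. Two preliminary reductions clear out easy cases. First, if any $i \in S^*$ has $b_i > U$, then $u^A \geq v_i + b_i > U$ almost surely (using $v_i \geq 0$), forcing $q = 0$ and making the claim vacuous; so assume $b_i \leq U$ for every $i \in S^*$. Second, if $q \geq 1/2$, then $\E[u^P \mid u^A \leq U] = \E[u^P \mathbb{I}_E]/q \leq \E[u^P]/q \leq 2\E[u^P]$ holds trivially, so assume $q < 1/2$.

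The main identity I would use is $u^P = v_{i^*} = u^A - b_{i^*}$. Since $u^A \leq U$ on $E = \{u^A \leq U\}$, and letting $i_0 \in \arg\min_{i \in S^*} b_i$ so that $b_{i^*} \geq b_{i_0}$ deterministically,
\[
\E[u^P \mathbb{I}_E] = \E[u^A \mathbb{I}_E] - \E[b_{i^*}\mathbb{I}_E] \leq Uq - b_{i_0}\cdot q = (U - b_{i_0})\,q.
\]
So it suffices to establish $U - b_{i_0} \leq 2\E[u^P]$, which is where optimality of $S^*$ enters. The singleton policy $\{i_0\}$ yields principal utility $\E[v_{i_0}]$, and by optimality $\E[v_{i_0}] \leq \E[u^P]$. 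On the other hand, since $v_{i_0} \geq U - b_{i_0}$ on the event $\{v_{i_0} > U - b_{i_0}\}$ of probability $1 - q_{i_0}$ where $q_{i_0} = F_{i_0}(U - b_{i_0})$, we have $\E[v_{i_0}] \geq (U - b_{i_0})(1 - q_{i_0})$. Combining these yields $U - b_{i_0} \leq \E[u^P]/(1 - q_{i_0})$, which gives the desired factor of $2$ whenever $q_{i_0} \leq 1/2$.

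\textbf{Main obstacle.} The remaining subcase $q_{i_0} > 1/2$ is the technical crux: here $1/(1 - q_{i_0})$ exceeds $2$, so the singleton bound applied at $i_0$ alone is too weak. To resolve it, I would exploit the joint independence structure. Since $q = \prod_{j \in S^*} q_j < 1/2$ while $q_{i_0} > 1/2$, the remaining factors satisfy $\prod_{j \neq i_0} q_j < 1$, forcing at least one other action $j$ to have nontrivial ``firing probability'' $1 - q_j$. The cleanest resolution refines the coarse bound $b_{i^*} \geq b_{i_0}$ into a per-winner decomposition
\[
\E[u^P\mathbb{I}_E] \leq \sum_{i \in S^*}(U - b_i)\Pr[E_i],\qquad E_i = \{i^* = i\} \cap E,
\]
bounds each $\Pr[E_i] \leq q$ via the independence estimate $\Pr[E_i] \leq q_i \prod_{j \neq i}q_j = q$, and then controls each factor $U - b_i$ through the singleton bound $\E[v_i] \leq \E[u^P]$ paired with the median/Markov argument applied to $v_i$ itself. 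This prophet-inequality-style exchange---where the factor $2$ arises from a Markov-type bound at each winning action---is the place where the argument is most delicate and is the heart of the lemma.
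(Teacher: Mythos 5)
Your setup and the easy reductions are fine, and the sub-case $q_{i_0}\leq 1/2$ is handled correctly by the chain $\E[u^P\mathbb I_E]\leq (U-b_{i_0})q\leq \E[v_{i_0}]\,q/(1-q_{i_0})\leq 2\E[u^P]\,q$. But the case you yourself flag as the crux is not resolved by the sketch you give, and the gap is real. The paper's own tightness instance (two actions, $b_1=M-1+1/M$, $b_2=0$, $v_2=M$ w.p.\ $1/M$ and $0$ otherwise, with $U=M$ and $q=1/M$) shows why: there $q_2=\Pr[v_2+b_2\leq U]=1$, so the Markov/singleton bound $U-b_2\leq \E[v_2]/(1-q_2)$ is vacuous, and indeed $U-b_2=M$ while $\E[u^P]=1+o(1)$, so no per-action bound of the form $U-b_i\leq O(\E[u^P])$ can hold. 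Your per-winner decomposition $\E[u^P\mathbb I_E]\leq\sum_i(U-b_i)\Pr[E_i]$ is valid, but pairing it with $\Pr[E_i]\leq q$ gives, for $i=2$ in this instance, the term $M\cdot(1/M)=1$, which already exceeds the target $2q\,\E[u^P]\approx 2/M$. The decomposition only closes because $\Pr[E_2]=1/M^2=q^2\ll q$; that is, the smallness of $\Pr[E_i]$ must be traded off against the largeness of $U-b_i$ action by action, and your sketch supplies no mechanism for that trade-off.

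The paper's proof gets around this with a different idea you are missing: it thresholds the \emph{biases} at $U-X/2$, where $X=\E[u^P\mid E]$ is the quantity to be bounded. Actions with $b_i>U-X/2$ (``bad'') contribute at most $X/2$ pointwise conditioned on $E$, so the ``good'' actions carry at least $X/2$ of the conditional utility; delegating only the good set $G$ then guarantees $X/2$ unconditionally, because on realizations where the agent's utility from $G$ exceeds $U$ the chosen action has $v_i>U-b_i\geq X/2$ pointwise, and on the complementary event independence identifies the conditional utility of $G$ with the good actions' conditional contribution under $\opt$, which is at least $X/2$. Optimality of $\opt$ then gives $\E[u^P]\geq X/2$. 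If you want to salvage your route, you would need to replace the minimum-bias singleton with a thresholded sub-delegation of this kind; singleton deviations alone cannot control actions whose $q_i$ is close to $1$.
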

\begin{proof}
    \newcommand{\vopt}{v_{i^*(\opt)}}
    \newcommand{\bopt}{b_{i^*(\opt)}}
    \newcommand{\istaropt}{i^*(\opt)}
    Given an arbitrary delegation set $S$, let $i^*(S)$ denote the random variable representing the agent's favorite action from $S$, given the realization of $(v_1,\ldots, v_n)$.
    Let $\opt$ denote the optimal delegation set, and consider any utility level $U$ for the agent. 
    Let $X=\E[v_{i^*(\opt)}~|~v_{i^*(\opt)}+b_{i^*(\opt)}\leq U]$ denote the principal's expected utility from the event that the agent's utility is below $U$.
    We will show that there exists a delegation set that gives the principal expected utility at least $X/2$, and hence $\E[v_{i^*(\opt)}]\geq X/2$ as well, as $\opt$ is optimal.
    To do so, we partition $\opt$ into {\em good} actions $G=\{i~|~b_i\leq U-X/2\}$ and {\em bad} actions $B=\{i~|~b_i> U-X/2\}$.
    We claim that $\E[v_{i^*(G)}]\geq X/2$.

    Let $\mathcal E$ denote the event that $v_{i^*(\opt)}+b_{i^*(\opt)}\leq U$, and $\mathcal E_G,\mathcal E_B$ the events that the agent's choice from $\opt$ lies in $G$ or $B$, respectively.
    First, note that conditioned on $\mathcal E$, it must be that $v_i\leq X/2$ for all $i\in B$:
    \begin{equation*}
        \val_i \leq \utilityThreshold - \bias_i < \utilityThreshold - \utilityThreshold + \expectedUtilityBelow/2 = \expectedUtilityBelow/2.
    \end{equation*}
    One consequence of this fact is that good actions represent at least half of $X$.
    That is:
    \begin{align*}
        X&=\E[v_{i^*(\opt)}~|~\mathcal E]\\
        &=\E[v_{i^*(\opt)}~|~\mathcal E\cap\mathcal E_G]\Pr[\mathcal E_G~|~\mathcal E]+\E[v_{i^*(\opt)}~|~\mathcal E\cap\mathcal E_B]\Pr[\mathcal E_B~|~\mathcal E]\\
        &\leq\E[v_{i^*(\opt)}~|~\mathcal E\cap\mathcal E_G]\Pr[\mathcal E_G~|~\mathcal E]+\frac{X}{2},
    \end{align*}
    and hence $\E[v_{i^*(\opt)}~|~\mathcal E\cap\mathcal E_G]\Pr[\mathcal E_G~|~\mathcal E]\geq X/2$.

    Meanwhile, let $\mathcal E_G^{U}$ denote the event that $\val_{\agentPreferredOption(G)} + \bias_{\agentPreferredOption(G)} \leq \utilityThreshold$.
    Then conditioned on $\overline{\mathcal E_G^U}$, note that every good action yields utility at least $X/2$:
    \begin{equation*}
        \val_i > \utilityThreshold - \bias_i \geq \utilityThreshold - \utilityThreshold + \expectedUtilityBelow/2 = \expectedUtilityBelow/2.
    \end{equation*}
    We can now show that $\E[v_{i^*(G)}]\geq X/2$. 
    We have:
    \begin{align}
        \E[\val_{\agentPreferredOption(G)}] &= \Pr[\mathcal E_G^{U}] \cdot \E[\val_{\agentPreferredOption(G)} | \mathcal E_G^U] + (1-\Pr[\mathcal E_G^{U}]) \cdot \E[\val_{\agentPreferredOption(G)} | \overline{\mathcal E_G^{U}}]\notag\\
        &\geq  \Pr[\mathcal E_G^{U}] \cdot \E[\val_{\agentPreferredOption(G)} | \mathcal E_G^U] + (1-\Pr[\mathcal E_G^{U}]) \cdot X/2\notag\\
        &=  \Pr[\mathcal E_G^{U}] \cdot \E[\val_{\agentPreferredOption(G)} | \mathcal E] + (1-\Pr[\mathcal E_G^{U}]) \cdot X/2\label{eq:del1}\\
        &=  \Pr[\mathcal E_G^{U}] \cdot \E[\val_{\agentPreferredOption(G)} | \mathcal E\cap \mathcal E_G]\prob[\mathcal E_G~|~\mathcal E] + (1-\Pr[\mathcal E_G^{U}]) \cdot X/2\label{eq:del2}\\
        &=  \Pr[\mathcal E_G^{U}] \cdot \E[\val_{\agentPreferredOption(\opt)} | \mathcal E\cap \mathcal E_G]\prob[\mathcal E_G~|~\mathcal E] + (1-\Pr[\mathcal E_G^{U}]) \cdot X/2\label{eq:del3}\\
        &\geq \Pr[\mathcal E_G^{U}] \cdot X/2 + (1-\Pr[\mathcal E_G^{U}]) \cdot X/2\notag\\
        &=X/2.\notag
    \end{align}
    Line (\ref{eq:del1}) follows from the independence of good and bad actions' values.
    Line (\ref{eq:del2}) follows from the fact that values are nonnegative.
    Line (\ref{eq:del3}) then follows from the fact that $\mathcal E_G$ implies that $\agentPreferredOption(\opt)=\agentPreferredOption(G)$.
    The remaining lines follow from the facts above.
\end{proof}

We also show that the analysis above is tight. 
\begin{lemma}
     Delegated choice is not $c$-utility aligned for any $c < 2$.
\end{lemma}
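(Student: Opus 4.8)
The plan is to exhibit a family of delegation instances, indexed by parameters tending to appropriate limits, for which the optimal delegation set $\opt$ satisfies $\E[v_{i^*(\opt)}\mid v_{i^*(\opt)}+b_{i^*(\opt)}\le U]\to 2\,\E[v_{i^*(\opt)}]$ for a suitable threshold $U$. The construction directly realizes the tight case of the $G/B$ decomposition in the proof of Lemma~\ref{lem:utility-alignment-delegation}: there, $X=\E[v_{i^*(\opt)}\mid\mathcal E]$ splits as a contribution of (almost) $X/2$ from ``good'' actions (low/negative bias, whose value may be large conditioned on $\mathcal E$) plus (almost) $X/2$ from ``bad'' actions (bias just above $U-X/2$, hence value just below $X/2$ on $\mathcal E$), while the overall expected utility is only $\approx X/2$ because the large-value good realization is rare.

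Concretely, I would take two actions with $U=2$ (so $U-X/2=1$ with $X=2$): a good action $G$ with a large negative bias $b_G=-D$ and value $v_G=D+2$ with probability $1/(D+2)$ and $v_G=0$ otherwise (so $\E[v_G]=1$ and $G$'s agent utility is $2$ or $-D$), and a bad action $B$ with bias $b_B=1+\delta$ and value $v_B=1-\delta$ with probability $t$ and $v_B=W$ with probability $1-t$, where $W$ is chosen just below $1$. First I would verify the agent's choice realization-by-realization: when $v_B=W$, $B$ is picked (agent utility $W+1+\delta>2$) and the principal gets $W$; when $v_B=1-\delta$, the agent utility from $B$ is exactly $2$, so $G$ is picked exactly when $v_G=D+2$ (principal gets $D+2$), else $B$ is picked (principal gets $1-\delta$). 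Hence $u^A\in\{W+1+\delta,\,2\}$, so with $U=2$ we get $q=\Pr[u^A\le U]=t$ and $\E[u^P\mid u^A\le U]=1+\tfrac{D+1}{D+2}(1-\delta)\to 2$ as $D\to\infty,\ \delta\to0$, while $\E[u^P]=(1-t)W+t\bigl(1+\tfrac{D+1}{D+2}(1-\delta)\bigr)\to 1$ if $W\to1^-$ and $t\to0^+$ suitably; the ratio then tends to $2$.

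The main obstacle is the last step: showing the exhibited set $\{G,B\}$ is actually optimal. With only two actions the competitors are $\{G\}$ and $\{B\}$, and $\{G,B\}\ge\{B\}$ is immediate, but $\{G,B\}\ge\{G\}$ is delicate, since the principal is tempted to offer the large-value action $G$ alone. Adding $B$ costs $(1-t)(1-W)$ in the ``common'' realizations (where $B$ steals the agent with value $W<1$ instead of $G$'s expected value $1$) and gains $\approx t(1-\delta)$ in the ``rare'' realizations (where $B$ supplies value $1-\delta$ instead of $0$); so I would pick $W$ just large enough that the gain outweighs the cost, i.e.\ $W\ge 1-\tfrac{t}{1-t}\cdot\tfrac{D+1}{D+2}(1-\delta)$ (and still $W>1-\delta$, which holds for $t$ small relative to $\delta,D$). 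Getting the order of limits and the interval for $W$ right so that $\{G,B\}$ is optimal while the ratio still tends to $2$ is the crux; everything else is routine case analysis and arithmetic.
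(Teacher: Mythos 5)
Your construction is essentially the paper's own proof: the paper also uses a two-action instance (one high-bias action with value concentrated just below the threshold, one zero-bias action with a rare large value), checks that the optimal delegation set contains both, and shows the conditional-to-unconditional ratio tends to $2$. Your version works, and the optimality step you flag as the crux does close: the condition $W\ge 1-\tfrac{t}{1-t}\cdot\tfrac{D+1}{D+2}(1-\delta)$ you derive is exactly what $\{G,B\}\succeq\{G\}$ requires, and it is compatible with $1-\delta<W<1$ for any $t,\delta,D$, while $\E[u^P]\to 1$ holds for every such $W$, so no delicate ordering of limits is needed. The one point to repair is the exact tie at agent utility $2$ between $B$'s low realization and $G$'s high realization: as written, your claim that the agent picks $G$ there assumes principal-favorable tie-breaking, and under adversarial tie-breaking the conditional expectation collapses to $1-\delta$. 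Either invoke the paper's tie-breaking convention or perturb (e.g., give $G$'s high realization agent utility $2+\epsilon$ and take $U=2+\epsilon$); the paper's instance sidesteps this by choosing parameters so that no two realizations tie.
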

\begin{proof}
     Consider the following instance with 2 actions. The biases for these actions are $\bias_1 = \bucketsCount-1 + 1/\bucketsCount$ and $\bias_2 = 0$.
     \[
         \val_1 = \begin{cases}
             1, &w.p.~ 1-1/\bucketsCount\\
             1-2/\bucketsCount, &w.p.~ 1/\bucketsCount
         \end{cases} \hspace{2cm}
         \val_2 = \begin{cases}
             \bucketsCount, &w.p.~ 1/\bucketsCount\\
             0, &w.p.~ 1-1/\bucketsCount
         \end{cases}
     \]

     In this case, it can be easily checked that the optimal delegation set includes both actions, which results in the following principal utility distribution:
     \[
         \val_{\agentPreferredOption(\opt)} = \begin{cases}
             1, &w.p.~ 1-1/\bucketsCount\\
             \bucketsCount, &w.p.~ (1/\bucketsCount)^2\\
             1-2/\bucketsCount, &w.p.~ (1-1/\bucketsCount)/\bucketsCount
         \end{cases}
     \]
    
    The total principal utility is equal to  \[(1-1/\bucketsCount)+(1/\bucketsCount)
    + (1-1/\bucketsCount)(1-2/\bucketsCount)/\bucketsCount =1+(1-1/\bucketsCount)(1-2/\bucketsCount)/\bucketsCount\] 
    
    Notice that the agent's utility from the $1/\bucketsCount$-quantile comes from the last two realizations of $\val_{\agentPreferredOption(\opt)}$. The principal's utility conditioned on the last two realizations is 
    \[
    (1/\bucketsCount) \cdot \bucketsCount + (1-1/\bucketsCount) \cdot (1-2/\bucketsCount)=1+(1-1/\bucketsCount)(1-2/\bucketsCount).
    \]
    
    If delegated choice were $c$-utility aligned for some $c<2$, it would have to be that
    \[
    \frac{1+(1-1/\bucketsCount)(1-2/\bucketsCount)}{1+(1-1/\bucketsCount)(1-2/\bucketsCount)/\bucketsCount} \leq c,
    \]
    which is a contradiction, since the limit of this expression (as $M\to\infty$) is $2$.
\end{proof}

Applying the analysis of the PTAS from the previous section--- in particular, Corollary~\ref{cor:2}--- we immediately obtain the following:
\begin{corollary}
    For any fixed $M$, the algorithm from Section~\ref{sec:main-results} with $M$ bins is a $(1-O(\log M/M))$-approximation for the delegated choice problem and runs in polynomial time.
\end{corollary}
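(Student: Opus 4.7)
The corollary is essentially a composition of three results already established in the paper, so the plan is to string them together carefully and verify that the reduction preserves the parameters that the Section~\ref{sec:main-results} algorithm needs.

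First, I would invoke the reduction from delegated choice to utility configuration described at the start of Section~\ref{sec:applications-delegation}: for each action $i\in[n]$, create a utility-configuration action with two configurations, \emph{in} (giving $(u_i^A,u_i^P)=(v_i+b_i,v_i)$ with $v_i\sim F_i$) and \emph{out} (giving $(u_i^A,u_i^P)=(-\infty,0)$). The bijection between delegation sets and configurations preserves the principal's expected utility action-by-action, so the two problems are equivalent, and the resulting utility configuration instance is discrete with $m=2$ configurations per action and polynomially sized support.

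Second, I would cite Lemma~\ref{lem:utility-alignment-delegation}, which establishes that the resulting utility configuration instance is $2$-utility aligned. By Definition~\ref{def:rs}, this gives alignment coefficients $r_j=2$ for all $j\in\{1,\ldots,M\}$.

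Finally, I would apply Corollary~\ref{cor:2} to conclude that the algorithm of Section~\ref{sec:main-results}, run with $M$ bins on this utility configuration instance, returns a configuration whose expected principal utility is at least $(1-O(\log M/M))\cdot\E[\optup]$. The runtime bound $O(m^{M+1}n^{2M+1}K^MM^{4M})$ from Section~\ref{sec:main-runtime-analysis} is polynomial in $n$ for any fixed $M$ (with $m=2$ and $K$ equal to the support size of the $F_i$'s), which completes the claim.

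There is no substantive obstacle here; every required ingredient is proved earlier in the paper, and the only things to check are that (i) the reduction produces a valid discrete utility configuration instance so that preprocessing and the dynamic program apply, and (ii) the optimum of the reduced instance equals the optimum of the original delegation problem, which is immediate from the bijection between delegation sets and configurations. The proof is therefore a short chain of citations.
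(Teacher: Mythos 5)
Your proposal is correct and takes essentially the same route as the paper, which likewise obtains this corollary immediately by composing the reduction from delegated choice to utility configuration, the $2$-utility alignment guarantee of Lemma~\ref{lem:utility-alignment-delegation}, and Corollary~\ref{cor:2}, together with the runtime analysis of Section~\ref{sec:main-runtime-analysis}. The additional checks you mention (that the reduced instance is a valid discrete instance with $m=2$ and that the bijection preserves the optimum) are exactly the implicit steps the paper relies on.
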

\subsection{Unit-demand Pricing}\label{sec:applications-pricing}
We next consider the problem of pricing $n$ items for sale to a unit-demand agent.
Recall that for this problem, we have $n$ items.
Item $i$ has value $v_i$ for the buyer, which has distribution $F_i$.
The buyer's values for the different items are independent.
The principal (seller) selects a price $p_i\geq 0$ for each item $i$.
The buyer then realizes their values and picks the single item $i^*$ that maximizes their profit $v_{i^*}-p_{i^*}$, or leaves without buying an item if $v_i-p_i<0$ for all $i$.
The seller's utility is their revenue, equal to $p_{i^*}$ (or $0$ if no item is sold).

Unit-demand pricing reduces to utility configuration: each item will represent an action in the configuration instance.
When the seller can use any positive real-valued price, the resulting utility configuration problem will have a continuum of configurations per action.
While this is a mathematically well-defined optimization problem, the PTAS of the previous section requires there to be a finite number of configurations per action.
This can be solved via standard price discretization techniques, e.g.\ those from \citet{CD11}.
We discuss the reduction from continuous to discrete pricing in Section~\ref{sec:pricing}, where we also show how to obtain a PTAS for unit-demand pricing with unbounded regular value distributions.
For the time being, we prove that unit-demand pricing is $2$-utility aligned for any set of prices available to the seller, as long as this set includes $\infty$ (i.e., not selling an item).

The reduction from unit-demand pricing to utility configuration is the following.
Each item $i$ corresponds to an action in the configuration problem.
Each price $p_i$ for item $i$ corresponds to a configuration for action $i$.
To draw the utilities for item $i$ priced at $p_i$, draw $v_i$ from $F_i$.
If $v_i\geq p_i$, then $(u_i^A,u_i^P)=(v_i-p_i,p_i)$.
Otherwise, $=(u_i^A,u_i^P)=(0,0)$.
Using a very similar argument to the one for delegation (Lemma~\ref{lem:utility-alignment-delegation}), we can prove the following:
\begin{lemma}
    Unit-demand pricing is $2$-utility aligned as long as the seller's set of allowable prices contains $\infty$.
\end{lemma}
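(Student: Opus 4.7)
The plan is to mimic the proof of Lemma~\ref{lem:utility-alignment-delegation} with item prices $p_i$ playing the role of the biases $b_i$ and the seller's revenue $p_{i^*}$ playing the role of the principal's value $v_{i^*}$. Let $\opt$ be the optimal pricing, let $\mathcal{E}=\{\optua\leq U\}$ be the low-agent-utility event, and let $X=\E[\optup\mid\mathcal{E}]$ be the quantity to upper bound by $2\E[\optup]$. Since $\opt$ is optimal, it suffices to exhibit any feasible pricing whose expected revenue is at least $X/2$. To that end, I partition the items into \emph{good} items $G=\{i:p_i>X/2\}$ and \emph{bad} items $B=\{i:p_i\leq X/2\}$, and consider the alternative pricing that keeps $\opt$'s prices on $G$ while setting $p_i=\infty$ for $i\in B$; this is feasible precisely because the hypothesis ensures $\infty\in\priceSet$.

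Writing $\mathcal{E}_G,\mathcal{E}_B$ for the events that $\opt$'s chosen item lies in $G$ or $B$, the fact that $\optup\leq X/2$ on $\mathcal{E}\cap\mathcal{E}_B$ gives $\E[\optup\mid\mathcal{E}\cap\mathcal{E}_G]\Pr[\mathcal{E}_G\mid\mathcal{E}]\geq X/2$ by the same decomposition as in the delegation proof; i.e., good items already contribute at least half of $X$ under $\opt$. To turn this into a revenue bound for the restricted pricing, let $i^*(G)$ denote the agent's favorite under that pricing, with revenue $u^P_G=p_{i^*(G)}$ and agent utility $u^A_G$, and set $\mathcal{E}_G^U=\{u^A_G\leq U\}$. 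On $\overline{\mathcal{E}_G^U}$ the agent's utility strictly exceeds $U\geq 0$, forcing them to actually purchase, and their purchase must lie in $G$, so $u^P_G>X/2$. On $\mathcal{E}_G^U$ itself, since $u^P_G$ depends only on good items' values and $\mathcal{E}$ factors as $\mathcal{E}_G^U$ intersected with the analogous event for bad items (independent of the good items), one obtains $\E[u^P_G\mid\mathcal{E}_G^U]=\E[u^P_G\mid\mathcal{E}]$. Splitting this further by $\mathcal{E}_G$ versus $\mathcal{E}_B$, discarding the bad case by nonnegativity, and noting that $i^*(G)=i^*(\opt)$ on $\mathcal{E}\cap\mathcal{E}_G$ (so $u^P_G=\optup$ there), yields $\E[u^P_G\mid\mathcal{E}_G^U]\geq X/2$ via the previous bound. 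Total expectation then gives $\E[u^P_G]\geq X/2$, and by optimality $\E[\optup]\geq X/2$, as required.

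The main subtleties are the two places the hypotheses enter: the inclusion of $\infty$ in $\priceSet$ is precisely what makes the restricted pricing feasible, and the fact that agent utilities are always nonnegative in the pricing reduction (declining to buy yields utility $0$) guarantees both that $U\geq 0$ may be taken without loss of generality and that the agent truly purchases on $\overline{\mathcal{E}_G^U}$. Aside from these points, the argument is structurally identical to the delegation proof, with the minor simplification that $p_i$ is deterministic once the pricing is fixed, so no integration over the chosen item's value is needed on the good side.
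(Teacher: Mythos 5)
Your proposal is correct and follows essentially the same route as the paper's proof: the same price-threshold partition into good and bad items at $X/2$, the same observation that good items carry at least half of $X$ under $\mathcal E$, and the same use of independence to relate the restricted pricing conditioned on $\mathcal E_G^U$ back to the optimal pricing conditioned on $\mathcal E$. The only cosmetic difference is that you equate conditional expectations of the good-item revenue directly via the product structure of $\mathcal E$, where the paper compares item-wise purchase probabilities and sums; the underlying argument is identical.
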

\begin{proof}
    Given price vector $\prices$, let $i^*(\prices)$ denote the buyer's utility-maximizing choice of item, or $0$ if no item yields nonnegative utility.
    Then the seller's revenue is $\E[p_{i^*(\prices)}]$, where we adopt the convention that $p_0=0$ and $v_0=0$.
    Now let ${\prices}$ denote the optimal price vector, and consider a utility level $U$ for the buyer.
    Let $\mathcal E$ denote the event that $v_{i^*({\prices})}-p_{i^*({\prices})}\leq U$.
    Let $X=\E[\optrev~|~\mathcal E]$.
    We will show that there is a pricing $\prices'$ where the seller earns at least $X/2$ in expectation.
    To pick $\prices'$, let $p_i'= p_i$ for all $i$ such that $ p_i\geq X/2$, and $p_i'=\infty$ for all other items.
    Call the former set of items the {\em good} items $G$ and the latter set the {\em bad} items $B$.

    The claim will follow from two key observations.
    First, we observe that the good items represent at least half the seller's revenue conditioned on $\mathcal E$.
    To see this, let $\mathcal E_G$ denote the event that the agent buys a good item given $\prices$, and $\mathcal E_B$ similarly for a bad item.
    We have:
    \begin{align*}
        X&=\E[p_{i^*( \prices)}~|~\mathcal E]\\
        &=\E[p_{i^*( \prices)}~|~\mathcal E\cap\mathcal E_G]\Pr[\mathcal E_G~|~\mathcal E]+\E[p_{i^*( \prices)}~|~\mathcal E\cap\mathcal E_B]\Pr[\mathcal E_B~|~\mathcal E]\\
        &\leq\E[p_{i^*( \prices)}~|~\mathcal E\cap\mathcal E_G]\Pr[\mathcal E_G~|~\mathcal E]+\frac{X}{2},
    \end{align*}
    and hence $\E[p_{i^*( \prices)}~|~\mathcal E\cap\mathcal E_G]\Pr[\mathcal E_G~|~\mathcal E]\geq X/2$.

Now let $\mathcal E_G^U$ denote the event that the buyer's favorite item under $\prices'$ yields utility less than $U$, i.e. $v_{i^*(\prices')}-p_{i^*(\prices')}\leq U$.
The second observation is that conditioned on $\mathcal E_G^U$, the buyer is more likely to choose any $i\in G$ given $\prices'$ than they are to choose $i$ given $\prices$ conditioned on $\mathcal E$.
Intuitively, in the former scenario, item $i$ has the same distribution but less competition. Formally, we have the following for all $i\in G$:
\begin{align}
    \Pr[v_i-p_i'\geq v_{i'}-p_{i'}'~\forall i'\in G~|~\mathcal E_G^{U}]&=\Pr[v_i-p_i\geq v_{i'}-p_{i'}~\forall i'\in G~|~\mathcal E_G^{U}]\label{eq:priceprob1}\\
    &=\Pr[v_i-p_i\geq v_{i'}-p_{i'}~\forall i'\in G~|~\mathcal E]\label{eq:priceprob2}\\
    &\geq \Pr[v_i-p_i\geq v_{i'}-p_{i'}~\forall i'\in[n]~|~\mathcal E]\label{eq:priceprob3}.
\end{align}
Line (\ref{eq:priceprob1}) holds as $p_i=p_i'$ for all $i\in G$.
Line (\ref{eq:priceprob2}) follows from the independence of values between items.
Line (\ref{eq:priceprob3}) follows because we have added more constraints compared to the line before.

The following inequalities then prove the lemma:
\begin{align}
        \E[p_{\agentPreferredOption(\prices')}'] &= \Pr[\mathcal E_G^{U}] \cdot \E[p_{\agentPreferredOption(\prices')}' | \mathcal E_G^U] + (1-\Pr[\mathcal E_G^{U}]) \cdot \E[p_{\agentPreferredOption(\prices')}' | \overline{\mathcal E_G^{U}}] \notag\\
        &\geq  \Pr[\mathcal E_G^{U}] \cdot \E[p_{\agentPreferredOption(\prices')}' | \mathcal E_G^U] + (1-\Pr[\mathcal E_G^{U}]) \cdot X/2\label{eq:udp2}\\
        &=  \Pr[\mathcal E_G^{U}] \sum_{i\in G}p_i'\Pr[v_i-p_i'\geq v_{i'}-p_{i'}'~\forall i'\in G~|~\mathcal E_G^{U}] + (1-\Pr[\mathcal E_G^{U}]) \cdot X/2\label{eq:udp4}\\
        &\geq  \Pr[\mathcal E_G^{U}] \sum_{i\in G}p_i\Pr[v_i-p_i\geq v_{i'}-p_{i'}~\forall i'\in[n]~|~\mathcal E] + (1-\Pr[\mathcal E_G^{U}]) \cdot X/2\label{eq:udp7}\\
         &=  \Pr[\mathcal E_G^{U}]\E[p_{i^*(\prices)}~|~\mathcal E\cap\mathcal E_G]\Pr[\mathcal E_G~|~\mathcal E] + (1-\Pr[\mathcal E_G^{U}]) \cdot X/2\notag\\
        &\geq \Pr[\mathcal E_G^{U}] \cdot X/2 + (1-\Pr[\mathcal E_G^{U}]) \cdot X/2\notag\\
        &=X/2.\notag
    \end{align}
    Line (\ref{eq:udp2}) holds because whenever there is some good item $i$ with $v_{i^*(\prices')}-p_{i^*(\prices')}\geq U\geq 0$, the buyer purchases a good item under $\prices'$.
    Line (\ref{eq:udp4}) rewrites the seller's revenue under $\mathcal E_G^U$ in terms of the purchase probabilities for the individual items.
    Line (\ref{eq:udp7}) holds because item $i$ is only more likely to be bought if the bad actions are priced at $\infty$ than not, and because $p_i=p_i'$ for all $i\in G$.
    Since $\prices'$ yields at least $X/2$, the optimal pricing must yield at least this much revenue as well, proving the lemma.
     \end{proof}

    While the framework of Section~\ref{sec:main-results} is not compatible with infinitely many prices (and hence configurations of each action), for any fixed set of $m$ allowable prices per item (including $\infty$), we obtain the following by applying Corollary~\ref{cor:2}.
\begin{corollary}
    For any fixed $M$, the algorithm from Section~\ref{sec:main-results} with $M$ bins is a $(1-O(\log M/M))$-approximation for the unit-demand pricing problem with $m$ possible prices, and runs in polynomial time.
\end{corollary}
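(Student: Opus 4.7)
The plan is to assemble the corollary directly from three ingredients that have already been established in the paper: (i) the reduction of unit-demand pricing to utility configuration, (ii) the $2$-utility alignment guarantee just proven, and (iii) the generic approximation guarantee of Corollary~\ref{cor:2} together with the runtime analysis of Section~\ref{sec:main-runtime-analysis}.

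First, I would invoke the reduction described at the start of this subsection. Given a unit-demand pricing instance with $n$ items and a fixed set of $m$ allowable prices per item (including $\infty$), create a utility configuration instance with $n$ actions, each having $m$ configurations — one per price — with joint utility distribution $(u_i^A,u_i^P)=(v_i-p_i,p_i)$ when $v_i\geq p_i$ and $(0,0)$ otherwise. The value distributions $F_i$ are discrete by assumption on the pricing input (or can be made so in the obvious way), so the resulting utility configuration instance is discrete. Moreover, the seller's expected revenue under any price vector is exactly the principal's expected utility under the corresponding configuration, so an $\alpha$-approximate configuration translates into an $\alpha$-approximate pricing.

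Second, I would cite the preceding lemma, which shows that since the set of allowable prices includes $\infty$, the reduced utility configuration instance is $2$-utility aligned. Combining this with Corollary~\ref{cor:2}, the algorithm of Section~\ref{sec:main-results} with $M$ bins returns a configuration — and hence, via the reduction, a price vector — whose expected revenue is at least a $(1-O(\log M/M))$-fraction of the optimal expected revenue.

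Finally, for the runtime, I would appeal to Section~\ref{sec:main-runtime-analysis}, which shows that the algorithm runs in time polynomial in $n$, $m$, the number of mass points $K$ of the input distributions, and the bit complexity of utilities and probabilities, for any fixed $M$. Since the reduction produces an instance whose size is polynomial in the unit-demand pricing input size (there are $m$ configurations per action, and each one inherits its discrete distribution directly from $F_i$), the overall runtime is polynomial. There is no real obstacle here — the entire argument is a bookkeeping exercise that stitches together the reduction, the alignment lemma, Corollary~\ref{cor:2}, and the runtime bound from Section~\ref{sec:main-runtime-analysis}.
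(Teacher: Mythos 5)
Your proposal is correct and matches the paper's own reasoning: the paper likewise obtains this corollary by combining the reduction to utility configuration, the preceding lemma showing $2$-utility alignment (which requires $\infty$ among the allowable prices, as you correctly note), Corollary~\ref{cor:2}, and the runtime analysis of Section~\ref{sec:main-runtime-analysis}. No gaps.
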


We show in Section~\ref{sec:pricing} how to reduce the infinite-prices problem to the version with finitely many prices.
For discrete value distributions, the number of prices required will be polynomial in the bit complexity of the values, leading to a PTAS for that problem as well.
For unbounded regular distributions, the number of prices required will be polynomial in the number of items.
\subsection{Local to Global Utility Alignment}\label{sec:applications-general}
Sections~\ref{sec:applications-delegation} and \ref{sec:applications-pricing} showed that both delegated choice and unit-demand pricing were $2$-utility aligned.
Both proofs relied heavily on the structures of the problems: both involved imposing a threshold on an existing solution (either based on bias or price) and discarding all items or actions below the threshold.
For general instances of utility configuration, such an approach may not be possible.
However, in this section, we show a weaker utility alignment result that applies to a wide range of natural problems.
The guarantee will still be strong enough to obtain a PTAS using the algorithm of Section~\ref{sec:main-results}.

Definition \ref{def:alignment} of utility alignment is a statement about optimal solutions: an instance is utility aligned if, under the optimal solution, the principal's and agent's utilities are aligned.
The result of this section considers an item-wise version of this same condition, defined below.
\begin{definition}\label{def:local}
    For non-increasing function $f$, an instance of utility configuration is {\em locally $f(q)$-utility aligned} if for every action $i$, configuration $j$ and utility level $U$ such that $\Pr[u_i^A\leq U]= q$, we have:
\[
        \E_{(\utilAgent{i},\utilPrincipal{i})\sim F_{ij}}[\utilPrincipal{i} | \utilAgent{i} \leq \utilityThreshold] \leq f(\q) \cdot \E_{(\utilAgent{i},\utilPrincipal{i})\sim F_{ij}}[\utilPrincipal{i}].
    \]
\end{definition}
Note that Definition~\ref{def:local} is satisfied by both delegated choice (Section~\ref{sec:applications-delegation}) and unit-demand pricing (Section~\ref{sec:applications-pricing}), each with $f(q)$ identically equal to $1$ (i.e.\ both problems are {\em locally $1$-utility aligned}).
For example, in delegated choice, the agent's and principal's utilities are perfectly and positively correlated: for actions which are included in the delegation set, $u_i^P=v_i$ increases only if $u_i^A=v_i+b_i$ increases.
The following result shows that local utility alignment (Definition~\ref{def:local}) is sufficient for utility alignment overall (Definition \ref{def:alignment}), albeit with a weaker dependence of $f$ on $q$ than the results for delegated choice and pricing.
The result also requires one more natural property --- which again holds for all problems in this paper --- and which states that the principal can choose to completely discard any action.
For example, in delegation, the principal's main decision is exactly this choice of including or excluding actions.
\begin{definition}\label{def:excludable}
    A utility configuration instance has {\em excludable actions} if for every action $i$, there is a configuration such that $u_i^A=-\infty$ deterministically in that configuration.
\end{definition}

We say action $i$ is \emph{excluded} in a configuration $\config$ if $C_i$ is such that $\utilAgent{i} = -\infty$ deterministically. 
The result is then the following:

\begin{lemma}\label{lem:localtoglobal}
    Any instance of utility configuration with excludable actions that satisfies local $f(q)$-utility alignment also satisfies $\max(4f(q),4/\sqrt{q})$-utility alignment.
\end{lemma}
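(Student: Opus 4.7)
The plan is to lower bound $\E[u^P]$ (the expected principal utility under $\config^*$) by constructing alternative configurations that use $\config^*$'s configuration on a chosen subset of actions and exclude the rest; such configurations are valid by the excludable-actions hypothesis. Let $q_i = \Pr[u_i^A \leq U]$ and $\mu_i = \E[u_i^P]$ denote the per-action marginals under $\config^*$; independence across actions gives $\prod_i q_i = q$.

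First, I would establish the per-action bound $X \leq \sum_i \E[u_i^P \mid u_i^A \leq U] \leq \sum_i f(q_i) \mu_i$. Writing $\alpha_i = \E[u_i^P \,\mathbb{I}[u_i^A \leq U,\, i^* = i]]$, we have $qX = \sum_i \alpha_i$. Since the event $\{u^A \leq U,\, i^* = i\}$ forces every $u_j^A \leq U$, independence yields $\alpha_i \leq q \cdot \E[u_i^P \mid u_i^A \leq U]$; local utility alignment and $f(q_i) \leq f(q)$ (since $q_i \geq q$ and $f$ is non-increasing) then give the second inequality.

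Next, I would derive lower bounds on $\E[u^P]$ via the excludable-actions hypothesis. For any subset $T \subseteq [n]$, the ``include $T$'' strategy has utility at most $\E[u^P]$ and, via a sufficient-event argument (action $i$ is selected whenever $u_i^A > U$ and every other $u_j^A \leq U$ for $j \in T$), at least $q_T \sum_{i \in T} \nu_i/q_i$, where $q_T = \prod_{i \in T} q_i$ and $\nu_i = (1-q_i)\E[u_i^P \mid u_i^A > U]$. Taking $T = \{i\}$ in particular gives the single-action bound $\E[u^P] \geq \mu_i$.

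The proof then splits into two cases. If $\alpha_{i^*} \geq qX/4$ for some $i^*$, then from $\alpha_{i^*} \leq q f(q_{i^*}) \mu_{i^*}$ and $f(q_{i^*}) \leq f(q)$ we get $\mu_{i^*} \geq X/(4 f(q))$ and hence $\E[u^P] \geq X/(4 f(q))$. Otherwise every $\alpha_i < qX/4$, and I would use a partitioning strategy for the $\sqrt{q}$ bound: split $[n]$ greedily into $T_1 \sqcup T_2$ with $q_{T_1}$ and $q_{T_2}$ both of order $\sqrt{q}$, so that the better of the two subset strategies yields $\E[u^P] \gtrsim (\sqrt{q}/2)\sum_i \nu_i/q_i$. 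Combining the identity $\sum_i \E[u_i^P \mid u_i^A \leq U] = \sum_i \mu_i/q_i - \sum_i \nu_i/q_i$, the Step~1 bound $\sum_i \E[u_i^P \mid u_i^A \leq U] \geq X$, and local alignment $\nu_i \geq (1 - q_i f(q_i))_+ \mu_i$ should yield $\sum_i \nu_i/q_i \geq X/2$ and hence $\E[u^P] \geq \sqrt{q}\, X/4$. The main obstacle is this final step: when some $q_i f(q_i)$ is close to $1$, the corresponding $\nu_i$ can vanish and the partition bound loses strength. I expect to resolve this via further sub-casing on which actions contribute significantly to $\sum_i \alpha_i$, handling ``nearly-tight'' actions (with $q_i f(q_i)$ near $1$ and non-negligible $\alpha_i$) individually via the single-action bound and the remaining ``slack'' actions via the partition.
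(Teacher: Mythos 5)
Your Case~A (a single action contributing at least $qX/4$) matches the paper's argument essentially verbatim, and your per-action bound $\alpha_i\leq q\,\E[u_i^P\mid u_i^A\leq U]$ is correct. The gap is in Case~B. First, the proposed greedy split of $[n]$ into $T_1\sqcup T_2$ with $q_{T_1}$ and $q_{T_2}$ \emph{both} of order $\sqrt q$ is not achievable in general: since $q_{T_1}q_{T_2}=q$, one product is always at least $\sqrt q$, but if a single action has $q_i$ close to $q$ (and the rest have $q_j$ near $1$), whichever part contains that action has product close to $q$ and the other close to $1$, so you cannot balance the products. You would then need the part with the \emph{large} product to also carry a constant fraction of $\sum_i\nu_i/q_i$, which your balancing criterion does not control. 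Second, even granting a good partition, the claimed inequality $\sum_i\nu_i/q_i\geq X/2$ does not follow from the listed ingredients: the identity $\sum_i\mu_i/q_i-\sum_i\nu_i/q_i=\sum_i\E[u_i^P\mid u_i^A\leq U]\geq X$ bounds $\sum_i\nu_i/q_i$ from \emph{above}, not below, and the bound $\nu_i\geq(1-q_if(q_i))_+\mu_i$ is vacuous exactly in the regime you flag. The further sub-casing you "expect" to resolve this is the missing core of the argument, so as written the $4/\sqrt q$ branch is not established.

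The paper sidesteps both difficulties by partitioning according to contribution to $X$ rather than according to the probability products: under the Case~B hypothesis that every action contributes at most $X/4$, one can split the actions into $S_1,S_2$ each contributing at least $X/4$ to $X$. For each $j$, the configuration $\config^j$ that keeps $S_j$ as in the optimum and excludes the rest satisfies
\begin{equation*}
\E\bigl[u^P_{i^*(\config^j)}\bigr]\;\geq\;\E\bigl[u^P_{i^*(\config^j)}\mid \text{all of }S_j\leq U\bigr]\cdot\epsilon_j\;=\;\E\bigl[u^P_{i^*(S_j)}\mid \mathcal E\bigr]\cdot\epsilon_j\;\geq\;\tfrac{X}{4}\,\epsilon_j,
\end{equation*}
where $\epsilon_j=\Pr[\max_{i\in S_j}u_i^A\leq U]$, the middle equality uses independence, and the last step restricts to the event that the overall optimum's choice lies in $S_j$. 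Since $\epsilon_1\epsilon_2=q$, some $\epsilon_j\geq\sqrt q$, giving $X\sqrt q/4$ directly; no lower bound on the above-threshold masses $\nu_i$ is ever needed. I recommend replacing your probability-product balancing and the $\nu_i$-based chain with this contribution-based partition; it closes the gap and shortens the proof.
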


As a consequence of Lemma~\ref{lem:localtoglobal}, local $O(1/\sqrt{q})$-utility alignment also translates into overall $O(1/\sqrt{q})$-utility alignment.
We will show in Sections~\ref{sec:applications-delegation-random-bias}-\ref{sec:applications-assortment} that two generalizations of the delegated choice problem, as well as the assortment optimization problem from the revenue management literature, all in fact satisfy local $1$-utility alignment, and hence are $O(1/\sqrt{q})$-aligned overall.

\begin{proof}[Proof of Lemma~\ref{lem:localtoglobal}]
    Let $\opt$ denote the optimal configuration, and given any configuration $\config$, let $\agentPreferredOption(\config)$ be the random variable denoting the agent's choice under $\config$.
    Given utility threshold $U$, let $\mathcal E$ denote the event that $\utilAgent{\agentPreferredOption(\OPT)} \leq \utilityThreshold$, and let $q=\Pr[\mathcal E]$.
    Further let $X=\E[\utilPrincipal{\agentPreferredOption(\OPT)} | \utilAgent{\agentPreferredOption(\OPT)} \leq \utilityThreshold]$.
    The proof considers two cases.
    The main goal is to break the actions into two groups with roughly equal contribution to $X$, and to show that discarding the worse of these two groups guarantees the principal utility at least $\sqrt{q}$.
    The only case where this strategy fails is when one action contributes a large fraction of $X$.
    In this case, local utility alignment implies that this action alone will provide high utility.

    \paragraph{Case 1.} Assume that for every action $i$, the contribution to $X$ given by $\E[\utilPrincipal{\agentPreferredOption(\OPT)} | \mathcal E\cap \agentPreferredOption(\OPT)=i]\Pr[\agentPreferredOption(\OPT)=i~|~\mathcal E]$ is at most $X/4$.
    Then we may partition the actions into two sets $S_1$ and $S_2$ such that for $j\in \{1,2\}$,
    \begin{equation*}
        \E[\utilPrincipal{\agentPreferredOption(\OPT)} | \mathcal E\cap \mathcal E_j]\Pr[\mathcal E_j~|~\mathcal E]\geq X/4,
    \end{equation*}
    where $\mathcal E_j$ denotes the event that $\agentPreferredOption(\OPT)\in S_j$.
    Let $\config^j$ denote the configuration where actions not in $S_j$ are excluded, and those in $S_j$ are configured as in $\opt$.
    Let $\agentPreferredOption(S_j,\opt)$ denote the agent's favorite action from $S_j$ under $\opt$, and recall that $\agentPreferredOption(\config^j)$ is the agent's favorite action under $\config^j$.
    Note that $\agentPreferredOption(S_j,\opt)$ and $\agentPreferredOption(\config^j)$ have the same distribution.
    Let $\mathcal E_j^U$ denote the event that the agent's favorite action under $\config^j$ gives them utility at most $U$, or equivalently that $u_{\agentPreferredOption(S_j)}^A\leq U$, and let $\epsilon_j=\Pr[\mathcal E_j^U]$.
    
    The result then follows from two observations.
    First, $\config^j$ guarantees the principal at least utility $X/4\cdot \epsilon_{j}$.
    In particular, we may express the principal's utility from this configuration as:
    \begin{align}
        \E[\utilPrincipal{\agentPreferredOption(\config^j)}]&\geq\E[\utilPrincipal{\agentPreferredOption(\config^j)}~|~\mathcal E_j^U]\cdot\epsilon_j\label{eq:genalign1}\\
        &=\E[\utilPrincipal{\agentPreferredOption(S_j,\opt)}~|~\mathcal E]\cdot\epsilon_j\label{eq:genalign2}\\
        &\geq\E[\utilPrincipal{\agentPreferredOption(S_j,\opt)}~|~\mathcal E\cap \mathcal E_j]\Pr[\mathcal E_j~|~\mathcal E]\cdot\epsilon_j\label{eq:genalign3}\\
        &\geq\E[\utilPrincipal{\agentPreferredOption(\opt)}~|~\mathcal E\cap \mathcal E_j]\Pr[\mathcal E_j~|~\mathcal E]\cdot\epsilon_j\label{eq:genalign4}\\
        &\geq X/4\cdot\epsilon_j\label{eq:genalign5}
    \end{align}
    Line (\ref{eq:genalign1}) follows from the fact that the principal utility is nonnegative and the definition of $\epsilon_j$.
    Line (\ref{eq:genalign2}) follows from the independence of utilities across actions.
    Line (\ref{eq:genalign3}) then again follows from the non-negativity of principal utilities.
    Line (\ref{eq:genalign4}) follows from the fact that in event $\mathcal E_j$, the agent chooses an action from $S_j$, so $\agentPreferredOption(\opt)=\agentPreferredOption(S_j,\opt)$.
    Finally, Line (\ref{eq:genalign5}) holds because $\E[\utilPrincipal{\agentPreferredOption(\OPT)} | \mathcal E\cap \mathcal E_j]\Pr[\mathcal E_j~|~\mathcal E]\geq X/4$.

    To conclude the proof of utility alignment in this case, notice that since $S_1$ and $S_2$ partition the actions, $q=\epsilon_1\cdot\epsilon_2$.
    Hence, $\max_{j\in\{1,2\}} \epsilon_j\geq \sqrt{q}$.
    Hence, choosing the $j$ with the larger $\epsilon_j$ yields a set with expected principal utility at least $X\sqrt{q}/4$, as desired.

    \paragraph{Case 2.} Assume some action $i$ contributes at least a quarter of $X$,
    i.e.\ $\E[\utilPrincipal{\agentPreferredOption(\OPT)} | \mathcal E\cap \agentPreferredOption(\OPT)=i]\Pr[\agentPreferredOption(\OPT)=i~|~\mathcal E]\geq X/4$.
    Consider the configuration $\hat\config^i$ where $i$ is configured as in $\opt$, but all other actions are excluded.
    Then $\hat\config^i$ guarantees the principal utility at least $X/(4f(q))$.
    To see this, let $\mathcal E_i^U$ denote the event that $u_i^A\leq U$, and $\mathcal E_i$ the event that $\agentPreferredOption(\OPT)=i$.
    Then we have the following inequalities, justified after their statement:
    \begin{align}
    \E[u_i^P]&\geq \frac{1}{f(\Pr[u_i^A\leq U])}\E[u_i^P~|~\mathcal E_i^U]\label{eq:single1}\\
    &\geq \frac{1}{f(q)}\E[u_i^P~|~\mathcal E_i^U]\label{eq:single2}\\
    &= \frac{1}{f(q)}\E[u_i^P~|~\mathcal E]\label{eq:single3}\\
    &\geq \frac{1}{f(q)}\E[u_i^P~|~\mathcal E\cap \agentPreferredOption(\OPT)=i]\Pr[\agentPreferredOption(\OPT)=i~|~\mathcal E]\label{eq:single4}\\
    &\geq \frac{X}{4f(q)}.\label{eq:single5}
    \end{align}
    Line (\ref{eq:single1}) follows from local utility alignment.
    Line (\ref{eq:single2}) holds because $f$ is non-increasing, and because $\Pr[u_i^A\leq U]\geq \Pr[u_{\agentPreferredOption(\opt)}^A\leq U]=q$.
    Line (\ref{eq:single3}) holds because of the independence between actions.
    Line (\ref{eq:single4}) holds because principal utilities are nonnegative.
    Finally, Line (\ref{eq:single5}) holds because $i$ contributes at least $1/4$ of $X$.
    The overall utility alignment is the worse of the two guarantees from the two cases.
    This leads to the stated bound.
\end{proof}

In the rest of this section, we prove that several further applications of interest can be reduced to utility configuration, and furthermore that these applications are all locally $1$-utility aligned.
Consequently, all the applications below are also $4/\sqrt{q}$-utility aligned overall and hence the algorithm of Section~\ref{sec:main-results} is a PTAS.
In Sections~\ref{sec:applications-delegation-random-bias} and \ref{sec:applications-delegation-outside-option}, we consider two generalizations of the delegated choice problem.
In Section~\ref{sec:applications-delegation-random-bias}, we allow each action $i$ to have bias $b_i$, which is random, but independent from the action's value $v_i$, as well as the biases and values of other actions.
Section~\ref{sec:applications-delegation-outside-option} then considers a variant of delegation where the agent has an outside option, an action which gives the principal no utility and cannot be excluded from the delegation set.
Finally, Section~\ref{sec:applications-assortment} considers the well-studied assortment optimization problem from revenue management.
\subsubsection{Delegation with Independent Random Biases}\label{sec:applications-delegation-random-bias}
In the delegated choice problem considered in Section~\ref{sec:applications-delegation}, each action's value $v_i$ was randomly distributed, and the action's bias $b_i$ was fixed and known to the principal.
We now allow the bias $b_i$ to be random as well, with the distribution also known to the principal.
This can capture scenarios where the bias stems from, e.g.\ measurement error, as well as from uncertainty of the principal regarding the agent's preferences.
For example, one could consider $b_i\sim N(0,\sigma_i^2)$, with variance that varies across actions.
The agent then myopically chooses the action that maximizes the measured value.

This problem reduces to utility configuration in exactly the same way that the fixed-bias version does.
Each action has an {\em in} configuration and an {\em out} configuration.
The {\em out} configuration gives agent $-\infty$ and principal $0$ utility deterministically, as in the vanilla delegation model.
The distribution of $(u_i^A,u_i^P)$ in the {\em in} configuration is equal to that of $(v_i+b_i,v_i)$, drawing $v_i$ and $b_i$ independently from their respective distributions.
We then have the following local utility alignment guarantee:

\begin{lemma}\label{lem:randombias}
    Delegation with randomly distributed bias is locally $1$-utility aligned.
\end{lemma}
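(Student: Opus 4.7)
The plan is to verify the condition of Definition~\ref{def:local} separately for the two configurations of each action. For the \emph{out} configuration, $u_i^P=0$ deterministically, so the conditional expectation trivially equals the unconditional one, and there is nothing to prove. The real content is in the \emph{in} configuration, where $(u_i^A,u_i^P)=(v_i+b_i,v_i)$ with $v_i$ and $b_i$ drawn independently from their respective distributions.

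Fix an action $i$ in the \emph{in} configuration and a threshold $U$ with $q=\Pr[v_i+b_i\leq U]$. The key step is to condition on the realized value of the bias. For any $b$ in the support of $b_i$ with $\Pr[v_i\leq U-b]>0$, independence of $v_i$ and $b_i$ gives
\[
\E[v_i\mid v_i+b_i\leq U,\, b_i=b] \;=\; \E[v_i\mid v_i\leq U-b].
\]
I would then invoke the elementary fact that for any real-valued random variable $X$ and threshold $t$ with $\Pr[X\leq t]\in(0,1)$, $\E[X\mid X\leq t]\leq \E[X]$; this follows by writing $\E[X]=\E[X\mid X\leq t]\Pr[X\leq t]+\E[X\mid X>t]\Pr[X>t]$ and noting that $\E[X\mid X\leq t]\leq t\leq \E[X\mid X>t]$. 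Applied to $X=v_i$ and $t=U-b$, this yields
\[
\E[v_i\mid v_i+b_i\leq U,\, b_i=b] \;\leq\; \E[v_i]
\]
for every relevant $b$.

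Finally I would unconditionalize using the tower property: integrating against the conditional distribution of $b_i$ given $v_i+b_i\leq U$,
\[
\E[v_i\mid v_i+b_i\leq U] \;=\; \E_{b\sim b_i\mid v_i+b_i\leq U}\!\bigl[\E[v_i\mid v_i+b_i\leq U,\, b_i=b]\bigr] \;\leq\; \E[v_i],
\]
which is exactly local $1$-utility alignment for this configuration. Combining with the trivial \emph{out} case gives the lemma.

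I do not expect a serious obstacle here; the only subtlety is checking that the conditional expectations are well defined (boundary cases where $\Pr[v_i\leq U-b]=0$ can be absorbed into the outer expectation since they carry zero mass under the conditional distribution of $b_i$). The argument does not require $v_i$ to be nonnegative, but nonnegativity is already part of the delegation setup and simplifies the bookkeeping.
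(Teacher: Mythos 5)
Your proposal is correct and follows essentially the same route as the paper: condition on the realized bias, use independence to reduce to $\E[v_i\mid v_i\leq U-b]\leq\E[v_i]$, and unconditionalize via the tower property (your version is in fact slightly more careful than the paper's about the outer conditioning distribution and the degenerate-mass cases). No changes needed.
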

\begin{proof}
    We only need to consider the {\em in} configuration for any action $i$.
    Fixing a utility threshold $U$, we have:
    \begin{align*}
    \E[v_i~|~v_i+b_i\leq U]&=\E_{b_i}\left[\E_{v_i}[v_i~|~v_i\leq U-b_i]~|~ b_i\leq U\right]\\
        &\leq\E_{b_i}\left[\E_{v_i}[v_i]~|~ b_i\leq U\right]\\
        &=\E[v_i].
    \end{align*}
    This immediately implies the desired local utility alignment result.
\end{proof}
Note that Lemma~\ref{lem:randombias} will also tend to hold when $v_i$ and $b_i$ satisfy natural positive correlation conditions, via essentially the same proof.
Combining Lemma~\ref{lem:randombias} with the local-to-global result of Lemma~\ref{lem:localtoglobal} and the guarantee of Corollary~\ref{cor:sqrt}, we obtain:

\begin{corollary}
    For any fixed $M$, the algorithm from Section~\ref{sec:main-results} with $M$ bins is a $(1-O(1/\sqrt{M}))$-approximation for the delegated choice problem with random biases and runs in polynomial time.
\end{corollary}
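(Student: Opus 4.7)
The plan is to combine three results already established in the excerpt: Lemma~\ref{lem:randombias} (local $1$-utility alignment for delegation with random biases), Lemma~\ref{lem:localtoglobal} (local-to-global utility alignment), and Corollary~\ref{cor:sqrt} (the PTAS approximation guarantee for $1/(c\sqrt{q})$-aligned instances). The reduction from delegation with random biases to utility configuration has already been described just before Lemma~\ref{lem:randombias}, so I would begin by invoking that reduction and noting that the resulting utility configuration instance inherits any alignment properties of the original problem.

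Next, I would verify the two hypotheses of Lemma~\ref{lem:localtoglobal}: local $f(q)$-utility alignment with $f\equiv 1$ is given directly by Lemma~\ref{lem:randombias}, and excludable actions (Definition~\ref{def:excludable}) follow from the \textit{out} configuration, which by construction deterministically sets $u_i^A=-\infty$. Applying Lemma~\ref{lem:localtoglobal} with $f\equiv 1$ then yields overall $\max(4,4/\sqrt{q})$-utility alignment, which simplifies to $4/\sqrt{q}$-utility alignment because $q\in(0,1]$ implies $4/\sqrt{q}\geq 4$. This is exactly of the form $1/(c\sqrt{q})$ with $c=1/4$, so Corollary~\ref{cor:sqrt} applies and delivers the $(1-O(1/\sqrt{M}))$-approximation bound.

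Finally, for the polynomial-time claim, I would cite the runtime analysis in Section~\ref{sec:main-runtime-analysis}, which gives a runtime polynomial in $n$, $m$, $K$, and the bit complexity of the inputs whenever $M$ is fixed. The only caveat is that this runtime bound is for the discrete version of utility configuration, so one must note that the reduction produces a discrete instance: each action has just two configurations, and in the \textit{in} configuration the joint distribution of $(v_i+b_i,v_i)$ is discrete whenever the input distributions of $v_i$ and $b_i$ are discrete (the joint distribution is a convolution-like combination of polynomially many atoms).

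There is no substantive obstacle here; all the real work has been done in Lemmas~\ref{lem:randombias} and~\ref{lem:localtoglobal} and in Section~\ref{sec:main-results}. The proof is essentially bookkeeping to line up the alignment constants with the form required by Corollary~\ref{cor:sqrt}, plus a brief sanity check that the reduction preserves discreteness so the runtime analysis applies.
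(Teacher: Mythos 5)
Your proposal is correct and matches the paper's argument, which likewise obtains the corollary by chaining Lemma~\ref{lem:randombias}, Lemma~\ref{lem:localtoglobal}, and Corollary~\ref{cor:sqrt}. Your additional checks (excludability via the \textit{out} configuration, and discreteness of the reduced instance for the runtime claim) are sound bookkeeping that the paper leaves implicit.
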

\subsubsection{Delegation with An Outside Option}\label{sec:applications-delegation-outside-option}
The second generalization to delegated choice we consider is delegation with an {\em outside option}.
We assume the agent has an action that they may take, which yields utility $v_0=0$ for the principal, but utility $b_0$ for the agent.
The agent then chooses the action maximizing $v_i+b_i$ if this quantity exceeds $b_0$, or takes the outside option otherwise, yielding $0$ utility for the principal.
As in Section~\ref{sec:applications-delegation-random-bias}, we allow biases, including $b_0$, to be random and independent of each other and the values.

The obvious approach to reduce this problem to utility alignment fails: if we follow the reduction of Section~\ref{sec:applications-delegation-random-bias}, then add a $0$-th action corresponding to the outside option, this $0$-th action needs to have one configuration (the ``{\em in}'' configuration), as the principal cannot exclude the outside option.
However, this construction now fails to satisfy the excludable actions assumption (Definition~\ref{def:excludable}).

Instead, we adjust the utility distributions for non-outside actions in the following way.
As before, we let the {\em out} configuration for each action yield utility $-\infty$ for the agent and $0$ for the principal.
Now for action $i$ in the {\em in} configuration, draw $(u_i^A,u_i^P)$ in the following way:
first draw $v_i$ and $b_i$ from their respective distributions.
Set $u_i^A=v_i+b_i$, as before.
Let $u_i^P$ be set as $ u_i^P=v_i\cdot\Pr_{b_0}[v_i+b_i\geq b_0]$.
Given a configuration of actions in this utility configuration instance, it is straightforward to see that the principal's expected utility is the same as if they took the corresponding actions in the delegation instance.
Intuitively, the utilities above simulate the following two-step process: first, the agent chooses their favorite action other than the outside option, then switches to the outside option if $b_0$ is high enough.
Intuitively, this preserves local utility alignment: if the agent's utility is higher, then they are less likely to switch to the outside option and deny utility to the principal.

\begin{lemma}\label{lem:outside}
    Delegation with an outside option and randomly distributed bias is locally $1$-utility aligned.
\end{lemma}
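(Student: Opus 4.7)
The plan is to verify Definition~\ref{def:local} for each configuration of each action. The \emph{out} configuration is trivial: $u_i^A = -\infty$ deterministically, so the conditioning event is either vacuous or implies the full distribution, in which case the desired inequality becomes an equality. The substantive case is the \emph{in} configuration of an arbitrary action $i$, where the claim reduces to
\[
    \E\!\left[v_i \cdot \Pr_{b_0}\!\left[v_i + b_i \geq b_0\right] \;\middle|\; v_i + b_i \leq U\right] \;\leq\; \E\!\left[v_i \cdot \Pr_{b_0}\!\left[v_i + b_i \geq b_0\right]\right].
\]

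Let $g(v,b) := v \cdot \Pr_{b_0}[v + b \geq b_0]$. The key structural observation is that $g$ is non-decreasing in both of its arguments, using that $v \geq 0$: increasing $v$ raises both factors, and increasing $b$ raises the probability factor. I would unfold the conditional expectation using the independence of $v_i$ and $b_i$,
\[
    \E[g(v_i,b_i) \mid v_i + b_i \leq U] \;=\; \frac{\E_{b_i}\!\left[\,\E_{v_i}[g(v_i,b_i) \mid v_i \leq U - b_i]\cdot\Pr[v_i \leq U - b_i]\,\right]}{\E_{b_i}\!\left[\Pr[v_i \leq U - b_i]\right]},
\]
and then exploit the monotonicity of $g$ twice. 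For the first application, fix $b_i$ and note that conditioning $v_i$ on its lower tail $\{v_i \leq U - b_i\}$ can only decrease the expectation of the non-decreasing function $g(\cdot, b_i)$, so the inner conditional expectation is at most $\E_{v_i}[g(v_i,b_i)]$.

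For the second application, it remains to show that after this replacement the resulting weighted average of $\E_{v_i}[g(v_i,b_i)]$ (with weights $\Pr[v_i \leq U - b_i]$) is at most the unweighted expectation $\E_{b_i}[\E_{v_i}[g(v_i,b_i)]] = \E[u_i^P]$. As functions of $b_i$, the quantity $\phi(b_i) := \E_{v_i}[g(v_i,b_i)]$ is non-decreasing, while the weight $\psi(b_i) := \Pr[v_i \leq U - b_i]$ is non-increasing; Chebyshev's sum inequality therefore gives $\E[\phi(b_i)\psi(b_i)] \leq \E[\phi(b_i)]\,\E[\psi(b_i)]$, which rearranges to exactly the required bound.

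The main obstacle is the second step, since the conditioning event $v_i + b_i \leq U$ couples $v_i$ and $b_i$ in a way that is not purely a lower-tail restriction on either variable individually. Isolating the effect on $b_i$ (by first integrating out $v_i$) turns this coupling into a clean negative-correlation statement between an increasing and a decreasing function of $b_i$, at which point the argument becomes routine. Once Lemma~\ref{lem:outside} is in hand, combining it with Lemma~\ref{lem:localtoglobal} yields $4/\sqrt{q}$-utility alignment and, together with Corollary~\ref{cor:sqrt}, a PTAS for delegation with an outside option.
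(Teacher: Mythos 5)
Your proof is correct, and it takes a genuinely different route from the paper's. The paper integrates out the pair $(v_i,b_i)$ by conditioning on the \emph{sum} $u_i^A=v_i+b_i$: it writes both sides as $\E_{u\sim H}[\E[v_i\mid v_i+b_i=u]\cdot\Pr_{b_0}[b_0\leq u]]$ for the conditional law $H_i^U$ versus the unconditional law $H_i$, asserts that $\E[v_i\mid v_i+b_i=u]$ is non-decreasing in $u$, and concludes by first-order stochastic dominance of $H_i$ over $H_i^U$. You instead condition on $b_i$ first, upper-bound the inner lower-tail conditional expectation of the coordinatewise non-decreasing function $g(v,b)=v\cdot\Pr_{b_0}[v+b\geq b_0]$ by its unconditional mean, and then dispose of the residual weighting $\Pr[v_i\leq U-b_i]$ via Chebyshev's sum inequality applied to the non-decreasing $\phi(b)=\E_{v_i}[g(v_i,b)]$ against the non-increasing $\psi(b)=\Pr[v_i\leq U-b]$. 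What your route buys is robustness: the paper's key monotonicity claim --- that $\E[v_i\mid v_i+b_i=u]$ is non-decreasing in $u$ ``by independence'' --- does not in fact follow from independence alone (take $v_i$ uniform on $\{0,1\}$ and $b_i$ uniform on $\{0,10\}$; then $\E[v_i\mid v_i+b_i=1]=1$ but $\E[v_i\mid v_i+b_i=10]=0$), and in general requires an Efron-type log-concavity hypothesis on one of the densities. Your argument needs only that $g$ is non-decreasing in each coordinate separately, which holds unconditionally since $v_i\geq 0$, so it actually repairs this gap. The only cosmetic caveat is the usual measure-theoretic one: when $\Pr[v_i\leq U-b_i]=0$ for some $b_i$, the corresponding term in your numerator should be read as zero, which is harmless.
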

\begin{proof}
The proof generalizes that of Lemma~\ref{lem:randombias}. Let $a(v_i,b_i)=v_i\prob_{b_0}[v_i+b_i\geq b_0]$ denote the contribution of action $i$ after factoring in the outside option. Note that $a(v_i,b_i)$ is nondecreasing in both $v_i$ and $b_i$. 
Fixing a utility threshold $U$, let $G^U_i$ be the conditional distribution of $b_i$ where $b_i\leq U$ and let
$F^{(U-b_i)}$ be the conditional distribution of $v_i$ where $v_i\leq U-b_i$. Then can write
    \begin{align*}
    \E[a(v_i,b_i)~|~v_i+b_i\leq U]&=\E_{b_i\sim G_i^U}\left[\E_{v_i\sim F_i^{(U-b_i)}}[a(v_i,b_i)]\right]\\ &\leq 
    \E_{b_i\sim G_i^U}\left[\E_{v_i\sim F_i}[a(v_i,b_i)]\right]\\
     &\leq 
    \E_{b_i\sim G_i}\left[\E_{v_i\sim F_i}[a(v_i,b_i)]\right]\\
     &=     \E[a(v_i,b_i)],
    \end{align*}
where the inequalities follow from the fact $G$ and $F$ respectively first-order stochastically dominate $G^U$ and $F^{(U-b)}$, and because $a(v_i,b_i)$ is nondecreasing in $v_i$ and $b_i$.
This implies local $1$-utility alignment.
\end{proof}

Lemmas~\ref{lem:outside} and Lemma~\ref{lem:localtoglobal} combine with Corollary~\ref{cor:sqrt} to imply:

\begin{corollary}
    For any fixed $M$, the algorithm from Section~\ref{sec:main-results} with $M$ bins is a $(1-O(1/\sqrt{M}))$-approximation for the delegated choice problem with an outside option, and runs in polynomial time.
\end{corollary}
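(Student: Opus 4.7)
The plan is to assemble the corollary from the three preceding ingredients: the reduction to utility configuration given in this subsection, the local alignment bound of Lemma~\ref{lem:outside}, and the local-to-global transfer of Lemma~\ref{lem:localtoglobal}, followed by a direct appeal to Corollary~\ref{cor:sqrt}. Before invoking these, however, I need to verify two properties of the reduction so that the later machinery applies as stated.

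First, I would check that the principal's expected utility under the constructed utility configuration instance is the same as in the original delegation-with-outside-option problem. Given a subset $S$ of included actions, let $i^\star = \arg\max_{i \in S} (v_i + b_i)$ be the agent's favorite included action. In the delegation instance the principal earns $v_{i^\star}$ when $v_{i^\star}+b_{i^\star}\geq b_0$ and $0$ otherwise; taking expectations and conditioning on $(v_i,b_i)_{i\in S}$, the principal's expected utility equals $\E[v_{i^\star}\cdot \Pr_{b_0}[v_{i^\star}+b_{i^\star}\geq b_0]]$. But this is exactly what the reduction assigns to the chosen action in the configuration instance, since $u_{i^\star}^P = v_{i^\star}\cdot \Pr_{b_0}[v_{i^\star}+b_{i^\star}\geq b_0]$ there, and independence of $b_0$ from the other randomness makes the per-action replacement valid. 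Second, I would observe that the \emph{out} configuration sets $u_i^A=-\infty$ deterministically, so the instance satisfies Definition~\ref{def:excludable} (excludable actions).

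Given these two facts, the rest is bookkeeping. Lemma~\ref{lem:outside} gives local $1$-utility alignment, i.e.\ $f(q)\equiv 1$. Plugging $f\equiv 1$ into Lemma~\ref{lem:localtoglobal} yields $\max(4, 4/\sqrt{q})$-utility alignment; since $q\in(0,1]$ we have $4/\sqrt{q}\geq 4$, so the instance is $4/\sqrt{q}$-utility aligned. This is precisely the form of alignment required by Corollary~\ref{cor:sqrt} (with $c=1/4$), which then gives the $(1-O(1/\sqrt M))$ approximation guarantee for the algorithm of Section~\ref{sec:main-results}. The polynomial-time claim follows from the runtime analysis in Section~\ref{sec:main-runtime-analysis}, provided that we can compute the distributions $F_{ij}$ for the \emph{in} configuration in polynomial time; this requires only computing $\Pr_{b_0}[v_i+b_i\geq b_0]$ for each realization in the (discrete) support of $(v_i,b_i)$, which is a routine one-dimensional CDF evaluation.

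The only place where real care is needed is the reduction step, and specifically the observation that replacing the random agent-side comparison to $b_0$ by the deterministic per-action factor $\Pr_{b_0}[v_i+b_i\geq b_0]$ does not alter the expected principal utility. The key point is that $b_0$ is drawn independently from the $(v_i,b_i)$ across actions and independently of the agent's internal ranking of included actions, so after the agent picks $i^\star$ the ``acceptance by $b_0$'' event is a function solely of $(v_{i^\star},b_{i^\star})$ and may therefore be marginalized out action-by-action without introducing any coupling. Once this is in hand, the corollary follows mechanically from the already-established lemmas.
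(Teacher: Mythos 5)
Your proposal is correct and follows essentially the same route as the paper: the reduction's utility-preservation (via marginalizing out the independent $b_0$ after the agent's choice among included actions), excludability, Lemma~\ref{lem:outside} for local $1$-alignment, Lemma~\ref{lem:localtoglobal} to get $4/\sqrt{q}$-alignment, and Corollary~\ref{cor:sqrt} plus the runtime analysis. The paper states this combination in one line, relying on the surrounding text for the reduction's correctness, which you verify explicitly; no gaps.
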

\subsubsection{Assortment Optimization}\label{sec:applications-assortment}
The final application we consider is assortment optimization, with a random utility model and independent utilities across items.
In the assortment optimization problem, there are $n$ fixed-price items, where item $i$ has price $p_i$.
The seller's decision is to pick an inventory, a subset of $\{1,\ldots,n\}$ to offer for sale to a single buyer.
The buyer's value for item $i$ is $v_i$, drawn independently from distribution $F_i$, yielding utility $v_i-p_i$ for buying item $i$.
The buyer also has an outside option, which yields utility $u_0\sim F_0$ for the buyer and $0$ for the seller.
The seller seeks to maximize their revenue in expectation over the random choice of the agent.

This problem reduces to delegation with an outside option (and, by composition, to utility configuration) in the following way.
We define delegation value $v_i' = p_i$ and (random) delegation bias $b_i'$ distributed as $v_i-2p_i$.
Note that since $p_i$ is fixed, $v_i'$ is fixed and hence $v_i'$ and $b_i'$ are independent.
Include in the delegation instance an outside option with agent utility distributed as $F_0$.
The resulting instance is equivalent to the original assortment optimization problem.
Through the reduction and utility alignment results of Section~\ref{sec:applications-delegation-outside-option}, then, we obtain the following:

\begin{lemma}
    Assortment optimization with independent utilities is locally $1$-utility aligned.
\end{lemma}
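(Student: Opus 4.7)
The plan is to exploit the reduction already spelled out in the paragraph preceding the lemma: an assortment instance becomes a delegation-with-outside-option instance by setting $v_i' = p_i$ (constant) and $b_i' = v_i - 2p_i$ (random), together with an outside option whose agent utility has distribution $F_0$ and principal utility $0$. Since local utility alignment (Definition~\ref{def:local}) is a property of each action's \emph{joint distribution} $F_{ij}$ in isolation, it suffices to verify that the $(u_i^A, u_i^P)$ distribution produced by the reduction for each item in the assortment instance coincides with the distribution produced by the ``in'' configuration of the corresponding action in the constructed delegation instance; then Lemma~\ref{lem:outside} applies verbatim.

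Carrying this out, I would first compute the agent utility under the reduction: $u_i^A = v_i' + b_i' = p_i + (v_i - 2p_i) = v_i - p_i$, which matches the agent's utility for item $i$ in the assortment problem. Second, I would compute the principal utility, using the construction from Section~\ref{sec:applications-delegation-outside-option}: $u_i^P = v_i' \cdot \Pr_{b_0'}[v_i' + b_i' \geq b_0'] = p_i \cdot \Pr_{u_0}[v_i - p_i \geq u_0]$, which is exactly the expected revenue generated by item $i$ conditioned on the agent preferring $i$ to the no-purchase option (matching the semantics of assortment). I would also note that $v_i'$ is deterministic, so the independence of $v_i'$ and $b_i'$ required by Lemma~\ref{lem:outside} holds trivially, and the ``out'' configuration (agent utility $-\infty$, principal utility $0$) gives excludability (Definition~\ref{def:excludable}) for items that are not stocked.

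Having verified that the per-action joint distributions agree, I would invoke Lemma~\ref{lem:outside} directly on each item: for any utility threshold $U$ with $\Pr[u_i^A \leq U] = q$,
\[
\E[u_i^P \mid u_i^A \leq U] \leq \E[u_i^P],
\]
which is the statement of local $1$-utility alignment. I do not expect a serious obstacle here: the only thing one might pause on is whether the particular form of $u_i^P$ produced by the reduction (namely $p_i\cdot\Pr_{u_0}[v_i - p_i \geq u_0]$) fits the hypothesis of Lemma~\ref{lem:outside}, but this is exactly the random variable that appears on the left-hand side of the key inequality in that lemma's proof, so no adaptation is required. Finally, combining this with Lemma~\ref{lem:localtoglobal} and Corollary~\ref{cor:sqrt} (as done in the preceding subsections) would immediately yield a PTAS for assortment optimization with independent utilities.
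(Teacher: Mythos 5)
Your proposal is correct and matches the paper's intended argument exactly: the paper gives no separate proof of this lemma, deriving it precisely by composing the reduction $v_i'=p_i$, $b_i'\sim v_i-2p_i$ (with the outside option distributed as $F_0$) with Lemma~\ref{lem:outside}. Your verification that the per-item joint distributions coincide and that independence of $v_i'$ and $b_i'$ holds trivially (since $v_i'$ is deterministic) supplies the details the paper leaves implicit.
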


\begin{corollary}
    For any fixed $M$, the algorithm from Section~\ref{sec:main-results} with $M$ bins is a $(1-O(1/\sqrt{M}))$-approximation for assortment optimization, and runs in polynomial time.
 \end{corollary}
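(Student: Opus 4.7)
The plan is to chain together three results that have already been established earlier in the paper, so no new probabilistic argument is required here. First I would invoke the reduction given just above the corollary, which maps each assortment instance to a delegation-with-outside-option instance by setting $v_i' = p_i$ and $b_i' \sim v_i - 2p_i$, with the outside option's agent utility distributed as $F_0$. Since $p_i$ is deterministic, $v_i'$ and $b_i'$ are independent, so the resulting delegation instance satisfies the hypothesis of Lemma~\ref{lem:outside}. Composing with the utility-configuration reduction of Section~\ref{sec:applications-delegation-outside-option} then yields a utility-configuration instance that is equivalent in expected principal utility to the original assortment instance, and that has excludable actions (the ``out'' configuration of each non-outside action).

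Next I would cite the immediately preceding lemma to conclude that the assortment instance, viewed as a utility-configuration instance through the reduction, is locally $1$-utility aligned. This lets me apply Lemma~\ref{lem:localtoglobal} with $f(q) \equiv 1$: the conclusion is that the instance is $\max(4, 4/\sqrt{q})$-utility aligned, which is bounded by $4/\sqrt{q}$ for all $q \in (0,1]$. This is exactly the hypothesis of Corollary~\ref{cor:sqrt} with constant $c = 1/4$, so the algorithm of Section~\ref{sec:main-results} with $M$ bins achieves a $(1 - O(1/\sqrt{M}))$-approximation to the optimum of the utility-configuration instance, and hence to the optimal expected revenue of the original assortment instance.

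Finally, the runtime claim follows directly from Section~\ref{sec:main-runtime-analysis}: for fixed $M$, the algorithm runs in time polynomial in $n$, $m$, $K$, and the bit complexity of the input. In our reduction $m = 2$ (in or out) for each non-outside action, and the support sizes of the $(u_i^A, u_i^P)$ distributions are polynomially bounded in the input sizes of $F_i$ and $F_0$, so the reduced instance has size polynomial in the original assortment instance.

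The only minor subtlety I anticipate is confirming that the reduction preserves the input size polynomially; this is routine, since the joint distribution of $(u_i^A, u_i^P)$ in the ``in'' configuration is determined by the product distribution of $v_i$ and the (discrete) distribution of $b_0$ used to compute $\Pr_{b_0}[v_i + b_i' \geq b_0]$, which has description length polynomial in the original input. No new technical obstacle arises because all the heavy lifting has been done in Lemmas~\ref{lem:outside} and \ref{lem:localtoglobal} and Corollary~\ref{cor:sqrt}.
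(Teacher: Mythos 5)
Your proposal is correct and follows exactly the route the paper intends: the reduction to delegation with an outside option, local $1$-utility alignment via Lemma~\ref{lem:outside}, the local-to-global conversion of Lemma~\ref{lem:localtoglobal} giving $4/\sqrt{q}$-alignment, and then Corollary~\ref{cor:sqrt} plus the runtime analysis of Section~\ref{sec:main-runtime-analysis}. The paper leaves this chain implicit, and your write-up supplies precisely the intended details, including the correct observation that $\max(4,4/\sqrt{q})=4/\sqrt{q}$ on $(0,1]$ and that the reduction preserves input size polynomially.
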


\section{Unit-demand Pricing}
\label{sec:pricing}
The algorithm for utility configuration in Section~\ref{sec:main-results} assumed a finite number $m$ of configurations per action.
Its runtime was polynomial in this number, among the other parameters of the input size.
However, the usual statement of the unit-demand pricing problem allows the seller to offer arbitrary nonnegative prices, and the reduction from unit-demand pricing to utility configuration translates every price for each item $i$ into a configuration for $i$'s corresponding action, leading to an instance with a continuum of configurations.
Since this is clearly not computationally tractable, in this section we re-purpose lemmas from prior work on the problem \citep{CD11} to show that a number of prices per item --- and hence configurations per action --- that is polynomial in the input size suffices for a $(1-\epsilon)$-approximation, and can be computed in polynomial time.
Combined with our results, this implies a PTAS for the case with discrete values but unrestricted prices.
This yields the first multiplicative PTAS for this case, improving on the additive PTAS of \citet{CD11}.
These results are in Section~\ref{sec:pricing-finite-dist}.

We also study the case where values are continuously distributed but satisfy the standard Myerson regularity condition.
In this case, \citet{CD11} give a quasi-polynomial time approximation scheme for the regular case, based on a novel lemma for truncating the value distributions, and another lemma for discretizing it into polynomially many mass points.
We combine these with our framework to produce the first PTAS for this problem.
These results can be found in Section~\ref{sec:pricing-regular-dist}.

\subsection{Finite Value Distributions}\label{sec:pricing-finite-dist}
In the discrete-values version of unit-demand pricing, each item $i$'s value $v_i$ is drawn independently from a finitely supported discrete distribution $F_i$.
Note that even with discrete distributions, it may be in the seller's interest to offer prices that are not equal to any mass for any distribution.
(See, e.g.\ \citet{CD11}, Appendix J for an example where this holds.)
Consequently, it is necessary to show that finitely many prices suffice for a $(1-\epsilon)$-approximation, and that these prices can be computed efficiently.
Fortunately, we may borrow the following directly from \citet{CD11}:

\begin{lemma}[\citet{CD11}, Lemma 25] \label{lemma:pricing-discretization}
    Suppose that the value distributions in an instance of the item pricing problem are independent and supported on $[\umin, \umax] \subset \R_+$. For any $\varepsilon \in (0, 1/2)$, consider the following finite set of prices:
    \[
        \priceSet_\varepsilon = \set{\price_i~|~\price_i = \frac{1+\varepsilon^2-\varepsilon}{(1-\varepsilon^2)^i} \cdot \umin, i \in [\lfloor \log_{\frac{1}{1-\varepsilon^2}} (\umax/\umin) \rfloor]}.
    \]
    For any price vector $P \in [\umin, \umax]^\optionCount$, there exists a price vector $P'$ such that $\price_i' \in \priceSet_\varepsilon$ and $\price_i' \in [1-\varepsilon, 1+\varepsilon^2-\varepsilon] \cdot \price_i$, for all $i$. The expected revenue achieved by the two price vectors satisfies $\rev_{P'} \geq (1-2\varepsilon)\rev_{P}$.
\end{lemma}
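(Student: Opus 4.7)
My plan is to prove the lemma in two parts: first verifying that the geometric grid $\priceSet_\varepsilon$ is fine enough to permit the desired multiplicative rounding of each coordinate of $P$, then proving the revenue comparison via a pointwise argument over realizations of $(v_1,\ldots,v_n)$.

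For the rounding step, I would show that for every $p_i \in [\umin, \umax]$, there exists an element of $\priceSet_\varepsilon$ lying in the target interval $[(1-\varepsilon) p_i,\ (1+\varepsilon^2-\varepsilon) p_i]$. The grid points are in geometric progression with ratio $1/(1-\varepsilon^2)$, so it suffices to check that the multiplicative width of the target interval, $(1+\varepsilon^2-\varepsilon)/(1-\varepsilon) = 1 + \varepsilon^2/(1-\varepsilon)$, is at least the grid ratio $1/(1-\varepsilon^2) = 1 + \varepsilon^2/(1-\varepsilon^2)$; this follows from $1-\varepsilon \leq 1-\varepsilon^2$. A routine check confirms that the selected grid index falls within the prescribed range.

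The substantive step is the revenue bound. I first observe that $1+\varepsilon^2-\varepsilon < 1$ for $\varepsilon \in (0,1/2)$, so every rounded price is strictly smaller than its original and the absolute drop $p_i - p_i'$ lies in $[(\varepsilon-\varepsilon^2)p_i,\ \varepsilon p_i]$. Fix a realization of values and let $i^* = i^*(\prices)$ and $j^* = i^*(\prices')$ denote the buyer's choices under the two pricings, with $0$ indicating no purchase. Because each price weakly decreased, if $i^* \neq 0$ then $v_{i^*} - p_{i^*}' \geq v_{i^*} - p_{i^*} \geq 0$ and hence $j^* \neq 0$ as well. I would then lower-bound the realized revenue $p_{j^*}'$ in terms of $p_{i^*}$ as follows: buyer-optimality at $\prices$ gives $v_{i^*} - p_{i^*} \geq v_{j^*} - p_{j^*}$, and buyer-optimality at $\prices'$ gives $v_{j^*} - p_{j^*}' \geq v_{i^*} - p_{i^*}'$; adding these two inequalities the values $v_{i^*}$ and $v_{j^*}$ cancel, leaving
\begin{equation*}
    p_{j^*} - p_{j^*}' \geq p_{i^*} - p_{i^*}'.
\end{equation*}
Applying the upper bound $p_{j^*} - p_{j^*}' \leq \varepsilon p_{j^*}$ and lower bound $p_{i^*} - p_{i^*}' \geq (\varepsilon - \varepsilon^2) p_{i^*}$ from the rounding yields $p_{j^*} \geq (1-\varepsilon) p_{i^*}$. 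One further application of $p_{j^*}' \geq (1-\varepsilon) p_{j^*}$ together with $(1-\varepsilon)^2 \geq 1-2\varepsilon$ then gives $p_{j^*}' \geq (1-2\varepsilon)p_{i^*}$ pointwise, and the claim $\rev_{P'} \geq (1-2\varepsilon) \rev_P$ follows by taking expectations.

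The main conceptual obstacle is resisting the temptation to route through a two-step reduction via the uniformly scaled pricing $(1-\varepsilon)\prices$ followed by an $O(\varepsilon^2)$ perturbation step; tracking the two distinct multiplicative factors separately makes the analysis considerably messier because the perturbation step raises prices and could in principle reduce revenue. The clean one-step argument above exploits the algebraic cancellation of values when the two optimality conditions are added, which directly composes both sources of slack (the $(1-\varepsilon)$ factor on the switched-to item and the $(1-\varepsilon)$ factor from rounding) into a single bound.
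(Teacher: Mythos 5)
This lemma is not proved in the paper at all --- it is imported verbatim from \citet{CD11} (their Lemma 25) --- so there is no in-paper proof to compare against; your argument is a correct, self-contained proof of the cited result, and it follows the standard exchange argument from that literature. The grid-density check, the observation that all rounded prices weakly decrease so the buyer still purchases, and the cancellation of $v_{i^*}$ and $v_{j^*}$ upon adding the two buyer-optimality inequalities (yielding $p_{j^*}-p_{j^*}' \geq p_{i^*}-p_{i^*}'$ and hence $p_{j^*}\geq(1-\varepsilon)p_{i^*}$, $p_{j^*}'\geq(1-\varepsilon)^2 p_{i^*}\geq(1-2\varepsilon)p_{i^*}$ pointwise) are all sound. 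The only loose end is the ``routine check'' that the chosen grid index lies in $[\lfloor \log_{1/(1-\varepsilon^2)}(\umax/\umin)\rfloor]$ for prices at the extreme ends of $[\umin,\umax]$; this is a boundary artifact of how the index set is written in the cited statement and does not affect the substance of the argument.
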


Given that prices below $u_{\text{min}}$ are clearly suboptimal, and above $u_{\text{max}}$ are dominated by $u_{\text{max}}$, Lemma~\ref{lemma:pricing-discretization} implies that we may obtain a multiplicative PTAS for the discrete-values case by first discretizing the prices, then applying the algorithm from Section~\ref{sec:main-results} to the version with discrete prices.
Hence, we have:
\begin{corollary}
    There exists a multiplicative PTAS for the unit-demand pricing problem with independent, finitely supported, discretely distributed values.
\end{corollary}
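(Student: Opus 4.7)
The plan is to combine the price discretization result of Lemma~\ref{lemma:pricing-discretization} with the utility-aligned PTAS framework of Section~\ref{sec:main-results}. For finitely supported discrete value distributions, we can identify a range $[\umin, \umax]$ containing all meaningful prices: take $\umin$ to be the smallest positive value in the support of any $F_i$ and $\umax$ to be the largest such value. Since the distributions are given as input with finite bit complexity, $\log(\umax/\umin)$ is polynomial in the input size.

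Given target error $\varepsilon>0$, the first step is to apply Lemma~\ref{lemma:pricing-discretization} with parameter $\varepsilon/4$ to obtain a discrete price set $\priceSet_{\varepsilon/4}$ of cardinality $O(\log(\umax/\umin)/\varepsilon^2)$, which is polynomial in the input size. Augment this set with a symbolic ``do-not-sell'' price, yielding $\priceSet=\priceSet_{\varepsilon/4}\cup\{\infty\}$. Two straightforward observations bound the loss from this restriction: prices below $\umin$ are dominated by $\umin$ (same sale probability, higher revenue), and prices above $\umax$ are equivalent to $\infty$. Hence there is an optimal or near-optimal price vector supported in $[\umin,\umax]\cup\{\infty\}$, and Lemma~\ref{lemma:pricing-discretization} then guarantees that some vector in $\priceSet^n$ attains revenue at least $(1-\varepsilon/2)\OPT$.

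The next step is to reduce the restricted pricing problem to utility configuration, exactly as in Section~\ref{sec:applications-pricing}, now with $m=|\priceSet|$ configurations per action (polynomial in the input size). The utility alignment proof of Section~\ref{sec:applications-pricing} constructs a modified pricing by keeping good items' prices unchanged and setting bad items' prices to $\infty$; because the original prices all lie in $\priceSet$ and $\infty\in\priceSet$, the modified vector also lies in $\priceSet^n$. Thus the restricted instance inherits $2$-utility alignment, and Corollary~\ref{cor:2} applies: for any fixed $M$, the algorithm of Section~\ref{sec:main-results} runs in time polynomial in the input size and returns a $(1-O(\log M/M))$-approximation to the best configuration in $\priceSet^n$. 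Choosing $M=M(\varepsilon)$ large enough that $1-O(\log M/M)\geq 1-\varepsilon/2$ and composing the two multiplicative losses yields an overall revenue of at least $(1-\varepsilon/2)^2\OPT\geq(1-\varepsilon)\OPT$, completing the PTAS.

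The main obstacle I anticipated is ensuring that the range $[\umin,\umax]$ has polynomial bit complexity; this is precisely what the finite-support discreteness assumption provides. A secondary check, which I addressed above, is that utility alignment survives restricting the allowed prices to $\priceSet$; this follows because the local perturbation used in the alignment proof only ever substitutes the symbolic price $\infty$, which is included in $\priceSet$. With these two verifications, the rest of the argument is bookkeeping of the two multiplicative losses and an appeal to the runtime analysis of Section~\ref{sec:main-runtime-analysis}.
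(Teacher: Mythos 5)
Your proposal is correct and follows essentially the same route as the paper: restrict attention to prices in $[\umin,\umax]\cup\{\infty\}$, discretize via Lemma~\ref{lemma:pricing-discretization} to get polynomially many prices (hence configurations), note that $2$-utility alignment is preserved because the alignment argument only ever substitutes the price $\infty$, and then invoke the main PTAS with the two multiplicative losses composed. The paper's own proof is just a terser version of this same argument.
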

\subsection{Unbounded Regular Distributions}\label{sec:pricing-regular-dist}
We now consider a version of the unit-demand pricing problem where values are continuously distributed and possibly unbounded.
Without some tail condition on the distribution, this problem is intractable.
A standard condition is {\em Myerson regularity}, which states that the virtual value function for each distribution $F_i$ with density $f_i$, given by $\phi_i(v_i)=v_i-(1-F_i(v_i))/f_i(v_i)$, is non-decreasing in $v_i$.
For regular distributions, \citet{CD11} give a QPTAS for unit-demand pricing, computing a $(1-\epsilon)$-approximation in time polynomial in $n^{\text{polylog}(n)}$.
In addition to the price discretization lemma in the previous section, they give a {\em truncation lemma} and a {\em value discretization lemma}.
The number of prices they produce in the end is dependent on the number of items, and the dependence of their algorithm's runtime on the number of prices is exponential.
Our algorithm has a polynomial dependence on the number of prices, which allows us to improve the result to a PTAS for the regular case.

We adopt the same model of distributional access as \cite{CD11}: their results (and hence our applications of these results as well) hold as long as the algorithm has access to one of the following: (1) explicit access to the CDFs, (2) oracle access to the CDFs, along with information about a fixed quantile of the distributions (e.g.\ the median); such an anchoring point is necessary, or else the algorithm would face the hopeless task of guessing the distribution's scale, (3) sample access to the distribution, in which case the guarantees hold with high probability over the samples.
For more details, see Appendix B of \citet{CD11}.

Given any of the three forms of access above, the first step is to truncate the potentially unbounded regular distribution.
The following lemma, again from \citet{CD11}, accomplishes this goal.
Given an instance with independent, unbounded regular distributions, \citet{CD11} show that it is possible to create a corresponding instance with distributions bounded in a polynomially-sized interval.
Given near-optimal prices for this new instance, it is possible to construct near-optimal prices for the original instance in polynomial time.
Formally:
\begin{lemma}[\citet{CD11}, Theorem 20]\label{lem:truncation}
Let $\mathcal V$ be a collection of mutually independent regular random variables.
Then there exists some $\alpha(\mathcal V)>0$ such that for any $\epsilon\in(0,1)$, there is a reduction from $(1-\epsilon)$-approximately optimal pricing for $\mathcal V$ to $(1-\theta(\epsilon/n))$-approximately optimal pricing for $\tilde{\mathcal V}$, where $\tilde{\mathcal V}$ is a collection of mutually independent random variables supported on $[\tfrac{\epsilon\alpha}{4n^4},\frac{4n^4\alpha}{\epsilon^3}]$.
Moreover, $\alpha$ is efficiently computable from the distributions of $v_i$s, and for all $\epsilon$, the runtime of the reduction is polynomial in the input size and $1/\epsilon$.
\end{lemma}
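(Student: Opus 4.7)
The plan is to construct $\tilde{\mathcal V}$ by independently truncating each $v_i$ to the interval $[L, H]$ with $L = \epsilon\alpha/(4n^4)$ and $H = 4n^4\alpha/\epsilon^3$, where $\alpha$ is an efficiently computable lower bound on $\OPT$. A natural choice is $\alpha = \max_i r_i^*$ with $r_i^* = \max_p p(1-F_i(p))$ the Myerson single-item optimal revenue; for a regular $F_i$, $r_i^*$ is obtained by a one-dimensional search over the monopoly price, and $\alpha \leq \OPT$ because the seller can always ignore all items but one. The forward direction of the reduction outputs the truncated instance, and the backward direction returns any near-optimal price vector $\tilde P$ unchanged as the pricing for $\mathcal V$. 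The entire proof then consists of bounding the revenue loss from the two sides of the truncation.

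The lower truncation is the easier direction. Restrict attention to pricings with every $p_i \geq L$: any pricing that violates this can only collect revenue at most $nL = O(\epsilon/n^3)\alpha$ from the offending items, well within budget. Given $p_i \geq L$, the perturbation of $F_i$ below $L$ cannot change whether item $i$ itself is purchased, since the buyer never picks $i$ when $v_i < p_i$; the only residual effect is on which other item the buyer picks when some $v_j$ is shifted up to $L$, and this is dominated by an $L$-magnitude utility shift that again sums to $O(\epsilon/n)\alpha$.

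The upper truncation is where the main obstacle lies. Regularity gives the single-item tail bound $t\cdot\Pr[v_i\geq t] \leq r_i^* \leq \alpha$, and hence $\Pr[\max_i v_i \geq H] \leq n\alpha/H = \epsilon^3/(4n^3)$ by a union bound. The task is to convert this probabilistic tail bound into a bound on the expected revenue contribution from the tail event, which must be at most $O(\epsilon)\OPT$. The obstruction is that bare tail probabilities do not suffice: a seller could in principle set large prices $p_i^*$ that deliberately exploit tail realizations. To handle this I would first argue that it suffices to consider pricings with every $p_i^* \leq H$ (capping prices above $H$ loses at most $r_i^* \leq \alpha$ per item by the revenue-curve invariant), then decompose the tail contribution item by item and layer by layer: for each item $i$, split the event $v_i \geq H$ into dyadic ranges $[2^k H, 2^{k+1}H]$, bound the revenue contribution of each range using $p\cdot(1-F_i(p))\leq r_i^*$, and use independence across items to factor the contribution of the case where some other item's value is extreme. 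The $1/\epsilon^3$ scaling of $H$ is what absorbs the $1/\epsilon^2$-style factors incurred by this decomposition while leaving slack for the union bound. The critical subtlety is to avoid any logarithmic factor in this chain, since any $\log n$ or $\log(1/\epsilon)$ would degrade the resulting scheme from a PTAS back to a QPTAS; the $n^4$ in the numerator of $H$ appears precisely because the decomposition must be carried out in a way that keeps the geometric series terminating cleanly rather than integrating into a log.
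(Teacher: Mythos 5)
This lemma is not proved in the paper at all: it is quoted verbatim from \citet{CD11} (their Theorem~20) and used as a black box, so there is no in-paper argument to compare against. Judged on its own terms, your sketch has the right architecture --- anchor at $\alpha=\max_i r_i^*$, truncate to a $\mathrm{poly}(n/\epsilon)$-width interval around $\alpha$, and handle the two tails separately --- and your treatment of the lower truncation is essentially fine. But the upper truncation, which you correctly identify as the crux, has a genuine gap.

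The problem is that the only tool you invoke there, $p(1-F_i(p))\le r_i^*$, is \emph{tight at every price} for the equal-revenue distribution $\Pr[v_i\ge t]=1/t$, which is regular. Consequently your dyadic decomposition contributes roughly $2r_i^*$ per layer $[2^kH,2^{k+1}H]$ with no decay, so the sum over layers diverges (or yields exactly the logarithmic factor you say must be avoided); no choice of the exponent in $H=4n^4\alpha/\epsilon^3$ fixes this, because the issue is the absence of geometric decay, not the starting point. Worse, the equal-revenue instance with $n$ i.i.d.\ items shows that the revenue genuinely extractable from realizations with $\max_i v_i\ge H$ is $\Theta(n)=\Theta(\OPT)$ for every $H$ --- so any proof whose goal is ``the tail contributes $O(\epsilon)\OPT$ revenue'' is provably false. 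The correct statement, and the content of the Cai--Daskalakis argument, is that the tail revenue can be \emph{recaptured at the truncation point}: because the revenue curve $q\mapsto q\cdot F_i^{-1}(1-q)$ of a regular distribution is concave, lowering any price from the deep tail down to $H$ weakly increases the single-item revenue $p(1-F_i(p))$, and the truncation preserves the mass above $H$ as an atom at $H$; one must then separately control how capping values and prices perturbs the buyer's argmax (this is where their extreme-value theorem for regular distributions and the various $n^4/\epsilon^3$ factors enter). Similarly, your claim that a near-optimal price vector for $\tilde{\mathcal V}$ can be returned ``unchanged'' needs an argument on the tail event, where the naive bound (price at most $H$ times tail probability $n\alpha/H$) gives only $n\alpha$, which can be $\Theta(n)\cdot\OPT$. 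These recapture and argmax-stability steps are the heart of the theorem and are absent from the proposal.
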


Finally, \citet{CD11} give a discretization lemma for bounded value distributions.
As with Lemma~\ref{lem:truncation} for truncating the distributions, they show that it is possible to convert continuous distributions into discrete distributions and to then convert approximately optimal prices for the discrete distributions into approximately optimal prices for the original instance.
The number of values in the support of the distributions is polylogarithmic in the ratio of the highest to the lowest value in the original distribution.
Formally:

\begin{lemma}[\citet{CD11}, Lemma 30]\label{lem:discretization}
    Let $\{v_i\}_{i\in[n]}$ be a product distribution over values supported on range $[u_{\text{min}},u_{\text{max}}]\subset \mathbb R_+$, and $r=\tfrac{u_{\text{max}}}{u_{\text{min}}}$.
    Then for any $\delta\in(0,(4\lceil\log_2 r\rceil)^{4/3})$, there exists a collection of mutually independent random variables $\{\hat v_i\}_{i\in[n]}$ which are supported on a set of $O(\log r/\delta^2)$ values that satisfy the following properties:
    \begin{enumerate}
        \item The optimal revenue under $\{\hat v_i\}_{i\in[n]}$ is at least a $(1-3\delta^{1/8})$-fraction of that under $\{v_i\}_{i\in[n]}$.
        \item For any $\rho\in(0,1/2)$ and any price vector $\prices$ that yield revenue at least a $(1-\rho)$-fraction of optimal for $\{\hat v_i\}_{i\in[n]}$, we can construct in polynomial time in the description of $\prices$ and $1/\delta$ another price vector $\hat \prices$ such that $\hat prices$ obtain at least a $(1-7\delta^{1/8}-\rho)$ fraction of the optimal for revenue for $\{v_i\}_{i\in[n]}$.
    \end{enumerate}
    If $u_{\text{min}}$ and $u_{\text{max}}$ are explicitly known, we can compute the distribution of $\{\hat v_i\}_{i\in[n]}$s and their support in time polynomial in the description of $\{v_i\}_{i\in[n]}$, the bit complexity of $u_{\text{min}}$ and $u_{\text{max}}$, and in $1/\delta$.
 \end{lemma}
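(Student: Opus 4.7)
The plan is to construct $\hat v_i$ by rounding $v_i$ to a suitable geometric grid and then translate pricings between the two instances via a small rescaling. Let $\eta$ be a parameter to be tuned as a function of $\delta$, and let $\mathcal G = \{u_{\min}(1+\eta)^k : k = 0, 1, \ldots\} \cap [u_{\min},u_{\max}]$, a set of $O(\log r / \eta)$ points. Define $\hat v_i$ by rounding $v_i$ down to the largest grid point not exceeding it; this induces a natural monotone coupling with $v_i / (1+\eta) \leq \hat v_i \leq v_i$ pointwise, and ensures $\hat v_i$ is supported on $\mathcal G$. Choosing $\eta = \Theta(\delta^2)$ will give the claimed support size $O(\log r / \delta^2)$.

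For the first claim, I would take an optimal pricing $\prices^*$ for $\{v_i\}$ and consider the scaled pricing $\prices^*/(1+\eta)$ for $\{\hat v_i\}$. Under the coupling, any buyer who would have purchased item $i$ at $p_i^*$ still has nonnegative utility for $i$ at $p_i^*/(1+\eta)$ in the discretized instance, since $\hat v_i - p_i^*/(1+\eta) \geq (v_i - p_i^*)/(1+\eta)\geq 0$. The subtle point is that the buyer's $\arg\max$ can shift between the two instances because of rounding. To control this, I would decompose the realization space into a \emph{stable} region, where both instances select the same item, and an \emph{unstable} region, where two items' buyer utilities lie within a narrow margin. On the stable region the revenue is at least $1/(1+\eta)$ times the corresponding optimal contribution; the unstable region is where all the technical difficulty lies. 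For the second claim, I would reverse the scaling: given a $(1-\rho)$-optimal pricing $\prices$ for $\{\hat v_i\}$, I would set $\hat\prices = (1+\eta)\prices$, apply the same stable/unstable decomposition in the opposite direction, and lose another $(1+\eta)$ factor on the stable region. The computability claim is routine: each mass of $\hat v_i$ at a grid point is a difference of two CDF evaluations of $F_i$.

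The main obstacle is bounding the unstable-region contribution to revenue. A naive perturbation argument fails: small changes in values can flip the buyer from a high-priced item to a much cheaper item or to the no-purchase option, potentially wiping out a large share of expected revenue. The right approach is to exploit independence across items via a layer-cake decomposition over the width of the near-tie window. For each pair of items, the probability that both buyer utilities lie within an $(1+\eta)$-factor margin is roughly $\eta$ by independence, which after summing over pairs and layers and balancing against the per-layer revenue bound gives a loss on the order of a fractional power of $\eta$ rather than $\eta$ itself. Applying this argument once in each direction (from $\{v_i\}$ to $\{\hat v_i\}$ and back) and optimizing the trade-off between grid spacing and error compounds to the $\delta^{1/8}$ loss appearing in the lemma statement, rather than the linear $\delta$ one might initially hope for. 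This compounded fractional-power loss --- and the care needed to avoid double-counting across the two directions --- is where the bulk of the technical work would sit.
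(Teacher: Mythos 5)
First, note that the paper does not prove this statement at all: it is imported verbatim from \citet{CD11} (their Lemma~30) and used as a black box in Section~\ref{sec:pricing-regular-dist}. So the comparison here is between your blind attempt and the original proof in that reference, not anything in this paper.

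Your setup (geometric grid with ratio $1+\eta$, rounding down, the monotone coupling $v_i/(1+\eta)\leq \hat v_i\leq v_i$, rescaling prices by $1+\eta$ in each direction, and the resulting $O(\log r/\eta)$ support size) is the right skeleton, and you correctly identify that the entire difficulty is the buyer's $\arg\max$ flipping on near-ties. But the one quantitative idea you offer for resolving that difficulty does not work: it is \emph{not} true that ``for each pair of items, the probability that both buyer utilities lie within a $(1+\eta)$-factor margin is roughly $\eta$ by independence.'' Independence gives no anti-concentration. The value distributions here are arbitrary on $[u_{\min},u_{\max}]$ --- in particular they may have atoms, and two items' utilities $v_i-p_i$ and $v_j-p_j$ can coincide (or lie within any fixed multiplicative window) with probability bounded away from zero no matter how small $\eta$ is. A point mass for $v_j$ at a value where $v_i-p_i$ has nonnegligible mass nearby already defeats the bound, and a randomized grid offset does not help because the near-tie is between two item utilities, not between a value and a grid boundary. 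Since the unstable region can have constant probability and the buyer can switch there from an expensive item to a much cheaper one, your layer-cake accounting does not yield any fractional power of $\eta$ as stated, and the claimed compounding to $\delta^{1/8}$ is asserted rather than derived. This is precisely the part of \citet{CD11}'s argument that is genuinely hard (and is why their loss is the weak $\delta^{1/8}$ rather than $O(\delta)$); their proof controls the switching loss through a substantially more involved structural argument about near-optimal price vectors, not through a per-pair near-tie probability bound. As written, your proposal has a genuine gap at its central step. The computability claim and the support-size count are fine.
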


 We may combine Lemmas~\ref{lemma:pricing-discretization}, \ref{lem:truncation}, and \ref{lem:discretization} with our framework to obtain a PTAS for the unbounded regular case of unit-demand pricing.
 In more detail, given distributions $\{v_i\}_{i\in[n]}$, we may apply Lemma~\ref{lem:truncation} to truncate the value distributions to new distributions $\{\hat v_i\}_{i\in[n]}$ to range $[\tfrac{\epsilon\alpha}{4n^4},\tfrac{4n^4\alpha}{\epsilon^3}]$.
 Given the truncated distributions $\{\hat v_i\}_{i\in[n]}$, we may discretize the value distributions using Lemma~\ref{lem:discretization} to obtain new distributions $\{\tilde v_i\}_{i\in[n]}$ with support polynomial in $n$ and $1/\epsilon$.
 Then given $\{\tilde v_i\}_{i\in[n]}$, we may apply Lemma~\ref{lemma:pricing-discretization} to obtain a restricted set of $\text{poly}(n,1/\epsilon)$ prices for the problem.
 We may then apply our framework to obtain near-optimal prices for $\{\tilde v_i\}_{i\in[n]}$ from the restricted set in time polynomial in $n$ and the number of prices and the number of mass points in $\{\tilde v_i\}_{i\in[n]}$ (which, in turn, are polynomial in $n$ for any fixed approximation error $\epsilon$).
 Then, given the prices computed by our framework, we may back out prices for the original instance using Lemmas~\ref{lem:discretization} and \ref{lem:truncation}.
 We conclude with the following:
 \begin{corollary}
     For any $\epsilon$, there is an algorithm which, for any independent regular distributions, finds prices within a $(1-\epsilon)$-factor of the optimal in time polynomial in the number of items $n$.
 \end{corollary}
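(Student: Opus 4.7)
The plan is to pipeline the three preprocessing reductions from \citet{CD11} with the PTAS framework of Section~\ref{sec:main-results}. Given an instance with independent regular (possibly unbounded) value distributions, I would first apply Lemma~\ref{lem:truncation} to replace each $v_i$ by a truncated $\hat v_i$ supported on $[\epsilon\alpha/(4n^4),\, 4n^4\alpha/\epsilon^3]$, a range whose endpoints' ratio is polynomial in $n$ and $1/\epsilon$. Second, I would apply Lemma~\ref{lem:discretization} to replace each $\hat v_i$ by a discrete $\tilde v_i$ supported on $O(\log(n/\epsilon)/\delta^2)$ mass points, with $\delta$ tuned to a small constant multiple of $\epsilon$. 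Third, I would apply Lemma~\ref{lemma:pricing-discretization} to the discretized instance to obtain a finite price set $\priceSet_{\epsilon'}$ of size polynomial in $n$ and $1/\epsilon$, augmented with $\infty$ so that the $2$-utility alignment guarantee from Section~\ref{sec:applications-pricing} continues to apply.

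With these preprocessing steps in hand, I would reduce the resulting discretely-priced, discretely-valued unit-demand pricing instance to utility configuration exactly as in Section~\ref{sec:applications-pricing}, producing a utility configuration instance with $m = \text{poly}(n, 1/\epsilon)$ configurations per action and at most $K = \text{poly}(n, 1/\epsilon)$ mass points per configuration. I would then run the PTAS from Section~\ref{sec:main-results} with a fineness parameter $M$ large enough --- but depending only on $\epsilon$ --- that Corollary~\ref{cor:2} yields approximation loss at most a small constant fraction of $\epsilon$ on the discretized problem. Finally, I would back out prices for the original instance by reversing the three reductions, using the constructive reverse maps provided by the respective lemmas.

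The main obstacle, and where I would spend most of the effort, is the careful composition of the four sources of multiplicative approximation loss --- truncation, value discretization, price discretization, and the PTAS fineness --- so that the overall loss is at most $\epsilon$. Each of the four parameters is independently tunable, so setting each to $\Theta(\epsilon)$ and using an inequality of the form $(1-c\epsilon)^4 \geq 1-\epsilon$ for small enough $c$ suffices, but it requires careful bookkeeping because the losses compound multiplicatively rather than additively.

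The runtime analysis is the other essential piece and the reason this composition upgrades the \citet{CD11} QPTAS to a true PTAS. After preprocessing, both $m$ and $K$ are $\text{poly}(n, 1/\epsilon)$, while $M$ depends only on $\epsilon$ and not on $n$. The runtime bound $O(m^{M+1} n^{2M+1} K^M M^{4M})$ from Section~\ref{sec:main-runtime-analysis} therefore evaluates to a polynomial in $n$ for any fixed $\epsilon$, which is exactly what the PTAS definition requires. The key distinction from \citet{CD11}'s approach is that their inner optimization enumerates over configurations in a manner that is exponential in the number of candidate prices, whereas our dynamic program has only polynomial dependence on that number; this is what turns the $\log n$-in-the-exponent QPTAS into a genuine PTAS.
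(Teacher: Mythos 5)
Your proposal is correct and follows essentially the same route as the paper: truncate via Lemma~\ref{lem:truncation}, discretize values via Lemma~\ref{lem:discretization}, discretize prices via Lemma~\ref{lemma:pricing-discretization}, run the utility-configuration PTAS on the resulting finite instance, and reverse the reductions to recover prices, with the polynomial runtime following because $m$ and $K$ are $\mathrm{poly}(n,1/\epsilon)$ while $M$ depends only on $\epsilon$. The additional bookkeeping you flag (compounding the four multiplicative losses and including $\infty$ in the price set to preserve utility alignment) is consistent with, and slightly more explicit than, the paper's own sketch.
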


\newpage
\bibliographystyle{plainnat}
\bibliography{references}

\begin{thebibliography}{55}
\providecommand{\natexlab}[1]{#1}
\providecommand{\url}[1]{\texttt{#1}}
\expandafter\ifx\csname urlstyle\endcsname\relax
  \providecommand{\doi}[1]{doi: #1}\else
  \providecommand{\doi}{doi: \begingroup \urlstyle{rm}\Url}\fi

\bibitem[Aggarwal et~al.(2004)Aggarwal, Feder, Motwani, and Zhu]{AFMZ04}
Gagan Aggarwal, Tom{\'a}s Feder, Rajeev Motwani, and An~Zhu.
\newblock Algorithms for multi-product pricing.
\newblock In \emph{Automata, Languages and Programming: 31st International Colloquium, ICALP 2004, Turku, Finland, July 12-16, 2004. Proceedings 31}, pages 72--83. Springer, 2004.

\bibitem[Alonso and Matouschek(2008)]{AM08}
Ricardo Alonso and Niko Matouschek.
\newblock Optimal delegation.
\newblock \emph{The Review of Economic Studies}, 75\penalty0 (1):\penalty0 259--293, 2008.

\bibitem[Amador and Bagwell(2012)]{AB12}
Manuel Amador and Kyle Bagwell.
\newblock Tariff revenue and tariff caps.
\newblock \emph{American Economic Review}, 102\penalty0 (3):\penalty0 459–65, 2012.

\bibitem[Aouad et~al.(2018)Aouad, Farias, Levi, and Segev]{AFLS18}
Ali Aouad, Vivek Farias, Retsef Levi, and Danny Segev.
\newblock The approximability of assortment optimization under ranking preferences.
\newblock \emph{Operations Research}, 66\penalty0 (6):\penalty0 1661--1669, 2018.

\bibitem[Aouad et~al.(2023)Aouad, Feldman, and Segev]{AFS23}
Ali Aouad, Jacob Feldman, and Danny Segev.
\newblock The exponomial choice model for assortment optimization: An alternative to the mnl model?
\newblock \emph{Management Science}, 69\penalty0 (5):\penalty0 2814--2832, 2023.

\bibitem[Armstrong and Vickers(2010)]{AV10}
Mark Armstrong and John Vickers.
\newblock A model of delegated project choice.
\newblock \emph{Econometrica}, 78\penalty0 (1):\penalty0 213--244, 2010.

\bibitem[Azizzadenesheli et~al.(2023)Azizzadenesheli, Dang, Mehta, Psomas, and Zhang]{ATMPZ23}
Kamyar Azizzadenesheli, Trung Dang, Aranyak Mehta, Alexandros Psomas, and Qian Zhang.
\newblock Reward selection with noisy observations.
\newblock \emph{arXiv preprint arXiv:2307.05953}, 2023.

\bibitem[Babaioff et~al.(2020)Babaioff, Immorlica, Lucier, and Weinberg]{BILW20}
Moshe Babaioff, Nicole Immorlica, Brendan Lucier, and S.~Matthew Weinberg.
\newblock A simple and approximately optimal mechanism for an additive buyer.
\newblock \emph{J. ACM}, 67\penalty0 (4), 2020.

\bibitem[Barr\'{e} et~al.(2024)Barr\'{e}, Housni, and Lodi]{BHL24}
Th\'{e}o Barr\'{e}, Omar~El Housni, and Andrea Lodi.
\newblock Assortment optimization with visibility constraints.
\newblock In \emph{Integer Programming and Combinatorial Optimization: 25th International Conference, IPCO 2024, Wroclaw, Poland, July 3–5, 2024, Proceedings}, page 124–138, 2024.

\bibitem[Bechtel and Dughmi(2021)]{BD21}
Curtis Bechtel and Shaddin Dughmi.
\newblock Delegated stochastic probing.
\newblock In \emph{12th Innovations in Theoretical Computer Science Conference, {ITCS} 2021, January 6-8, 2021, Virtual Conference}, volume 185 of \emph{LIPIcs}, pages 37:1--37:19. Schloss Dagstuhl - Leibniz-Zentrum f{\"{u}}r Informatik, 2021.

\bibitem[Bechtel and Dughmi(2024)]{BD24}
Curtis Bechtel and Shaddin Dughmi.
\newblock Efficient multi-agent delegated search.
\newblock \emph{CoRR}, abs/2411.00181, 2024.

\bibitem[Bechtel et~al.(2022)Bechtel, Dughmi, and Patel]{BDP22}
Curtis Bechtel, Shaddin Dughmi, and Neel Patel.
\newblock Delegated pandora's box.
\newblock In \emph{{EC} '22: The 23rd {ACM} Conference on Economics and Computation, Boulder, CO, USA, July 11 - 15, 2022}, pages 666--693. {ACM}, 2022.

\bibitem[Berbeglia and Joret(2020)]{BJ20}
Gerardo Berbeglia and Gwena{\"{e}}l Joret.
\newblock Assortment optimisation under a general discrete choice model: {A} tight analysis of revenue-ordered assortments.
\newblock \emph{Algorithmica}, 82\penalty0 (4):\penalty0 681--720, 2020.

\bibitem[Braun et~al.(2024)Braun, Hahn, Hoefer, and Schecker]{BHHS24}
Pirmin Braun, Niklas Hahn, Martin Hoefer, and Conrad Schecker.
\newblock Delegated online search.
\newblock \emph{Artif. Intell.}, 334:\penalty0 104171, 2024.

\bibitem[Briest and Krysta(2007)]{BK07}
Patrick Briest and Piotr Krysta.
\newblock Buying cheap is expensive: Hardness of non-parametric multi-product pricing.
\newblock In \emph{SODA}, volume~7, pages 716--725. Citeseer, 2007.

\bibitem[Cai and Daskalakis(2011)]{CD11}
Yang Cai and Constantinos Daskalakis.
\newblock Extreme-value theorems for optimal multidimensional pricing.
\newblock In \emph{{IEEE} 52nd Annual Symposium on Foundations of Computer Science, {FOCS} 2011, Palm Springs, CA, USA, October 22-25, 2011}, pages 522--531. {IEEE} Computer Society, 2011.

\bibitem[Cai and Zhao(2017)]{CZ17}
Yang Cai and Mingfei Zhao.
\newblock Simple mechanisms for subadditive buyers via duality.
\newblock In \emph{Proceedings of the 49th Annual ACM SIGACT Symposium on Theory of Computing}, STOC 2017, page 170–183, 2017.

\bibitem[Chawla and Miller(2016)]{CM16}
Shuchi Chawla and J.~Benjamin Miller.
\newblock Mechanism design for subadditive agents via an ex ante relaxation.
\newblock In \emph{Proceedings of the 2016 ACM Conference on Economics and Computation}, EC '16, page 579–596, 2016.

\bibitem[Chawla et~al.(2007)Chawla, Hartline, and Kleinberg]{CHK07}
Shuchi Chawla, Jason~D. Hartline, and Robert Kleinberg.
\newblock Algorithmic pricing via virtual valuations.
\newblock In \emph{Proceedings of the 8th ACM Conference on Electronic Commerce}, EC '07, page 243–251, 2007.

\bibitem[Chawla et~al.(2010)Chawla, Hartline, Malec, and Sivan]{CHMS10}
Shuchi Chawla, Jason~D. Hartline, David~L. Malec, and Balasubramanian Sivan.
\newblock Multi-parameter mechanism design and sequential posted pricing.
\newblock In \emph{Proceedings of the Forty-Second ACM Symposium on Theory of Computing}, page 311–320. Association for Computing Machinery, 2010.

\bibitem[Chen et~al.(2016)Chen, Hu, Li, Li, Liu, and Lu]{CHLLLL16}
Wei Chen, Wei Hu, Fu~Li, Jian Li, Yu~Liu, and Pinyan Lu.
\newblock Combinatorial multi-armed bandit with general reward functions.
\newblock In \emph{Proceedings of the 30th International Conference on Neural Information Processing Systems}, NIPS'16, page 1659–1667, 2016.

\bibitem[Chen et~al.(2014)Chen, Diakonikolas, Paparas, Sun, and Yannakakis]{CDPSY14}
Xi~Chen, Ilias Diakonikolas, Dimitris Paparas, Xiaorui Sun, and Mihalis Yannakakis.
\newblock The complexity of optimal multidimensional pricing.
\newblock In \emph{Proceedings of the Twenty-Fifth Annual ACM-SIAM Symposium on Discrete Algorithms}, page 1319–1328, 2014.

\bibitem[Chen et~al.(2015)Chen, Diakonikolas, Orfanou, Paparas, Sun, and Yannakakis]{CDOPSY15}
Xi~Chen, Ilias Diakonikolas, Anthi Orfanou, Dimitris Paparas, Xiaorui Sun, and Mihalis Yannakakis.
\newblock On the complexity of optimal lottery pricing and randomized mechanisms.
\newblock In \emph{2015 IEEE 56th Annual Symposium on Foundations of Computer Science}, pages 1464--1479, 2015.

\bibitem[Chen et~al.(2024)Chen, Ma, Simchi-Levi, and Xin]{CMSX24}
Xi~Chen, Will Ma, David Simchi-Levi, and Linwei Xin.
\newblock Assortment planning for recommendations at checkout under inventory constraints.
\newblock \emph{Mathematics of Operations Research}, 49\penalty0 (1):\penalty0 297--325, 2024.

\bibitem[Davis et~al.(2014)Davis, Gallego, and Topaloglu]{DGT14}
James~M. Davis, Guillermo Gallego, and Huseyin Topaloglu.
\newblock Assortment optimization under variants of the nested logit model.
\newblock \emph{Operations Research}, 62\penalty0 (2):\penalty0 250--273, 2014.

\bibitem[D\'{e}sir et~al.(2022)D\'{e}sir, Goyal, and Zhang]{DGZ22}
Antoine D\'{e}sir, Vineet Goyal, and Jiawei Zhang.
\newblock Technical note—capacitated assortment optimization: Hardness and approximation.
\newblock \emph{Operations Research}, 70\penalty0 (2):\penalty0 893--904, 2022.

\bibitem[D\"{u}tting et~al.(2020{\natexlab{a}})D\"{u}tting, Feldman, Kesselheim, and Lucier]{DFKL20}
Paul D\"{u}tting, Michal Feldman, Thomas Kesselheim, and Brendan Lucier.
\newblock Prophet inequalities made easy: Stochastic optimization by pricing nonstochastic inputs.
\newblock \emph{SIAM Journal on Computing}, 49\penalty0 (3):\penalty0 540--582, 2020{\natexlab{a}}.

\bibitem[D\"{u}tting et~al.(2020{\natexlab{b}})D\"{u}tting, Kesselheim, and Lucier]{DKL20}
Paul D\"{u}tting, Thomas Kesselheim, and Brendan Lucier.
\newblock An o(log log m) prophet inequality for subadditive combinatorial auctions.
\newblock \emph{SIGecom Exch.}, 18\penalty0 (2):\penalty0 32–37, 2020{\natexlab{b}}.

\bibitem[Feldman et~al.(2019)Feldman, Paul, and Topaloglu]{FPT19}
Jacob Feldman, Alice Paul, and Huseyin Topaloglu.
\newblock Technical note—assortment optimization with small consideration sets.
\newblock \emph{Operations Research}, 67\penalty0 (5):\penalty0 1283--1299, 2019.

\bibitem[Feldman and Topaloglu(2017)]{FT17}
Jacob~B. Feldman and Huseyin Topaloglu.
\newblock Revenue management under the markov chain choice model.
\newblock \emph{Operations Research}, 65\penalty0 (5):\penalty0 1322--1342, 2017.

\bibitem[Goel et~al.(2006)Goel, Guha, and Munagala]{GGM06}
Ashish Goel, Sudipto Guha, and Kamesh Munagala.
\newblock Asking the right questions: model-driven optimization using probes.
\newblock In \emph{Proceedings of the 30th International Conference on Neural Information Processing Systems}, PODS '06, page 203–212, 2006.

\bibitem[Hart and Nisan(2013)]{HN13}
Sergiu Hart and Noam Nisan.
\newblock The menu-size complexity of auctions.
\newblock In \emph{Proceedings of the Fourteenth ACM Conference on Electronic Commerce}, EC '13, page 565–566, 2013.

\bibitem[Holmstr{\"o}m(1977)]{Holmstrom77}
Bengt Holmstr{\"o}m.
\newblock \emph{On Incentives and Control in Organizations}.
\newblock Stanford University, 1977.

\bibitem[Immorlica et~al.(2021)Immorlica, Lucier, Mao, Syrgkanis, and Tzamos]{ILMST21}
Nicole Immorlica, Brendan Lucier, Jieming Mao, Vasilis Syrgkanis, and Christos Tzamos.
\newblock Combinatorial assortment optimization.
\newblock \emph{ACM Trans. Econ. Comput.}, 9\penalty0 (1), 2021.

\bibitem[Jin and Lu(2024)]{JL24}
Yaonan Jin and Pinyan Lu.
\newblock Benchmark-tight approximation ratio of simple mechanism for a unit-demand buyer.
\newblock In \emph{2024 IEEE 65th Annual Symposium on Foundations of Computer Science (FOCS)}, pages 1251--1259. IEEE Computer Society, 2024.

\bibitem[Khodabakhsh et~al.(2024)Khodabakhsh, Pountourakis, and Taggart]{KPT24}
Ali Khodabakhsh, Emmanouil Pountourakis, and Samuel Taggart.
\newblock Simple delegated choice.
\newblock In \emph{Proceedings of the 2024 Annual ACM-SIAM Symposium on Discrete Algorithms (SODA)}, pages 569--590, 2024.

\bibitem[Kleinberg and Kleinberg(2018)]{KK18}
Jon Kleinberg and Robert Kleinberg.
\newblock Delegated search approximates efficient search.
\newblock In \emph{Proceedings of the 2018 ACM Conference on Economics and Computation}, page 287–302. Association for Computing Machinery, 2018.

\bibitem[Kothari et~al.(2019)Kothari, Singla, Mohan, Schvartzman, and Weinberg]{KSMSW19}
Pravesh Kothari, Sahil Singla, Divyarthi Mohan, Ariel Schvartzman, and S.~Matthew Weinberg.
\newblock { Approximation Schemes for a Unit-Demand Buyer with Independent Items via Symmetries }.
\newblock In \emph{2019 IEEE 60th Annual Symposium on Foundations of Computer Science (FOCS)}, pages 220--232, 2019.

\bibitem[Krengel and Sucheston(1977)]{KS77}
Ulrich Krengel and Louis Sucheston.
\newblock Semiamarts and finite values.
\newblock \emph{Bulletin of the American Mathematical Society}, 83\penalty0 (4):\penalty0 745 -- 747, 1977.

\bibitem[Li et~al.(2015)Li, Rusmevichientong, and Topaloglu]{LRT15}
Guang Li, Paat Rusmevichientong, and Huseyin Topaloglu.
\newblock The d-level nested logit model: Assortment and price optimization problems.
\newblock \emph{Operations Research}, 63\penalty0 (2):\penalty0 325--342, 2015.

\bibitem[Li and Yao(2013)]{LY13}
Xinye Li and Andrew~Chi{-}Chih Yao.
\newblock On revenue maximization for selling multiple independently distributed items.
\newblock \emph{Proc. Natl. Acad. Sci. {USA}}, 110\penalty0 (28):\penalty0 11232--11237, 2013.

\bibitem[Mahdian et~al.(2023)Mahdian, Mao, and Wang]{MMW23}
Mohammad Mahdian, Jieming Mao, and Kangning Wang.
\newblock Regret minimization with noisy observations.
\newblock In \emph{Proceedings of the 24th ACM Conference on Economics and Computation}, EC '23, 2023.

\bibitem[Martimort and Semenov(2006)]{MS06}
David Martimort and Aggey Semenov.
\newblock Continuity in mechanism design without transfers.
\newblock \emph{Economics Letters}, 93\penalty0 (2):\penalty0 182--189, 2006.

\bibitem[Mehta et~al.(2020)Mehta, Nadav, Psomas, and Rubinstein]{MNPR20}
Aranyak Mehta, Uri Nadav, Alexandros Psomas, and Aviad Rubinstein.
\newblock Hitting the high notes: subset selection for maximizing expected order statistics.
\newblock In \emph{Proceedings of the 34th International Conference on Neural Information Processing Systems}, NIPS '20, 2020.

\bibitem[Melumad and Shibano(1991)]{MS91}
Nahum~D. Melumad and Toshiyuki Shibano.
\newblock Communication in settings with no. transfers.
\newblock \emph{RAND Journal of Economics}, 22\penalty0 (2):\penalty0 173--198, 1991.

\bibitem[Rieger and Segev(2024)]{RS24}
Alon Rieger and Danny Segev.
\newblock Quasi-polynomial time approximation schemes for assortment optimization under mallows-based rankings.
\newblock \emph{Math. Program.}, 208\penalty0 (1):\penalty0 111--171, 2024.

\bibitem[Rubinstein and Weinberg(2018)]{RW18}
Aviad Rubinstein and S.~Matthew Weinberg.
\newblock Simple mechanisms for a subadditive buyer and applications to revenue monotonicity.
\newblock \emph{ACM Trans. Econ. Comput.}, 6\penalty0 (3–4), 2018.

\bibitem[Rusmevichientong et~al.(2006)Rusmevichientong, Van~Roy, and Glynn]{RRG06}
Paat Rusmevichientong, Benjamin Van~Roy, and Peter~W Glynn.
\newblock A nonparametric approach to multiproduct pricing.
\newblock \emph{Operations Research}, 54\penalty0 (1):\penalty0 82--98, 2006.

\bibitem[Rusmevichientong et~al.(2014)Rusmevichientong, Shmoys, Tong, and Topaloglu]{RSTT14}
Paat Rusmevichientong, David Shmoys, Chaoxu Tong, and Huseyin Topaloglu.
\newblock Assortment optimization under the multinomial logit model with random choice parameters.
\newblock \emph{Production and Operations Management}, 23\penalty0 (11):\penalty0 2023--2039, 2014.

\bibitem[Samuel-Cahn(1984)]{S84}
Ester Samuel-Cahn.
\newblock Comparison of threshold stop rules and maximum for independent random variables.
\newblock \emph{Annals of Probability}, 12\penalty0 (4):\penalty0 1212--1216, 1984.

\bibitem[Segev and Singla(2021)]{SS21}
Danny Segev and Sahil Singla.
\newblock Efficient approximation schemes for stochastic probing and prophet problems.
\newblock In \emph{Proceedings of the 22nd ACM Conference on Economics and Computation}, EC '21, page 793–794, 2021.

\bibitem[Shin et~al.(2023)Shin, Rezaei, and Hajiaghayi]{SRH23}
Suho Shin, Keivan Rezaei, and MohammadTaghi Hajiaghayi.
\newblock Delegating to multiple agents.
\newblock In \emph{Proceedings of the 24th {ACM} Conference on Economics and Computation, {EC} 2023, London, United Kingdom, July 9-12, 2023}, pages 1081--1126. {ACM}, 2023.

\bibitem[Talluri and van Ryzin(2004)]{TR04}
Kalyan Talluri and Garrett van Ryzin.
\newblock Revenue management under a general discrete choice model of consumer behavior.
\newblock \emph{Management Science}, 50\penalty0 (1):\penalty0 15--33, 2004.

\bibitem[Weitzman(1979)]{W79}
Martin~L. Weitzman.
\newblock Optimal search for the best alternative.
\newblock \emph{Econometrica}, 47\penalty0 (3):\penalty0 641--654, 1979.

\bibitem[Yao(2015)]{Yao15}
Andrew Chi-Chih Yao.
\newblock An n-to-1 bidder reduction for multi-item auctions and its applications.
\newblock In \emph{Proceedings of the Twenty-Sixth Annual ACM-SIAM Symposium on Discrete Algorithms}, SODA '15, page 92–109, 2015.

\end{thebibliography}

\newpage
\appendix
\section*{Appendix}
\section{Assortment Optimization is NP-hard}
We follow the same strategy as \citet{KPT24} and reduce from the \textsc{Integer Partition} problem.
\textsc{Integer Partition} takes a set of integers $\{c_1, \ldots, c_n\}$ and seeks a subset $S\subseteq [n]$ such that $\sum_{i\in S}c_i = \tfrac{1}{2}\sum_{i = 1}^n c_i$. 
We denote the sum of all integers by $C = \sum_i^n c_i$. 

We follow closely the reduction of \citet{KPT24}, and rather than formally replicating their proof, we instead provide a reduction to assortment optimization which has identical utilities for a given solution to the \citet{KPT24} reduction, and then refer the reader to their analysis of the optimal solution of the delegation reduction. 

We construct an instance of assortment optimization with $n+1$ items and an outside option as follows:

\begin{itemize}
    \item For item 0 (the outside option), we set $p_0 = 0$ and $v_0 = 0$
    \item For items $1, \ldots, n$, we set $p_i = 1$ and $v_i = \begin{cases}
        2 + \delta, &\text{w.p.} \frac{c_i}{M} - \frac{c_i^2}{2M^4(1-\frac{C}{2M})}\\
        2 - \delta, &\text{w.p.} \frac{c_i}{M}\\
        0, &\text{otherwise}
    \end{cases}$
    \item For items $n+1$, we set $p_{n+1} = M^2(1-\frac{C}{2M})+1+\delta$ and $v_{n+1} = \begin{cases}
        M^2(1-\frac{C}{2M})+2+\delta, &\text{w.p.} \frac{1}{2}\\
        0, &\text{w.p.} \frac{1}{2}
    \end{cases}$
\end{itemize}
Here, $M$ is a large constant picked to satisfy the constraints imposed in the proof of Theorem 2 of \citet{KPT24} ($M \geq 128n^3c_{max}^3$ suffices, where $c_{max} = \max_i c_i$).
Lastly, $\delta$ is a tiny constant used for tie-breaking and thus excluded from the principal utility computation.

We argue that this assortment optimization instance is structurally equivalent to the instance used by \citet{KPT24} to prove that the delegated choice problem is NP-hard:
\begin{itemize}
    \item For actions $1, \ldots, n$, we set $b_i = M^2(1-\frac{C}{2M})$ and $v_i = \begin{cases}
        1 + 2\delta, &\text{w.p.} \frac{c_i}{M} - \frac{c_i^2}{2M^4(1-\frac{C}{2M})}\\
        1, &\text{w.p.} \frac{c_i}{M}\\
        0, &\text{otherwise}
    \end{cases}$
    \item For action $n+1$, we set $b_{n+1} = 0$ and $v_{n+1} = \begin{cases}
        M^2(1-\frac{C}{2M})+1+\delta, &\text{w.p.} \frac{1}{2}\\
        0, &\text{w.p.} \frac{1}{2}
    \end{cases}$
\end{itemize}

Consider any subset of actions $S \subseteq \set{1, \ldots, n}$ in the delegation instance.
We claim that the distribution of the principal's utility in the delegation instance induced by $S$ is identical to that in the assortment optimization instance induced by item set $S \cup \set{0}$.
To see this, first observe that each action $i \in \set{1, \ldots, n+1}$ has the same probability distribution as the corresponding item in the assortment optimization instance.
Second, observe that for all realizations, the principal's utility is pairwise identical for both instances: if one of the first $n$ actions/items realizes to a positive value, the principal gets utility 1 in both instances; if the $(n+1)$-th item/action realizes to the positive value, the principal gets utility $M^2(1-\frac{C}{2M})+1+\delta$ in both instances; if all actions realize to 0, then the delegation instance will provide the principal a utility of 0, no matter what action is selected by the agent, while in the assortment optimization instance, the agent will select the outside option which provides the principal a utility of 0 as well.

Then, the NP-hardness proof for delegation (Theorem 2 of \citet{KPT24}) can be directly translated to the assortment optimization problem, completing the reduction.\label{sec:assortment-hard}\label{sec:appendix}

\end{document}